\newcommand*{\E}{\mathop{{}\mathbb{E}}}
\newcommand*{\R}{\mathop{{}\mathbb{R}}}
\theoremstyle{dgthm}
\newtheorem{theorem}{Theorem}
\newcommand{\xfnm}[1][]{\ifx!#1!\else\unskip,\space#1\fi}
\newcommand{\indep}{\perp \!\!\! \perp}
\newtheorem{assumption}{Assumption}
\title{A template for the \emph{arxiv} style}
\author{{\hspace{1mm}Shinpei Nakamura-Sakai}\thanks{Corresponding author} \\
	Department of Statistics and Data Science\\
	Yale University\\
	New Haven, CT \\
	\texttt{s.nakamura.sakai@yale.edu} \\
	\And
	{Laura Forastiere} \\
	Department of Biostatistics\\
	Yale University\\
	New Haven, CT \\
        \texttt{laura.forastiere@yale.edu} \\
	\\
    \And
	{Brian Macdonald} \\
	Department of Statistics and Data Science\\
	Yale University\\
	New Haven, CT \\
        \texttt{brian.macdonald@yale.edu} \\
	\\
}
\title{Estimating the age-conditioned average treatment effects curves: An application for assessing load-management strategies in the NBA}
\begin{document}
\maketitle

\begin{abstract}
	In the realm of competitive sports, understanding the performance dynamics of athletes, represented by the age curve (showing progression, peak, and decline), is vital. Our research introduces a novel framework for quantifying age-specific treatment effects, enhancing the granularity of performance trajectory analysis. Firstly, we propose a methodology for estimating the age curve using game-level data, diverging from traditional season-level data approaches, and tackling its inherent complexities with a meta-learner framework that leverages advanced machine learning models. This approach uncovers intricate non-linear patterns missed by existing methods. Secondly, our framework enables the identification of causal effects, allowing for a detailed examination of age curves under various conditions. By defining the Age-Conditioned Treatment Effect (ACTE), we facilitate the exploration of causal relationships regarding treatment impacts at specific ages. Finally, applying this methodology to study the effects of rest days on performance metrics, particularly across different ages, offers valuable insights into load management strategies' effectiveness. Our findings underscore the importance of tailored rest periods, highlighting their positive impact on athlete performance and suggesting a reevaluation of current management practices for optimizing athlete performance.
\end{abstract}

\keywords{Causal Inference \and Age curve \and Meta-learners \and Load-management}

\section{Introduction} 
Athletic performance over an individual's career is not a constant; it is a dynamic interplay of improvement, peak performance, and eventual decline. This performance trajectory, often visualized as an "age curve," encapsulates the biological and career lifespan of athletes across various sports disciplines. While it is widely recognized that these curves vary due to a myriad of factors unique to each athlete, the scientific inquiry into the heterogeneity of treatment effects across different ages remains underexplored. The nuanced understanding of how specific interventions or treatments impact athletes at various stages of their careers can be instrumental in enhancing their performance and extending their peak years in the sport.

\cite{Schuckers2023}' review  focuses on regression-based models for age curve analysis, specifically highlighting the predominance of fixed effects models that incorporate quadratic and cubic age terms \cite{Brander2014EffectsOfAge}, and Lichtman's delta method for age effect analysis \cite{Lichtman2009BaseballAging}. Importantly, the review notes that all discussed models utilize season-level data. It also points to advancements through semiparametric methods such as spline techniques and generalized additive models \cite{WakimJin2014} \cite{Judge2020DeltaMethod} \cite{Turtoro2019FlexibleAging}, which offer more nuanced analytical capabilities.

The emphasis on season-level data in these models is critical; however, transitioning to game-level data could significantly refine the analysis by incorporating detailed variables like opponent strength, home advantage, and team dynamics. While offering a richer dataset, this shift introduces complexity due to the intricate interactions between these variables. Traditional models, designed around season-level data using regression-based models, face challenges in accurately capturing these dynamics, highlighting the need for a new analytical framework. Such a framework would efficiently handle non-linear interactions between game-level variables, representing a significant methodological advancement in age curve analysis.

Integrating Conditional Average Treatment Effect (CATE) or Heterogeneous Treatment Effect (HTE) estimation into the age-curve literature could significantly enhance the precision of understanding how various treatments impact athletes at different ages. HTE estimation is a statistical approach that identifies subgroups of individuals who may experience varying responses to a given treatment \cite{athey2016recursive}. By applying HTE methods to the age-curve literature, researchers can move beyond general trends to discover how the age of an athlete interacts with the effectiveness of interventions. Meta-learning techniques, such as those outlined in  \cite{kunzel_metalearners_2019}, can improve the power and flexibility of HTE estimation in this context. For example, HTE estimation could reveal varying benefits of injury prevention strategies across different age groups, if particular nutritional adjustments yield greater performance benefits for certain age groups, how mental health interventions can be tailored to age-specific needs, or if recovery techniques like cryotherapy demonstrate differential effects depending on an athlete's age. This precise analysis enables better resource allocation, targeting interventions where they'll be most beneficial to improve performance and health outcomes cost-effectively.  Beyond refining current age-curve analysis, it paves the way for personalized athlete care strategies.

In the past, athletic management strategies regarding rest periods and "load management" have largely relied on observed differences in how athletes of various ages respond to training and recovery. However, this field lacks rigorous quantitative analysis.  As professional sports become increasingly competitive, there's a critical need to develop data-driven, individualized strategies. Such strategies would integrate an athlete's unique profile to optimize performance outcomes while safeguarding their long-term health and well-being. Recent research emphasizes a shift away from relying solely on the concept of workload as a predictor of injury risk and performance. Instead, there's an increasing focus on understanding the complex relationship between an athlete's training load, internal responses, and external factors \cite{Gabbett2016} \cite{Drew2016}. This highlights the need for individualized load management strategies that continuously monitor a variety of physiological and subjective markers. The aim is to balance performance goals with injury prevention by acutely measuring how athletes adapt to and recover from training stress. This data-driven approach promises to enhance decision-making for optimizing training schedules, and recovery protocols, and mitigating athlete injury risk.

In response to those needs, our work introduces a novel analytical framework aimed at estimating age-specific treatment effects to construct a personalized age curve for treatment impact. This framework is particularly focused on decision-making processes regarding athlete rest and load management, a concept at the forefront of sports science and athlete longevity. The contributions of this study are threefold. Initially, we unveil a method for estimating the age curve utilizing game-level data, a departure from conventional approaches that depend on aggregated season-level data. This shift to game-level data introduces complexities due to a larger array of variables needing management. We address this by deploying a meta-learner framework that allows for the application of sophisticated machine learning models, adept at uncovering non-linear patterns that elude current leading age-curve estimation techniques. Second, our framework offers a means to discern causally identifiable treatment effects, facilitating a nuanced comparison of age curves across various covariates and interventions. By establishing the Age-Conditioned Treatment Effect (ACTE) for any given intervention, our methodology supports the exploration of causal hypotheses related to treatment impacts and outcomes for specific ages. Lastly, we apply our methodology to investigate the influence of rest days between games on key performance metrics, with a focus on athletes' age. This analysis provides insights into the effectiveness of load management strategies, highlighting the performance benefits when athletes receive adequate rest. Through these contributions, our study not only enhances our understanding of age-related performance dynamics but also offers practical guidance for optimizing athlete management strategies.

This introduction paves the way for the forthcoming sections, setting the stage for a detailed exploration of our methodology in Section \ref{sec:2}, a simulation designed to offer an in-depth understanding of the framework in Section \ref{sec:3}, an examination of the outcomes and implications of our findings in Section \ref{sec:4}, and we will conclude with a discussion in Section \ref{sec:5}. Our study proposes a significant shift towards individualized athlete management strategies, moving away from traditional, one-size-fits-all approaches. This potential shift aims to redefine how athletic performance and longevity are optimized, marking a pivotal change in the field.

\section{Methodology}
\label{sec:2}
\subsection{Framework}

Let us consider a sample of N units, represented by the indicator $i\in \{1,...,N\}$. For each unit, we observe the following variables: $A\in \mathbb{Z}$, representing unit i's age, $W\in \{0,1\}$, denoting a binary treatment of interest, a vector of covariates ${X}_i$, and the outcome of interest denoted by $Y_i^{obs}$.
Under Rubin's potential outcome framework \cite{rubin_estimating_1974}, we make the Stable Unit Treatment Value Assumption (SUTVA; see appendix \ref{app:A}) and we denote by $Y(w)$ the potential outcome of unit i under treatment $w$.  We assume that for each unit i the tuple $(Y_i(0), Y_i(1), A_i, W_i, X_i)$ is independently drawn from a distribution $\mathcal{P}$, such that $(Y_i(0), Y_i(1), A_i, W_i, X_i)\sim \mathcal{P}$.




For each potential outcome, we define the  Conditional Expectation Function (CFE), denoted by $\mu_w(a,x)$, as the expectation of the potential outcome as a function of age and covariates, that is:
    \begin{equation}
\mu_0(a, x):=\E[Y_i(0)|A_i=a, X_i=x] \text{ and } \mu_1(a,x):=\E[Y_i(1)|A_i=a, X_i=x]
\end{equation}.

Under this framework, our main causal estimand of interest is the  Age-Conditioned Treatment Effect (ACTE), that is:
\begin{equation}
\begin{aligned}
\tau(a)&:=\E[Y_i(1) - Y_i(0)|A=a]\\&=\mu_1(a)-\mu_0(a)=\textstyle\E_\mathcal{X}[\mu_1(a,x)-\mu_0(a,x)]
\end{aligned}
\end{equation}
where $\mu_w(a)=\textstyle\E_\mathcal{X}[\mu_1(a,x)]$ is referred to as the Age-Conditioned Expectation Function (ACEF).
Following \cite{Schuckers2023}, we can decompose the potential outcome as follows:
\begin{equation}\label{eq:outcome}
    Y_i(w)=g(a,w)+f_i(x, a,w) + \epsilon_i
\end{equation}
where $g(a,w)$ is the average performance at age $a$ for all players with treatment $w$, $f(x,a,w)$ represents a performance adjustment at age $a$ for player with covariates $x$, and treatment $w$, and $\epsilon$ is the mean zero model error. Note that under this decomposition, $g$ is the representation of all players, and hence $f$ represents the variation for all players at age $a$. Namely, $\E_{\mathcal{X}}[f(x,a,w)|A=a]=0$

Note that using the decomposition in Equation \ref{eq:outcome}, we have
\begin{align*}
    \tau(a)=\E[Y_i(1) - Y_i(0)|A=a]&=g(a,1)-g(a,0)+\textstyle\E_\mathcal{X}[f(x,a,1)-f(x,a,0)|A=a]\\
    &=g(a,1)-g(a,0)
\end{align*}

In order to properly identify this estimand we need the following assumptions:

\begin{assumption}[Unconfoundedness]\label{ass:uncounfoundedness}
There are no unmeasured factors affecting the probability of treatment allocation and the outcome. Formally,
        $$(Y_i(0), Y_i(1))\indep W_i|X_i$$
\end{assumption}

Under these assumptions, we can identify our estimand of interest using the following theorem.
\begin{theorem}[Identification of ACTE]\label{them:IdentifiACTE}
    Under SUTVA and 
    Assumption \ref{ass:uncounfoundedness} we have
    $$\tau(a)=\textstyle\E_{\mathcal{X}}[\E[Y_i^{obs}|W_i=1, A_i=a, X_i=x]] - \textstyle\E_{\mathcal{X}}[\E[Y_i^{obs}|W_i=0, A_i=a, X_i=x]]$$
\end{theorem}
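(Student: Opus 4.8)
The plan is to prove the two-term identity arm by arm, converting each potential-outcome expectation into an observational one via the standard sequence of three moves: iterate, de-confound, and apply consistency. I would first write $\tau(a) = \E[Y_i(1)\mid A=a] - \E[Y_i(0)\mid A=a]$ and observe that it suffices to establish, for each $w\in\{0,1\}$, the single-arm identity $\E[Y_i(w)\mid A=a] = \E_{\mathcal{X}}[\E[Y_i^{obs}\mid W_i=w, A_i=a, X_i=x]]$; subtracting the $w=1$ and $w=0$ versions then yields the stated formula immediately.

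Next I would apply the law of iterated expectations, conditioning on the covariate vector $X_i$ inside the stratum $A_i=a$. Reading $\E_{\mathcal{X}}$ as integration against the conditional law of $X$ given $A=a$, this gives $\E[Y_i(w)\mid A=a] = \E_{\mathcal{X}}[\E[Y_i(w)\mid A=a, X=x]] = \E_{\mathcal{X}}[\mu_w(a,x)]$, which is precisely the ACEF representation already recorded in the definition of $\tau(a)$. This step is purely a measure-theoretic tautology and carries no causal content.

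The decisive step invokes unconfoundedness to slip the treatment indicator into the inner conditioning set, $\E[Y_i(w)\mid A=a, X=x] = \E[Y_i(w)\mid W_i=w, A_i=a, X=x]$, and then appeals to SUTVA/consistency: on the event $\{W_i=w\}$ one has $Y_i^{obs}=Y_i(w)$, so the inner expectation becomes $\E[Y_i^{obs}\mid W_i=w, A_i=a, X=x]$. Reassembling the two arms produces the theorem.

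The step I expect to be the main obstacle is the de-confounding move, for a bookkeeping reason rather than a deep one. Assumption \ref{ass:uncounfoundedness} is stated as $(Y_i(0), Y_i(1)) \indep W_i \mid X_i$, conditional on $X$ alone, whereas the identification formula conditions jointly on $(A,X)$; to justify the substitution rigorously one needs conditional independence given the pair $(A_i, X_i)$. The cleanest resolution is to treat age $A$ as a pre-treatment covariate belonging to the adjustment set—effectively reading the assumption with $(A,X)$ in place of $X$, or noting that if $A$ is a component of $X_i$ then conditioning on $X_i$ already fixes $A_i$ and no strengthening is required. I would make this reading explicit before performing the substitution. I would also flag the implicit overlap requirement $0 < \P(W_i=w\mid A_i=a, X_i=x) < 1$, which is needed for the observational conditional expectations on the right-hand side to be well defined even though it is not listed among the hypotheses; once these two conventions are stated, the remainder is a mechanical application of iterated expectations and consistency.
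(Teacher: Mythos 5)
Your proof is correct and is essentially the paper's own argument run in the opposite direction: the appendix proof starts from the observational expression and applies consistency/no-interference, then unconfoundedness, then iterated expectations to reach $\tau(a)$, which is exactly your three moves in reverse order. Your flagged concern about the conditioning set --- that Assumption \ref{ass:uncounfoundedness} is stated given $X_i$ alone while the de-confounding step needs independence given $(A_i, X_i)$ --- is a fair catch, since the paper's proof performs that substitution without comment; your proposed reading (age as part of the adjustment set), together with the implicit overlap requirement you note, is the right way to make the argument airtight.
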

\begin{proof}
    See appendix \ref{app:A}
\end{proof}

\subsection{Estimation of $\tau(a)$}
To estimate $\tau(a)$, we implement a causal machine learning method, particularly meta-learners, as introduced by \cite{kunzel_metalearners_2019}. This approach is well-suited for our observational study, especially considering the complexities in estimating $g(a,x)$. Meta-learners offer notable flexibility, allowing for the integration of various predictive models to suit different goals. \cite{kunzel_metalearners_2019} proposes three distinct types of meta-learners: S-learners, T-learners, and X-learners, each designed for specific causal inference challenges.

The \textbf{S-learner} is termed so due to its use of a 'single' estimator. This learner includes the treatment as one of the independent variables in the outcome model, i.e., 
$$\mu^{obs}(a, x, w):=\E[Y_i^{obs}|A=a, W=w, X=x],$$ 
which is estimated by a base learner,  which could be any supervised learning or regression estimator. 
%
%
Consequently, 
the ACEF is given by $$\hat{\mu}_w(a)= \textstyle\E_{\mathcal{X}}[\hat{\mu}^{obs}(a,x,w)],$$ 
and the ACTE is then
$$\hat{\tau}(a)=\hat{\mu}_1(a)-\hat{\mu}_0(a).$$
The pseudocode of S-learner is provided in Appendix \ref{app:B}

The \textbf{T-learner} uses 'two' separate learners to estimate $\tau$. It independently learns $\hat{\mu}_0(a,x)$ and $\hat{\mu}_1(a,x)$ for the treatment and control groups, respectively, following the setup: 

$$\mu^{obs}_0(a,x):=\E[Y_i^{obs}|A_i=a, W_i=0, X_i=x] \text{ and }\mu_1^{obs}(a,x):=\E[Y_i^{obs}|A_i=a, W_i=1, X_i=x]$$

The first expression, the control response function, is estimated by any base learner that could be any model of choice using the observation in the control group, namely $\{(A_i,X_i,Y_i)\}_{\{i:W_i=0\}}$ and we denote the estimated function as $\hat{\mu}_0^{obs}(a,x)$. Similarly, $\mu_1^{obs}(a,x)$ is learned using $\{(A_i,X_i,Y_i)\}_{\{i:W_i=1\}}$ with a potentially different base learner of choice and denoting the estimator by $\hat{\mu}^{obs}_1(x,a)$. 

Then, 
the ACEF is given by $$\hat{\mu}_w(a)= \textstyle\E_{\mathcal{X}}[\hat{\mu}_w^{obs}(a,x)],$$ 
and the ACTE is given by
$\hat{\tau}(a)=\hat{\mu}_1(a)-\hat{\mu}_0(a)$ as before.
The pseudocode of the T-learner is provided in Appendix \ref{app:B}.

\textbf{X-learner} involves a two-step process. Initially, it estimates $\hat{\mu}_1^{obs}(a, x)$ and $\hat{\mu}_0^{obs}(a, x)$ akin to the T-learner either using the same of different base-learners. In the second step, using the model trained in step one, it 'imputes' the treatment effect for the treated units, denoted by $D_i^1$, using $\hat{\mu}_0^{obs}(a, x)$, that is,
$$
\tilde{D}^1_i=Y^{obs}_i-\hat{\mu}_0^{obs}(A_i,X_i) \qquad \forall i: W_i=1,
$$
and it 'imputes' the treatment effect for the control units, denoted by $D_i^0$, using $\hat{\mu}_1^{obs}(a, x)$, that is,
$$\tilde{D}^0_i=\hat{\mu}^{obs}_1(A_i,X_i)-Y^{obs}_i \qquad \forall i: W_i=0.$$


Then we could use any predictive model including regression or supervised machine learning models to estimate $$\tau^0(a) := \textstyle\E_{\mathcal{X}}[\E[\tilde{D}_i^1|A_i=a, X_i=x]] \text{ and } \tau^1(a) := \textstyle\E_{\mathcal{X}}[\E[\tilde{D}_i^0|A_i=a, X_i=x]],$$
resulting in $\hat{\tau}^0(a)$ and $\hat{\tau}^1(a)$. 
The final conditional average treatment effect for each age is estimated by $$\hat{\tau}(a)=e(a)\hat{\tau}_0(a)+(1-e(a))\hat{\tau}_1(a),$$ with $e(a)\in [0,1]$ typically being the propensity score, i.e., $e(a):=Pr(W_i=1|A_i=a)$. Given the complex nature of the X-learner, it does not directly yield the Conditional Expectation Function (CEF). However, by analyzing the modified treatment variable $\hat{\tau}^w(a)$, we can discern the differences between the treated and control groups. This approach allows for a deeper understanding of the treatment effects within these distinct groups. The pseudocode of the X-learner is provided in Appendix \ref{app:B}.

Choosing between S, T, and X learners hinges on the data structure and the specific objectives of causal inference analysis. The S-learner is most effective when the dataset is relatively small or when the treatment effect is hypothesized to be uniform across observations, making it a streamlined choice for simpler causal relationships. In contrast, the T-learner excels with larger datasets that can support separate models for treated and control groups, offering a more detailed comparison of treatment effects and is preferable when the data allows for a clear segregation of treatment groups. The X-learner, however, stands out in scenarios where treatment effects are expected to be highly heterogeneous or when there is an imbalance in the number of observations between treatment groups. It is specifically tailored for complex causal inquiries where the treatment effect varies significantly across individuals or subgroups. Thus, the decision to use a specific learner model should be informed by the dataset's characteristics, including size, balance, and the anticipated variability in treatment effects. This decision-making process is underscored by a simulation study we conducted to assess the comparative effectiveness of these models in estimating ACTE.

\subsection{Base Learners}
The selection of a base learner offers considerable flexibility, allowing for the use of a simple regression algorithm or the adoption of more sophisticated supervised machine learning models. This versatility ensures that the chosen learner can be tailored to the specific needs and complexities of the data at hand.
Leveraging meta-learners' versatility, we employ here two learners. The primary model, drawn from \cite{Schuckers2023}, is the spline with a fixed effect model which achieved minimal average RMSE in their simulation study. Their study evaluates estimated curves using different regression-based models by comparing their average Root Mean Squared Error (RMSE) across different ages.

To enhance the predictability of our models, we also integrate more sophisticated methodologies, namely tree-based approaches. These,  particularly for tabular data, have demonstrated superior predictive capabilities over deep neural networks, as evidenced by \cite{grinsztajn_why_2022}, \cite{borisov_deep_2022}, and \cite{shwartz-ziv_tabular_2021}. Among these, we specifically employ the \texttt{Rforesty} package \cite{kunzel_linear_2021} to implement a fast and honest random forest model. The concept of 'honesty' in this context signifies the utilization of distinct data subsets for model training and prediction phases, thereby effectively minimizing the overfitting issue commonly associated with standard random forest approaches. This methodology ensures that our model's predictions are not only swift but also reliably accurate, setting a new standard for predictive analysis in our field.

The choice between the spline and random forest models is context-dependent. While the spline model is simpler and easier to interpret, the random forest boasts higher predictive power. Selecting the appropriate model hinges on the specific data and problem at hand.

\section{Simulation Study}
\label{sec:3}
\subsection{Simulation Design}
In causal inference, simulation studies are crucial because only one of the potential outcomes, either $Y(0)$ or $Y(1)$, can be observed. To discern the most suitable meta-learner and estimation function for $\mu$, we have designed three simulation scenarios that increase in complexity. The initial simulation assumes a constant treatment effect, with the treatment and control groups exhibiting the same quadratic shape. In the second simulation, the treatment effect grows with age, introducing different shapes for the treatment and control groups. The third and most complex simulation models a sophisticated treatment effect, resulting in distinct shapes for each group.

Building on the ideas shared in \cite{Schuckers2023}, our method sets itself apart by focusing on the causal effect, $\tau(a)$, rather than just trying to map out the age curve. This shift in focus allows us to explore the underlying causal relationships. We simulate outcomes from

$$Y_i(w)=g(a,w)+f_i(x,a,w)+\epsilon_i$$ and $$g(a,w)=\omega+\beta_1(a-a_{max})^2+\beta_2(a-a_{max})^2\mathbbm{1}(a>a_{max})+\beta_3(a-a_{max})^3\mathbbm{1}(y>t_{max})+\tau(a)w$$. 

where $i=1,...N$ and $a=\{a_1,...a_{max}\}$
and for each scenario, we vary $\tau(a)$ and $f(x,a,w)$ as follows

    \begin{enumerate}
        \item \textbf{Simulation 1 (Constant Treatment Effect)}: $$\tau(a)=2$$ 
        $$f(x,a,w)=\gamma_{i}+b_i(a-a_{max})^2\mathbbm{1}(a>a_{max})$$

        \item \textbf{Simulation 2 (Linear Treatment Effect)}: $$\tau(a) = 0.1(a-a_{min})$$ 
        $$f(x,a,w)=\gamma_{i}+b_i(a-a_{max})^2\mathbbm{1}(a>a_{max})$$

        \item \textbf{Simulation 3 (Non-linear Treatment Effect)}: $$\tau(a)=2(a - 16) + 0.0005*\mathbbm{1}(a > 20)*(a - a_{max})^3 - 0.0005*\mathbbm{1}(a > a_{max})*(a - a_{max})^4$$ 
        $$f(x,a,w)=\gamma_{i}+b_i(a-a_{max})^2\mathbbm{1}(a>a_{max})+(2+3w)x_i$$
    \end{enumerate}

In the simulation, individual random effects $\gamma_i$ and slope coefficients $b_i$ are normally distributed with mean zero and standard deviations $\sigma_{\gamma}$ and $\sigma_{\beta}$, respectively. The quadratic and cubic terms are assigned specific values: $\beta_1=-\frac{1}{9}$, $\beta_2=-\frac{6}{1000}$, and $\beta_3=\frac{45}{10000}$, reflecting the empirical shape of the age-performance relationship observed in hockey data, as discussed in \cite{Schuckers2023}. The standard deviations $\sigma_{\beta}=0.02$, $\sigma_{\epsilon}=1$, and $\sigma_{\gamma}=0.4$ are set to capture the variability inherent in the data. Treatment assignment is modeled as a Bernoulli distributed variable with a probability of $0.151$, representing the proportion of back-to-back games played across ten seasons of our cleaned data (2011-2021). Additionally, $x_i$ in scenario 3 is a covariate from a uniform distribution on the interval $(-1,1)$

\begin{figure}
    \centering
    \includegraphics[scale=0.32]{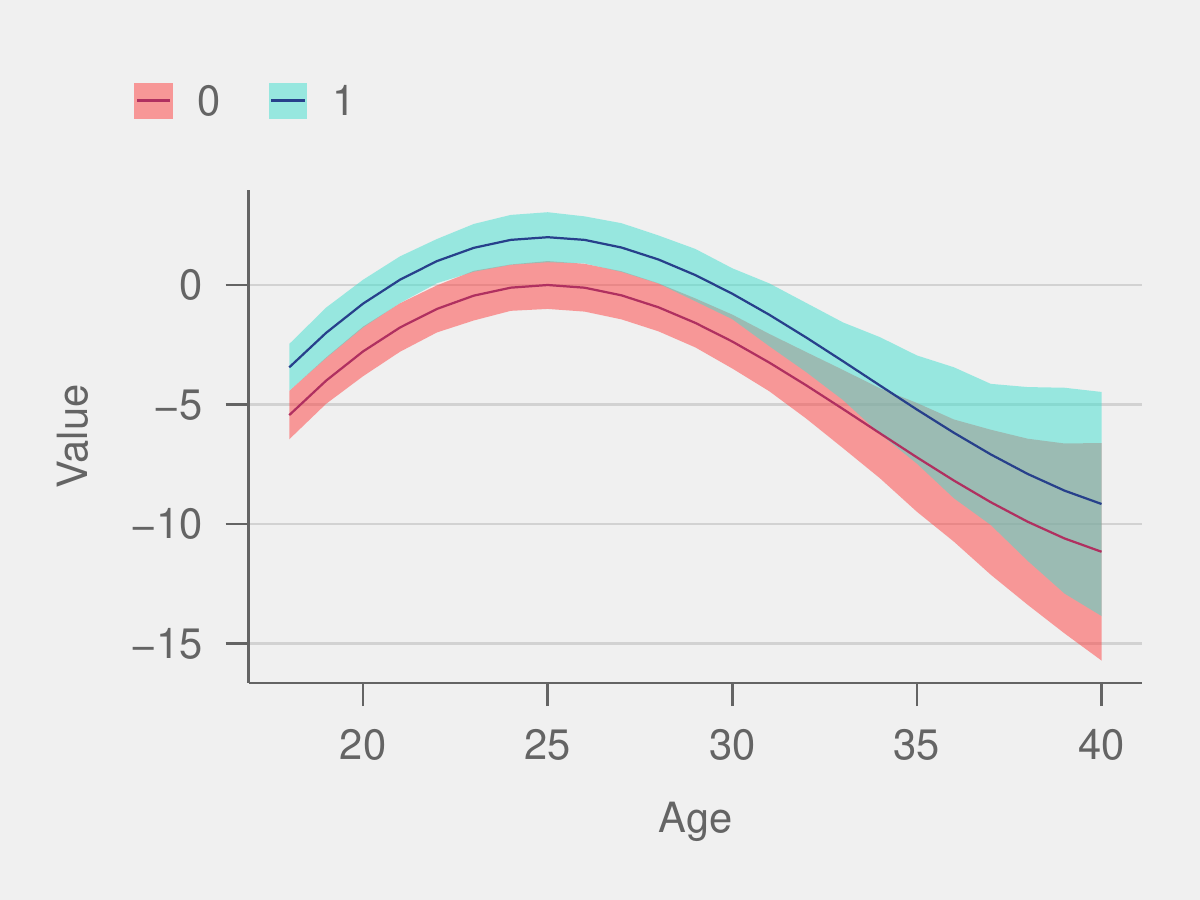}
    \includegraphics[scale=0.32]{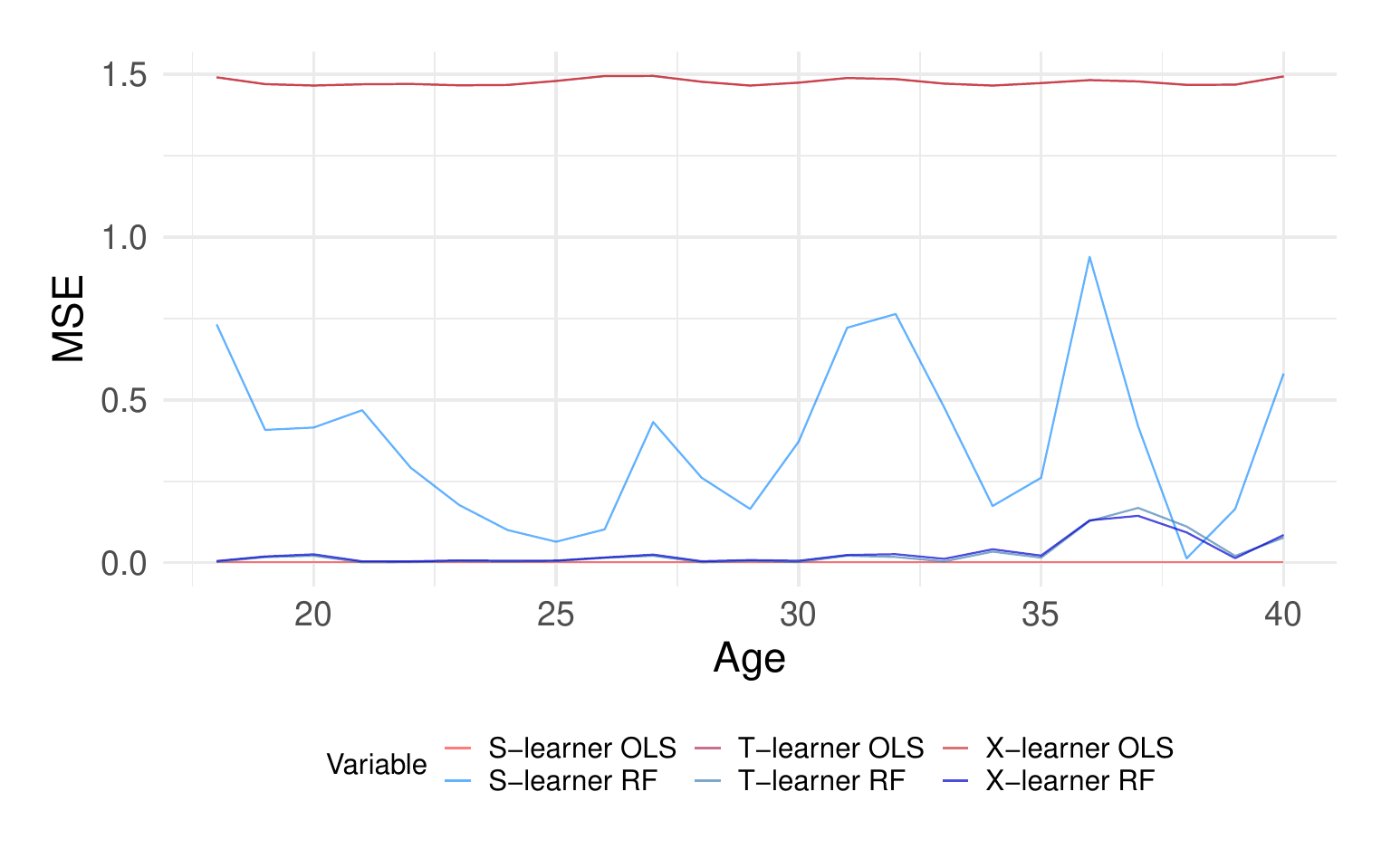}
    \caption{Simulation Scenario 1: Analysis of Constant Treatment Effect - Potential Outcome Functions for Control and Treated Groups (Left), MSE in ACTE Estimation (Right).}
    \label{fig:sim1}
\end{figure}

\subsection{Simulation Results}
We compute the $\tau(a)$ for a total of six combinations. Three from meta-learning algorithm, S, T, and X learner and two models ols-spline and random forest. Then we calculated the MSE to assess the performance of each combination.

Figures \ref{fig:sim1}, \ref{fig:sim2}, \ref{fig:sim3} present the MSE by age, highlighting that in Simulation 1, S-learner with OLS yields a minimal error, aligning with model assumptions due to both groups' constant shape derived from a spline family. In Simulation 2, the T-learner with RF emerges as the top performer, capturing the distinct age-curve shapes. For Simulation 3, the intricate X-learner with RF attains the lowest MSE. Notably, S-learner with OLS is less effective for younger and older ages, as it assumes the same shape for controlled and treated age curves. While S-learner with RF improves upon this by differentiating treatment in its tree splits especially for younger players, it still falls short in encapsulating the full complexity of the group shapes.

In Table \ref{tab:simulatio_result}, it is evident that the optimal method varies by scenario. For complex treatment effects, the X-learner paired with a random forest is superior, whereas, for simpler effects with parallel trends in treatment and control groups, the S-learner with ols-spline is preferable. Researchers must establish well-informed assumptions and choose their models judiciously. In practice, due to the fundamental problem of causal inference, the unobservability of counterfactuals, actual MSE cannot be computed.

\begin{figure}
    \centering
    \includegraphics[scale=0.32]{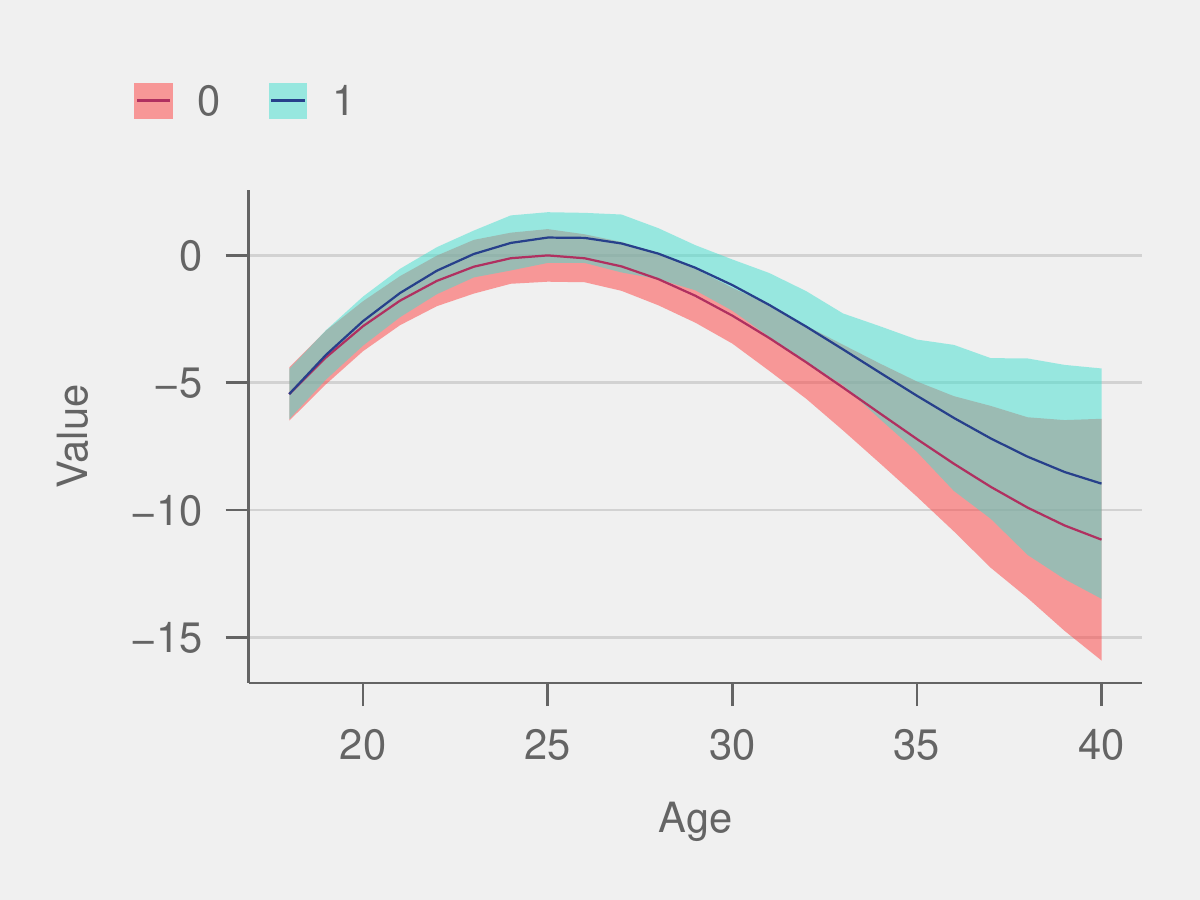}
    \includegraphics[scale=0.32]{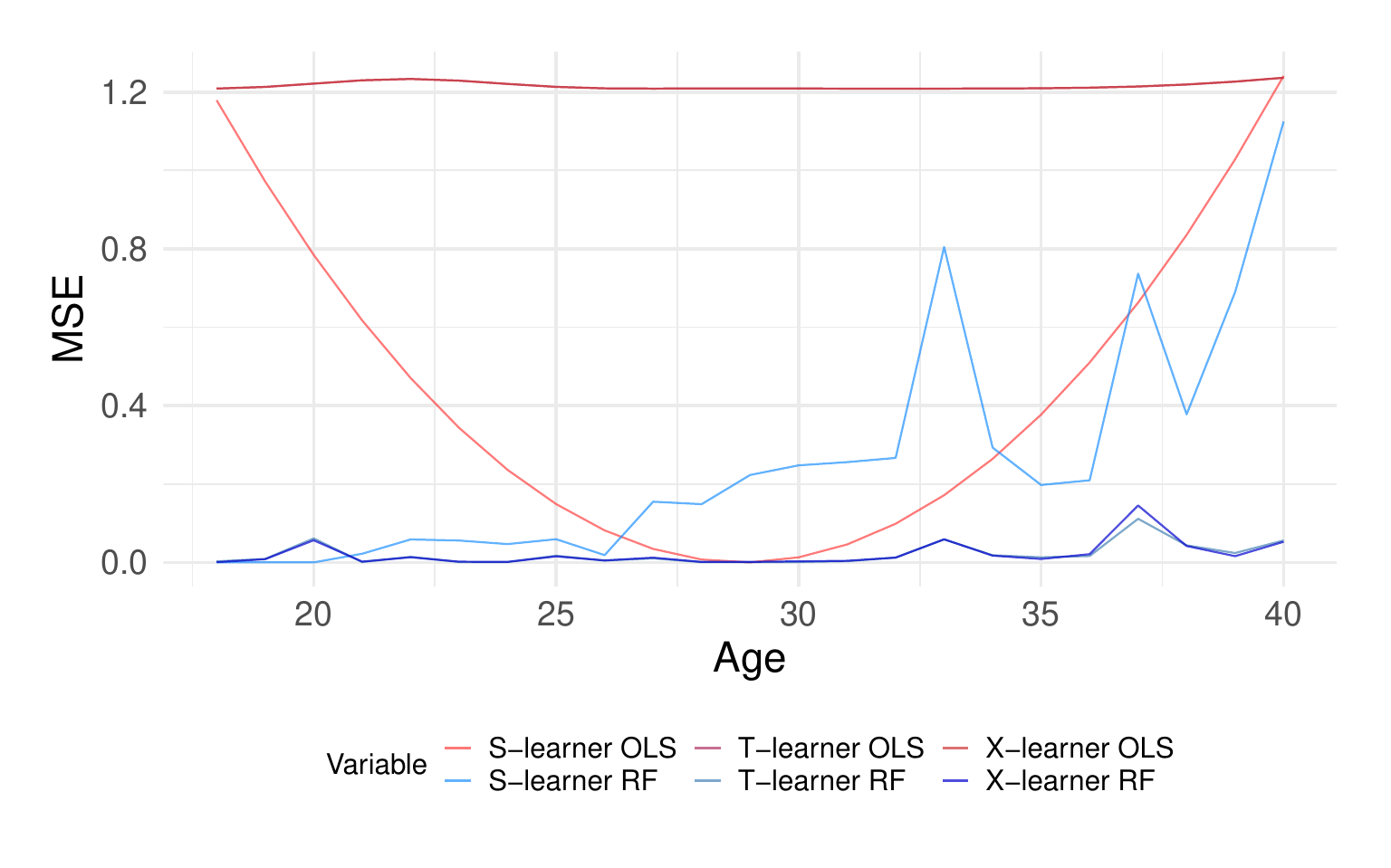}
    \caption{Simulation Scenario 2: Linear Treatment Effect - Potential Outcome Functions for Control and Treated Groups (Left), MSE in ACTE Estimation (Right).}
    \label{fig:sim2}
\end{figure}

\begin{figure}
    \centering
    \includegraphics[scale=0.32]{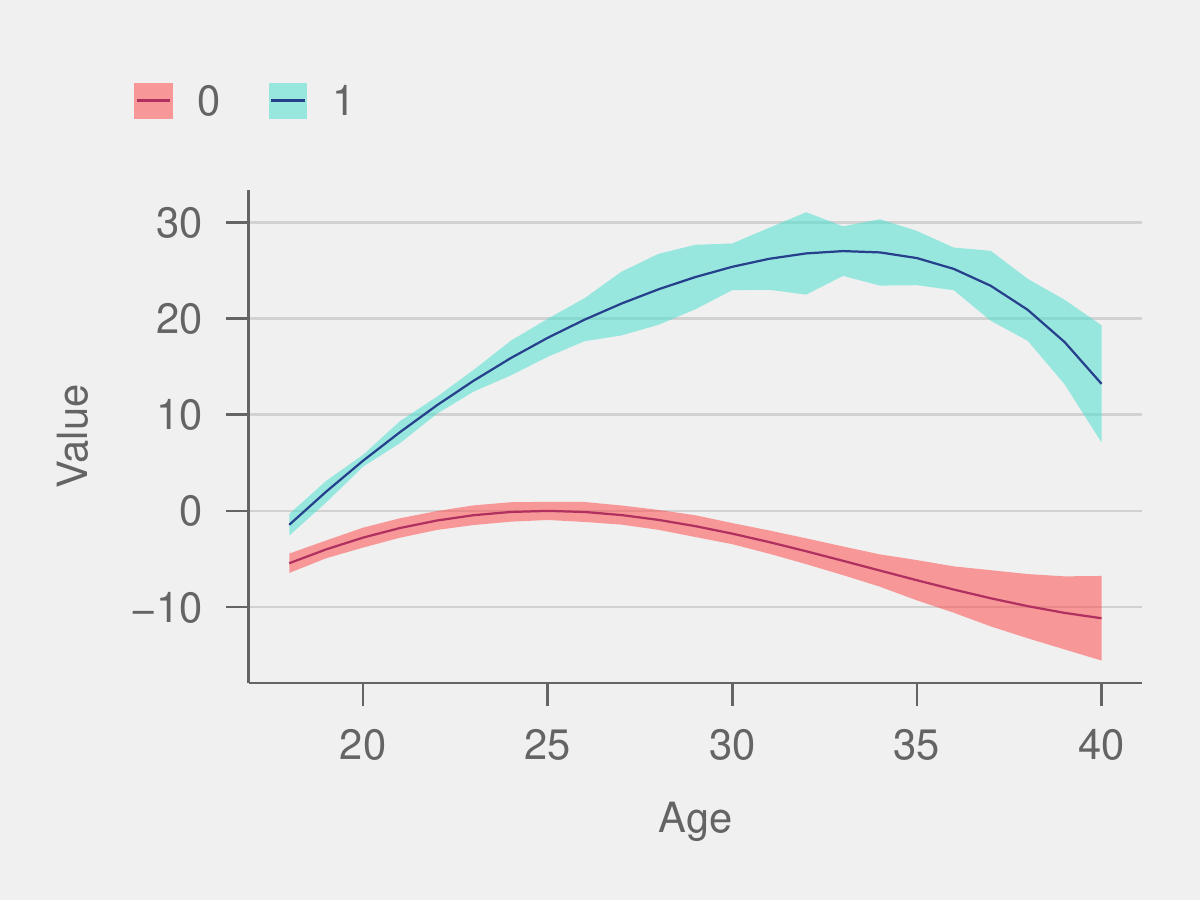}
    \includegraphics[scale=0.32]{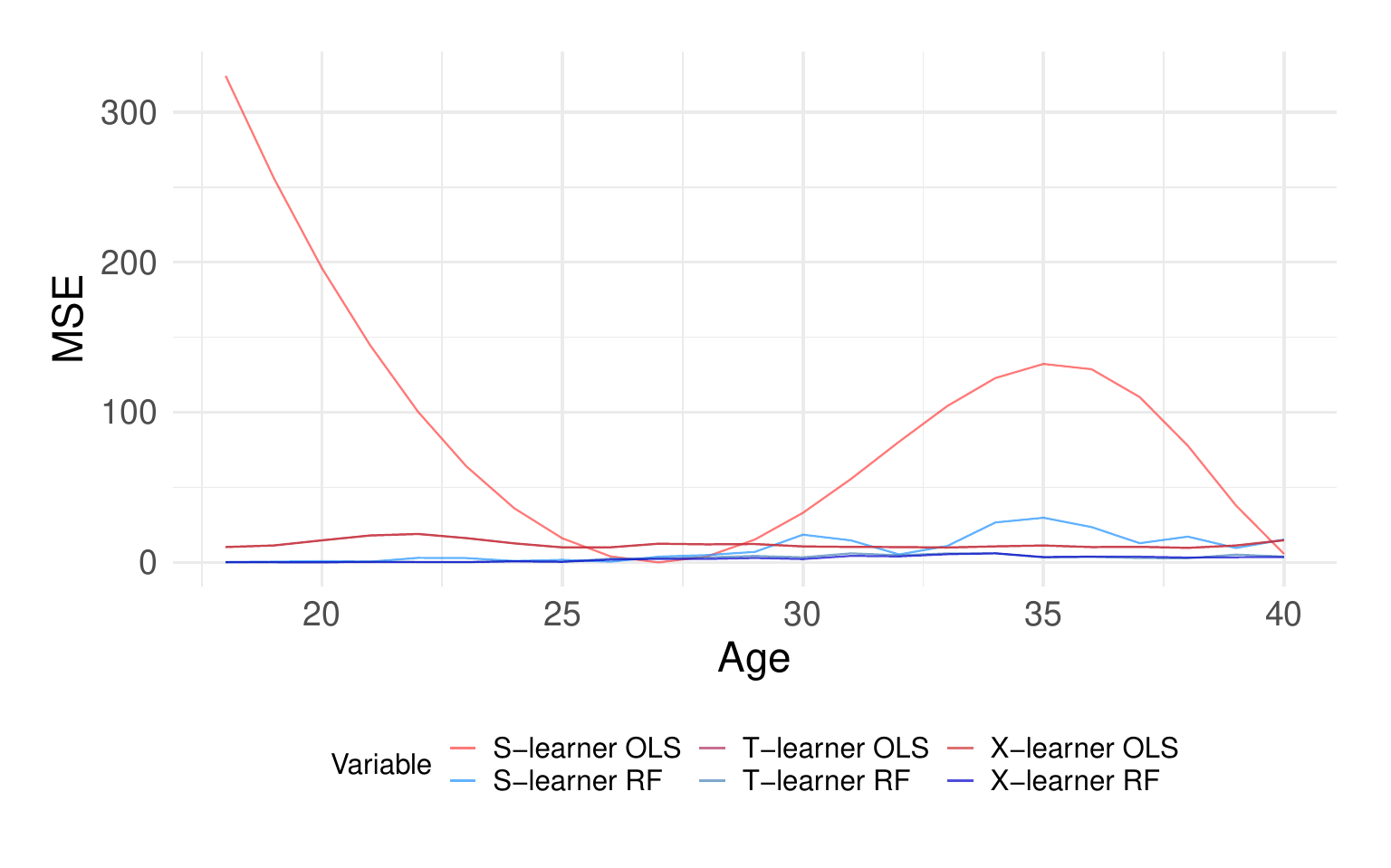}
    \caption{Simulation Scenario 3: Non-linear Treatment Effect - Potential Outcome Functions for Control and Treated Groups (Left), MSE in ACTE Estimation (Right).}
    \label{fig:sim3}
\end{figure}

\begin{table}[ht]
\centering
\begin{tabular}{lrrr}
  \toprule
    Model & simulation1 & simulation2 & simulation3 \\ 
  \midrule
  s.ols & \textbf{0.00} & 0.44 & 89.23 \\ 
  t.ols & 1.53 & 1.44 & 9.40 \\ 
  x.ols & 1.53 & 1.44 & 9.40 \\ 
  s.rf & 0.35 & 0.29 & 9.90 \\ 
  t.rf & 0.05 & \textbf{0.07} & 4.17 \\ 
  x.rf & 0.07 & 0.08 & \textbf{3.74}\\ 
   \bottomrule
\end{tabular}
\caption{MSE for ACTE estimation} 
\label{tab:simulatio_result}
\end{table}

\section{Application to National Basketball Association data}
\label{sec:4}
Load management in the NBA, a strategy increasingly adopted for injury prevention and career longevity, also serves to optimize player performance. Kawhi Leonard's 2018-2019 season with the Toronto Raptors is a prime example. By participating in only 60 of the 82 regular-season games, Leonard was in top form for the playoffs, leading the Raptors to their inaugural championship. Similarly, Gregg Popovich of the San Antonio Spurs masterfully applied this approach with veteran Tim Duncan, preserving Duncan's performance levels and contributing to the Spurs' repeated championship successes. This strategic balance between player health and team achievement exemplifies the dual benefits of load management.

Critiques of load management point to its impact on the regular season's significance, suggesting that teams might not prioritize these games, thereby diminishing fan engagement. This critique led to the NBA's implementation of the Player Participation Policy for the 2023-2024 season, requiring star players to participate more frequently, with restrictions on resting multiple star players simultaneously. This highlights the need for quantifying the impact of rest on both player performance and the overall competitiveness of the season.

Despite its growing acceptance, quantitative analysis of load management's precise impact remains scarce, particularly in comparison to the effects of playing in back-to-back games. In this study, we apply the ACTE framework to measure the impact of rest on players, segmented by age. While a randomized controlled trial (RCT) would be the ideal method for such analysis, the practical challenges and high costs make it infeasible in the context of an NBA season. Therefore, our approach utilizes observational studies within this framework. While acknowledging the inherent limitations and assumptions of observational studies, we believe this approach offers a valuable starting point for applying causal machine learning to age-related performance curves, providing insightful results into the effects of load management.

\subsection{Data}
Leveraging ten seasons (2011-2022) of NBA data obtained through the \texttt{hoopR} package \cite{hoopR}, we utilize the discussed framework to analyze how days of rest and age influence various box score statistics for NBA players. To ensure the relevance of our data, we only consider instances where players were active for at least 25 minutes in the previous game, thereby excluding minimal play times. After the data processing, we have total of 827 players. In our analysis, the treatment variable $w=0$ signifies games played on consecutive days (back-to-back, b2b), while $w=1$ represents having at least one day of rest. The potential outcome under treatment $w$ is denoted as $Y(w)$, with $X$ representing the covariate matrix. We define $\mu_w(a,x)$ as the conditional expectation of the potential outcome $Y(w)$, given the player's age $a$. This calculation facilitates the construction of the age curve. Through $\tau(a, x)$, we ascertain the ACTE, quantifying the impact of rest versus participating in b2b games at specific ages. Given the observational nature of our study, completely eliminating confounders might not be feasible. However, we aim to mitigate their influence by adjusting for variables such as player, team, opponent team, home or away games, and season-fixed effects. Our analysis primarily focuses on players aged between 18 and 39, as the treatment group's sample size outside this range is significantly limited, as shown in table \ref{table:age_table}.

\begin{table}[tbh]
    \centering
    \caption{Number of games played across age and treatment status}
    \label{table:age_table}
    \begin{tabular}{lll}
  \toprule
   Age & b2b & non-b2b\\
   \midrule
18 & 0 (0.00\%) & 5 (100.00\%) \\ 
  19 & 266 (14.22\%) & 1604 (85.78\%) \\ 
  20 & 679 (14.66\%) & 3952 (85.34\%) \\ 
  21 & 917 (14.49\%) & 5412 (85.51\%) \\ 
  22 & 1312 (15.47\%) & 7168 (84.53\%) \\ 
  23 & 1734 (16.32\%) & 8892 (83.68\%) \\ 
  24 & 1795 (15.97\%) & 9446 (84.03\%) \\ 
  25 & 1715 (15.14\%) & 9616 (84.86\%) \\ 
  26 & 1850 (15.66\%) & 9964 (84.34\%) \\ 
  27 & 1785 (16.04\%) & 9344 (83.96\%) \\ 
  28 & 1529 (14.94\%) & 8706 (85.06\%) \\ 
  29 & 1221 (14.98\%) & 6930 (85.02\%) \\ 
  30 & 984 (14.25\%) & 5921 (85.75\%) \\ 
  31 & 867 (14.46\%) & 5129 (85.54\%) \\ 
  32 & 633 (13.71\%) & 3984 (86.29\%) \\ 
  33 & 464 (13.98\%) & 2855 (86.02\%) \\ 
  34 & 316 (13.98\%) & 1944 (86.02\%) \\ 
  35 & 218 (13.53\%) & 1393 (86.47\%) \\ 
  36 & 128 (12.04\%) & 935 (87.96\%) \\ 
  37 & 69 (11.54\%) & 529 (88.46\%) \\ 
  38 & 39 (13.18\%) & 257 (86.82\%) \\ 
  39 & 22 (10.38\%) & 190 (89.62\%) \\ 
  40 & 3 (5.66\%) & 50 (94.34\%) \\ 
  41 & 0 (0.00\%) & 5 (100.00\%) \\ 
  42 & 0 (0.00\%) & 10 (100.00\%) \\ 
   \bottomrule
    \end{tabular}
    \label{tabel:age_table}
\end{table}

\subsection{Average Conditional Expectation Function (ACEF)}
In this section, we will analyze the ACEF for S and T learners using ols-spline and random forest.

In all ACEF plots, a finer line denotes the $\hat{\mu}_w(a)$, which represents the estimated expectation given age. We also employ a smoothing spline function, fitted through a generalized additive model (GAM) with a smoothness degree of six, as suggested by \cite{Schuckers2023}. This approach helps us understand the general age-curve trend for both the treatment and control groups, represented by a solid line. The shaded regions show the 90\% bootstrap confidence intervals for each age group. Non-overlapping shaded areas provide statistical evidence of the impact of rest on subsequent game performance. For a detailed discussion on the confidence interval calculation, refer to Appendix \ref{app:B}.

First, we analyze the net, offensive, and defensive ratings. Net, Offensive, and Defensive Ratings are crucial statistics in basketball analytics. Offensive Rating measures a player's or team's efficiency in scoring, calculated as the number of points produced per 100 possessions; a higher offensive rating indicates better offensive performance. Conversely, a defensive rating assesses the ability to prevent opponents from scoring, with points allowed per 100 opponent possessions; a lower defensive rating means stronger defense. Net Rating is the difference between offensive and defensive rating, offering a holistic view of overall impact. A positive net rating implies a player or team contributes more offensively than they concede defensively, while a negative value suggests a need for improvement in scoring efficiency, defensive effectiveness, or both.

In Figures \ref{fig:sols_rating}, \ref{fig:srf_rating}, \ref{fig:tols_rating}, and \ref{fig:trf_rating}, the Conditional Expectation Function (CEF) for net, offensive, and defensive ratings is displayed. Across all three metrics, the effects are significantly pronounced with age except for older players where we lack sample size.

With the S-learner using OLS-Spline, most effects show significance. Yet, this model is the most constrained, potentially hiding strong intrinsic bias. It operates under the assumption of uniform treatment effects across ages, which can be mitigated only by incorporating two-way interaction terms between covariates and treatment. The S-learner with Random Forest (RF) addresses this limitation, as trees adeptly capture nonlinear relationships between treatment and covariates. However, this model reveals a non-significant effect on offensive ratings for several ages. One hypothesis for this observation could be that older players, in back-to-back games, emphasize offense over defense. Defensively, rotating younger players to cover key offensive opponents might mitigate physical fatigue. Another contributing factor could be the limited data on older and younger players. As Table \ref{table:age_table} indicates, maintaining a career in the competitive NBA is challenging, leading to less precise inferences for this demographic.

The T-learner results offer more insights, particularly regarding older players, due to its complexity. Being more intricate than the S-learner, the T-learner demands a larger sample size. While the general conclusions for both OLS-Spline and RF mirror those from the S-learner with RF, it becomes more apparent that rest impacts the defensive aspect more than the offensive end.

\begin{figure}
    \centering
    \includegraphics[scale=0.2]{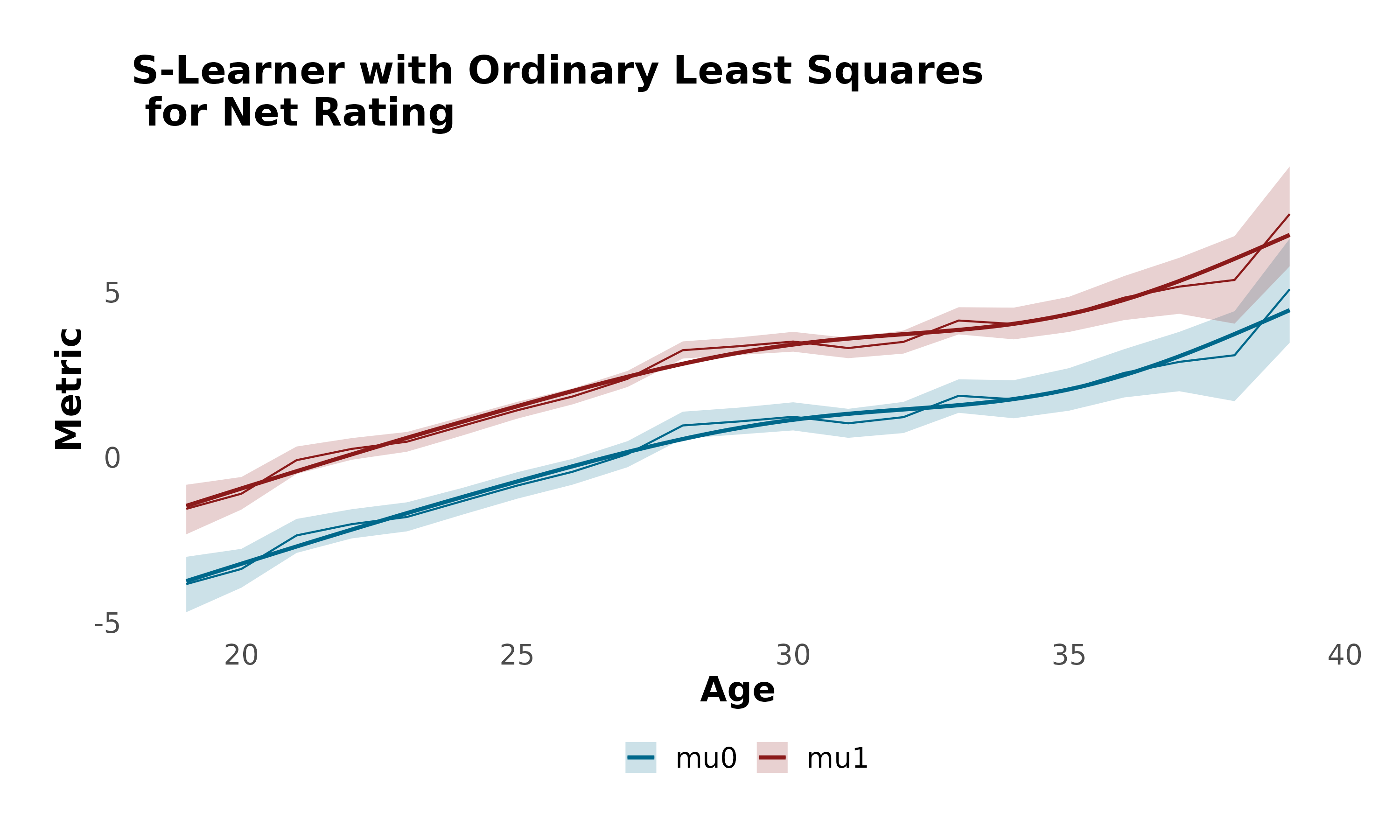} 
    \includegraphics[scale=0.2]{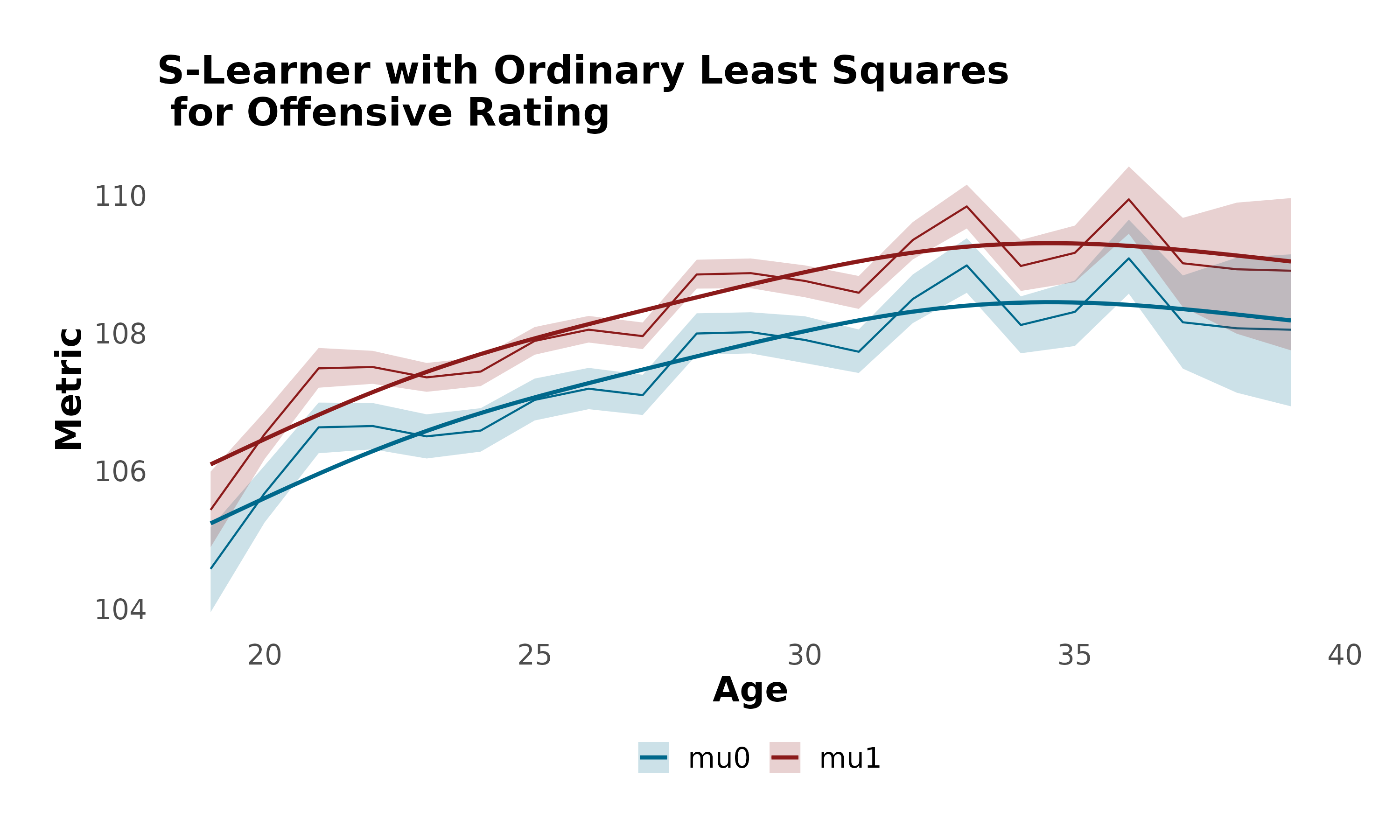}
    \includegraphics[scale=0.2]{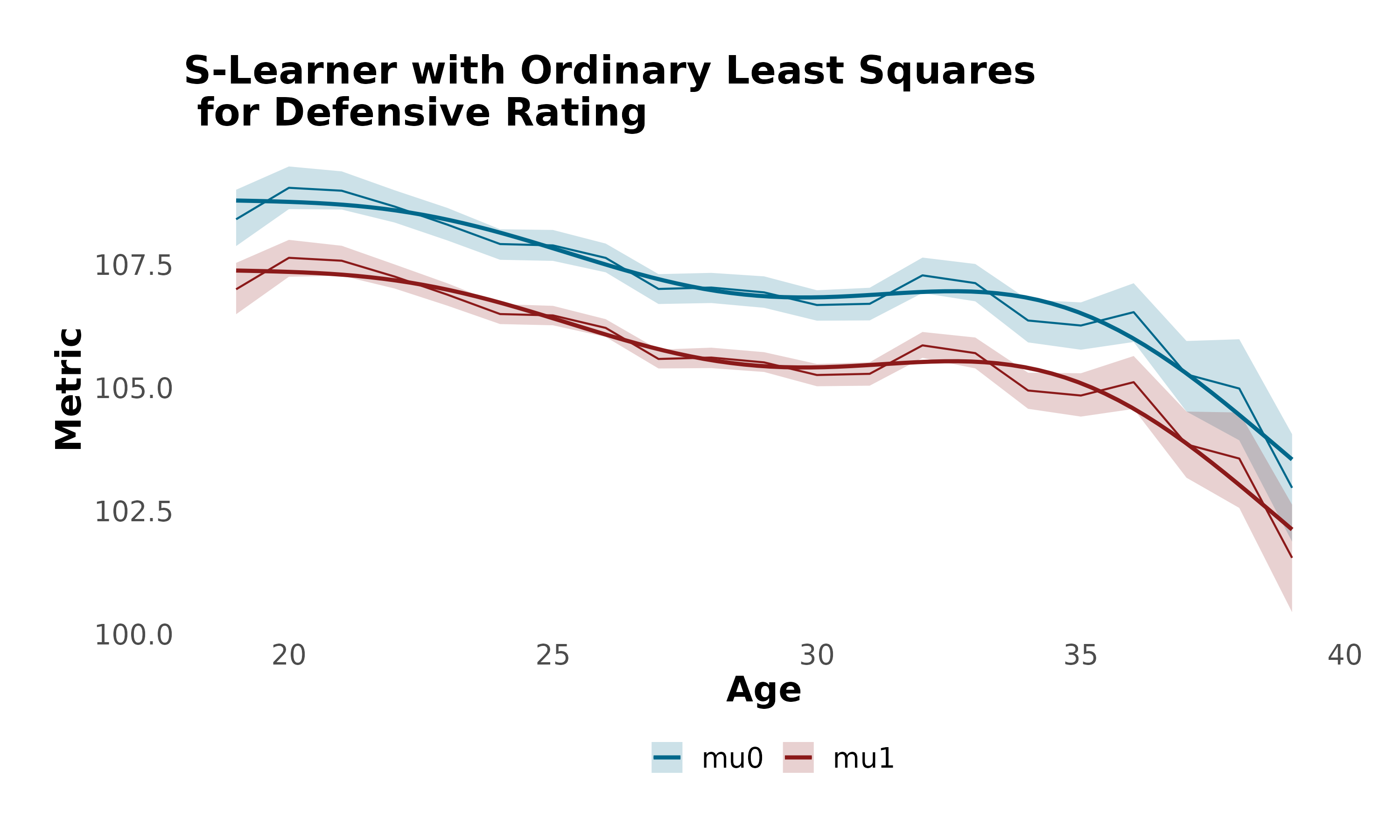}
    \caption{CEF using S-learner and OLS-Spline for net, offensive, and defensive rating}
    \label{fig:sols_rating}
\end{figure}

\begin{figure}
    \centering
    \includegraphics[scale=0.2]{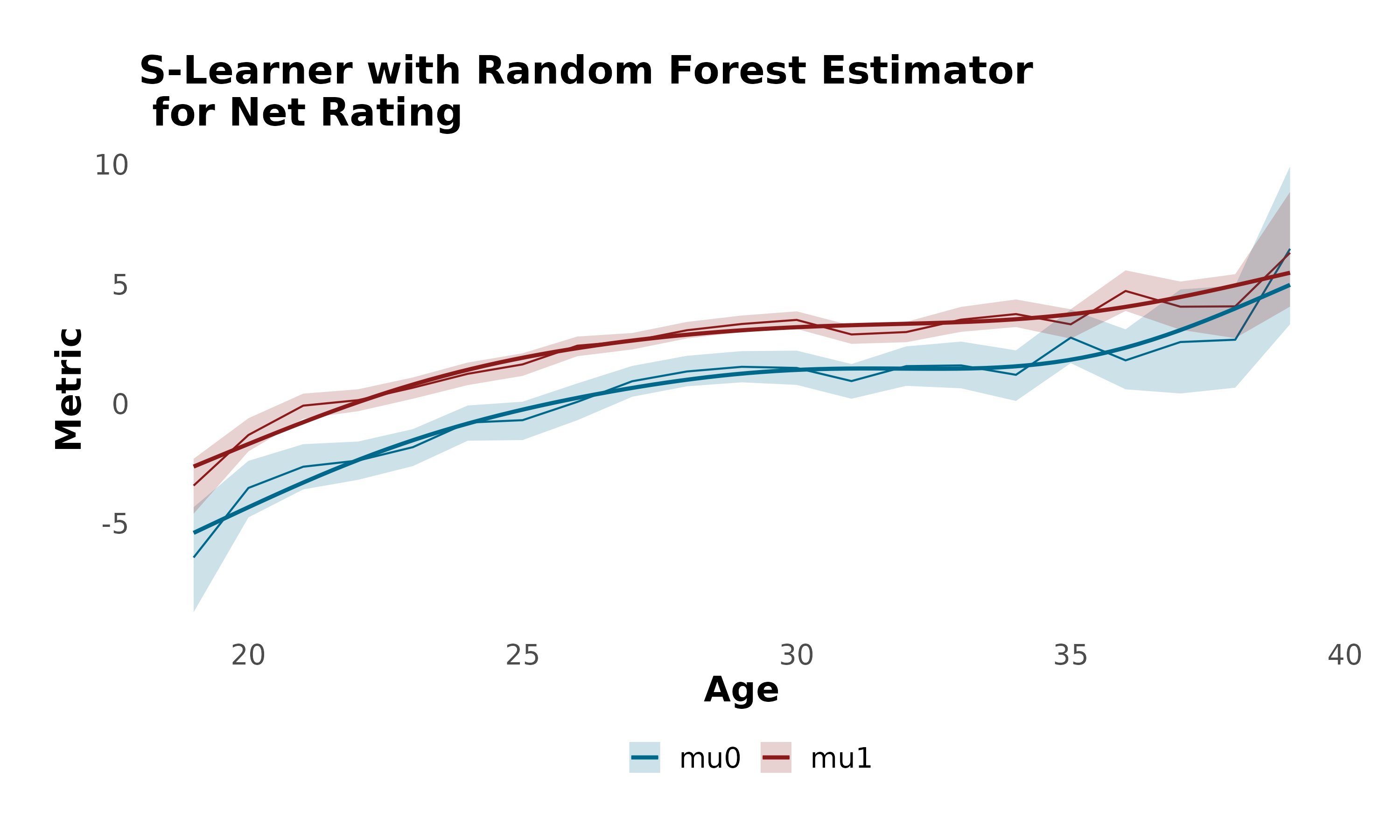} 
    \includegraphics[scale=0.2]{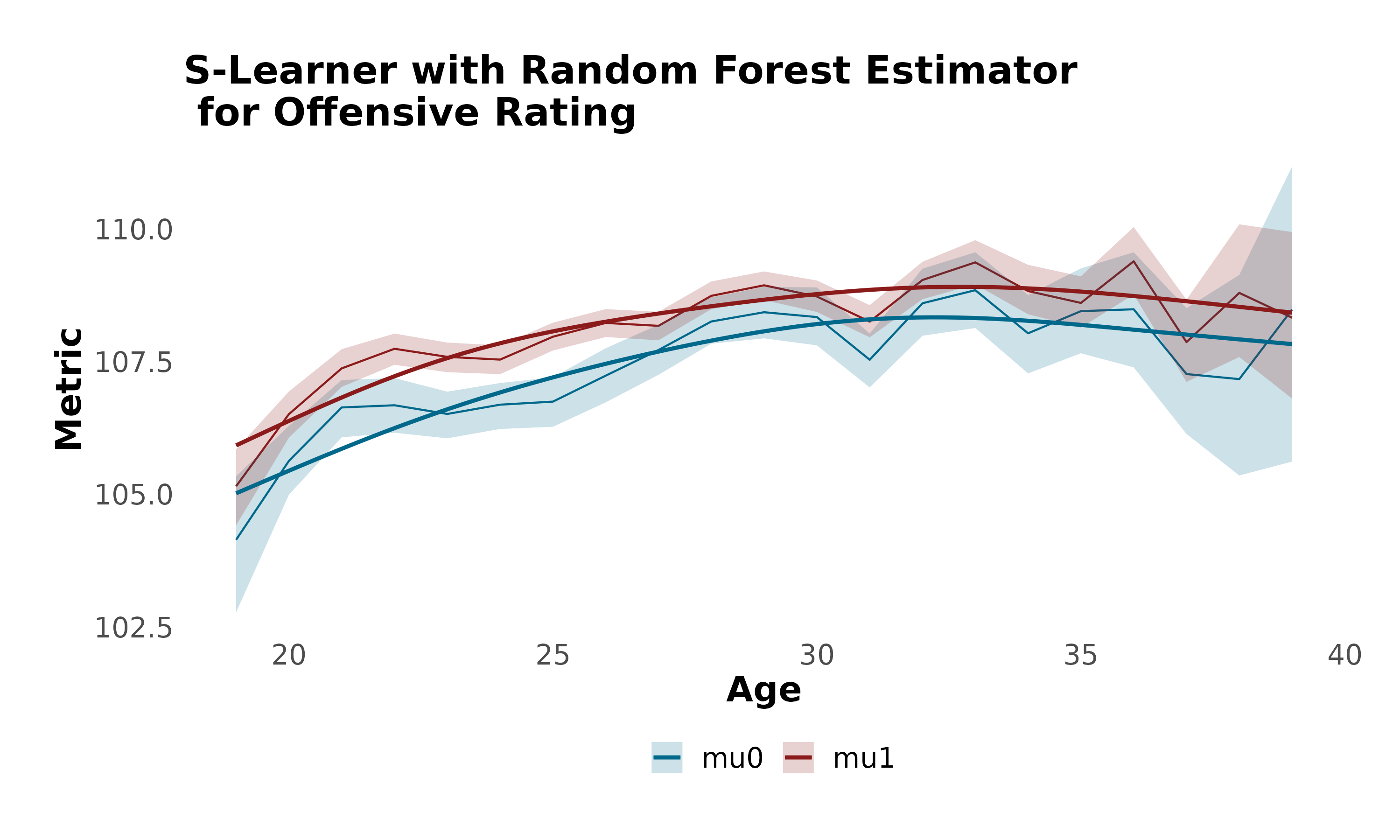}
    \includegraphics[scale=0.2]{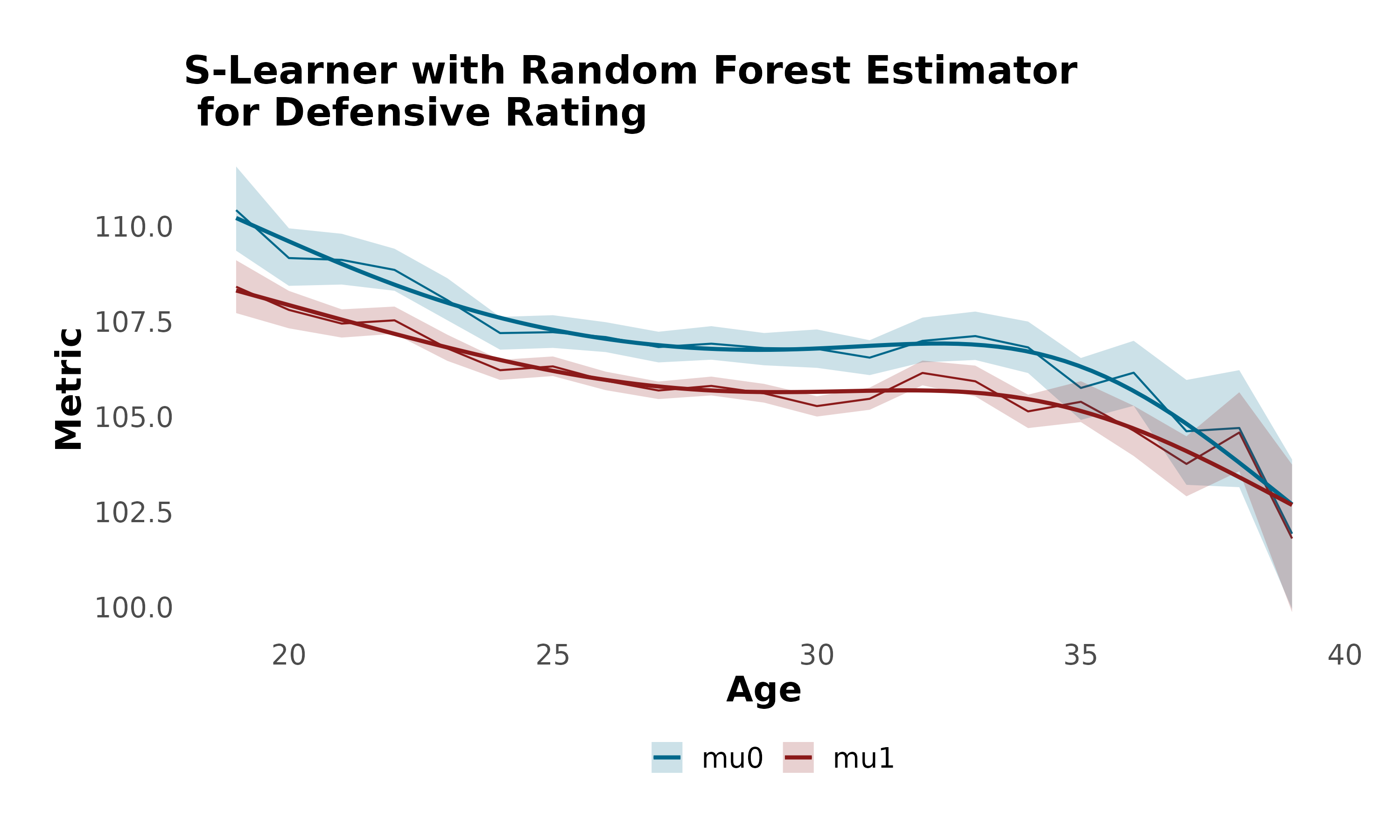}
    \caption{CEF using S-learner and RF for net, offensive, and defensive rating}
    \label{fig:srf_rating}
\end{figure}

\begin{figure}
    \centering
    \includegraphics[scale=0.2]{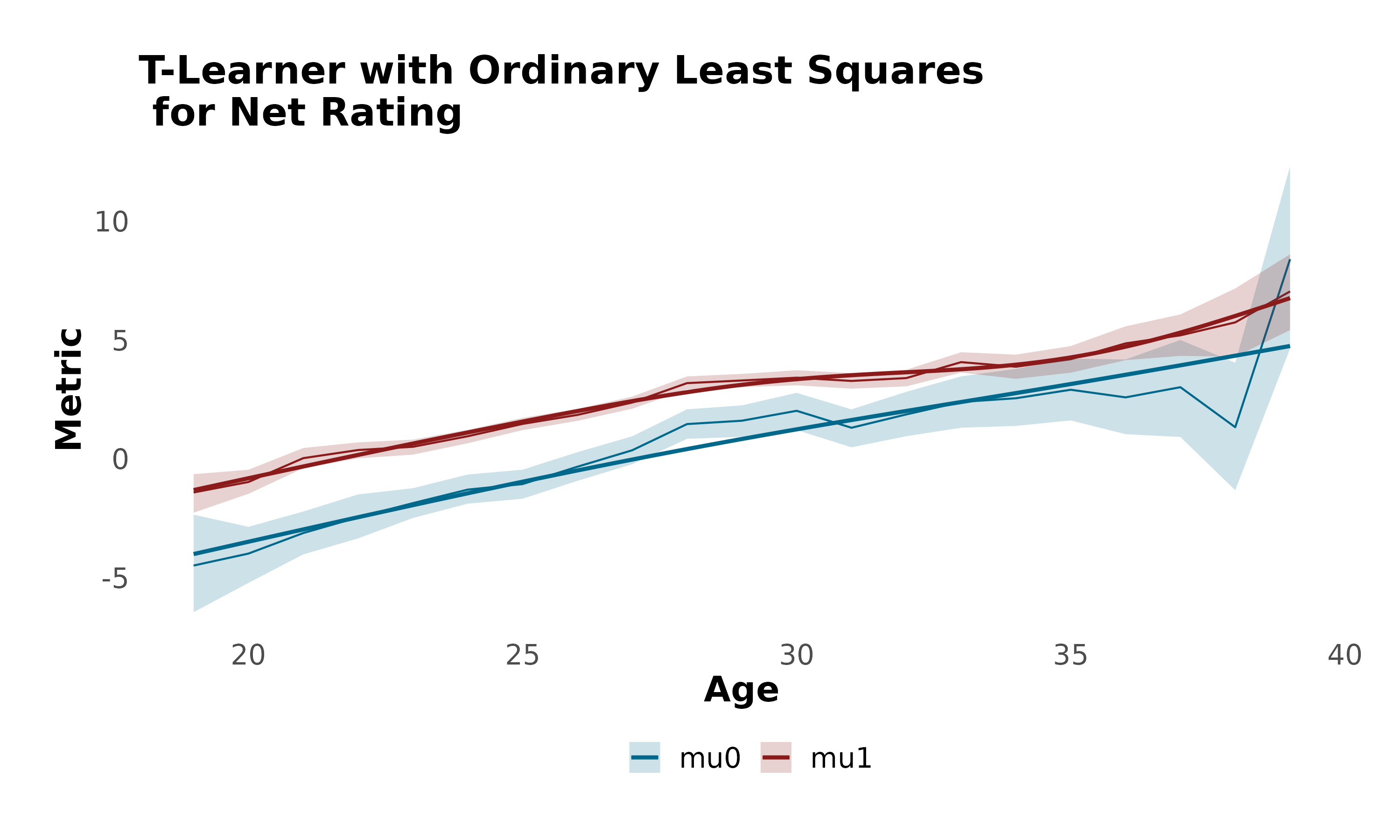} 
    \includegraphics[scale=0.2]{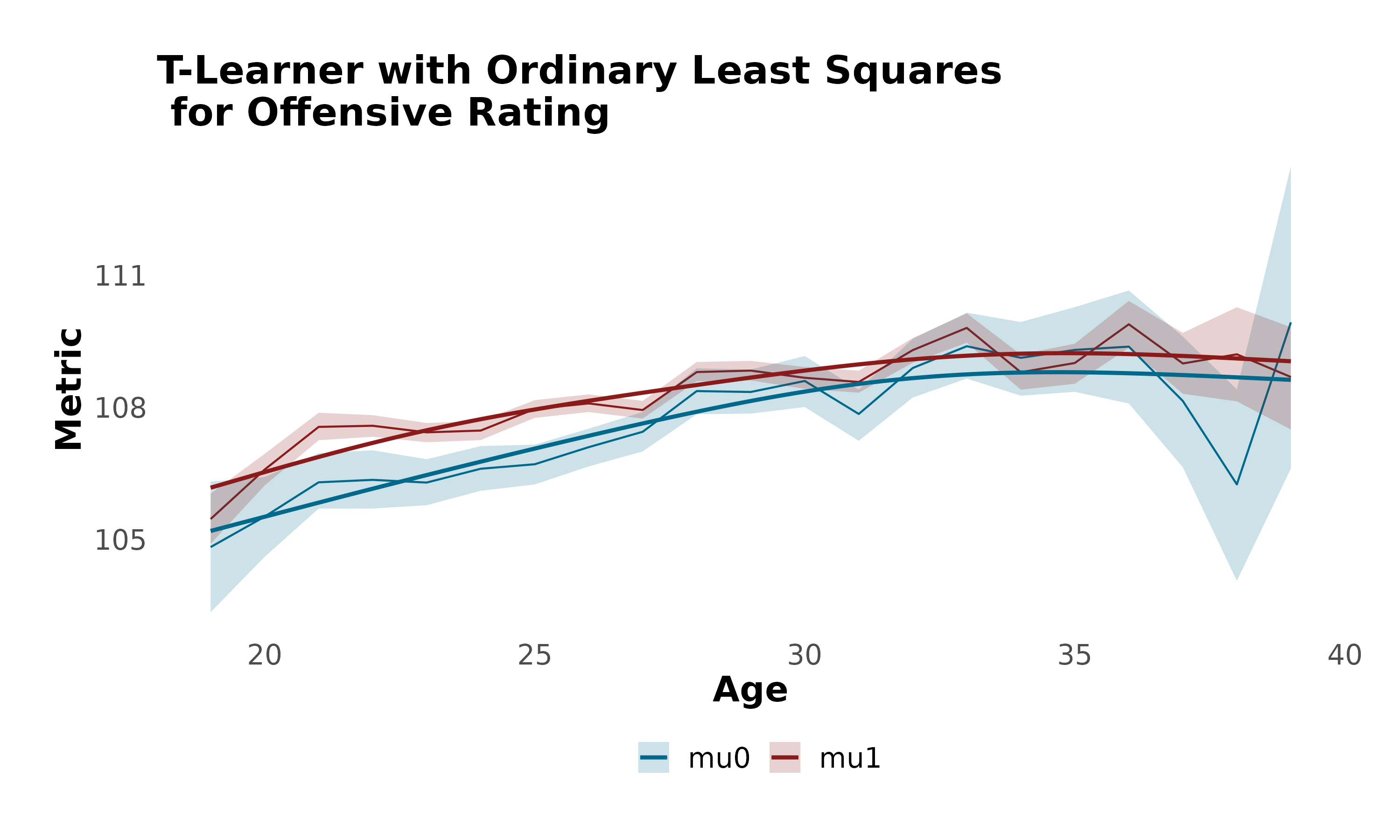}
    \includegraphics[scale=0.2]{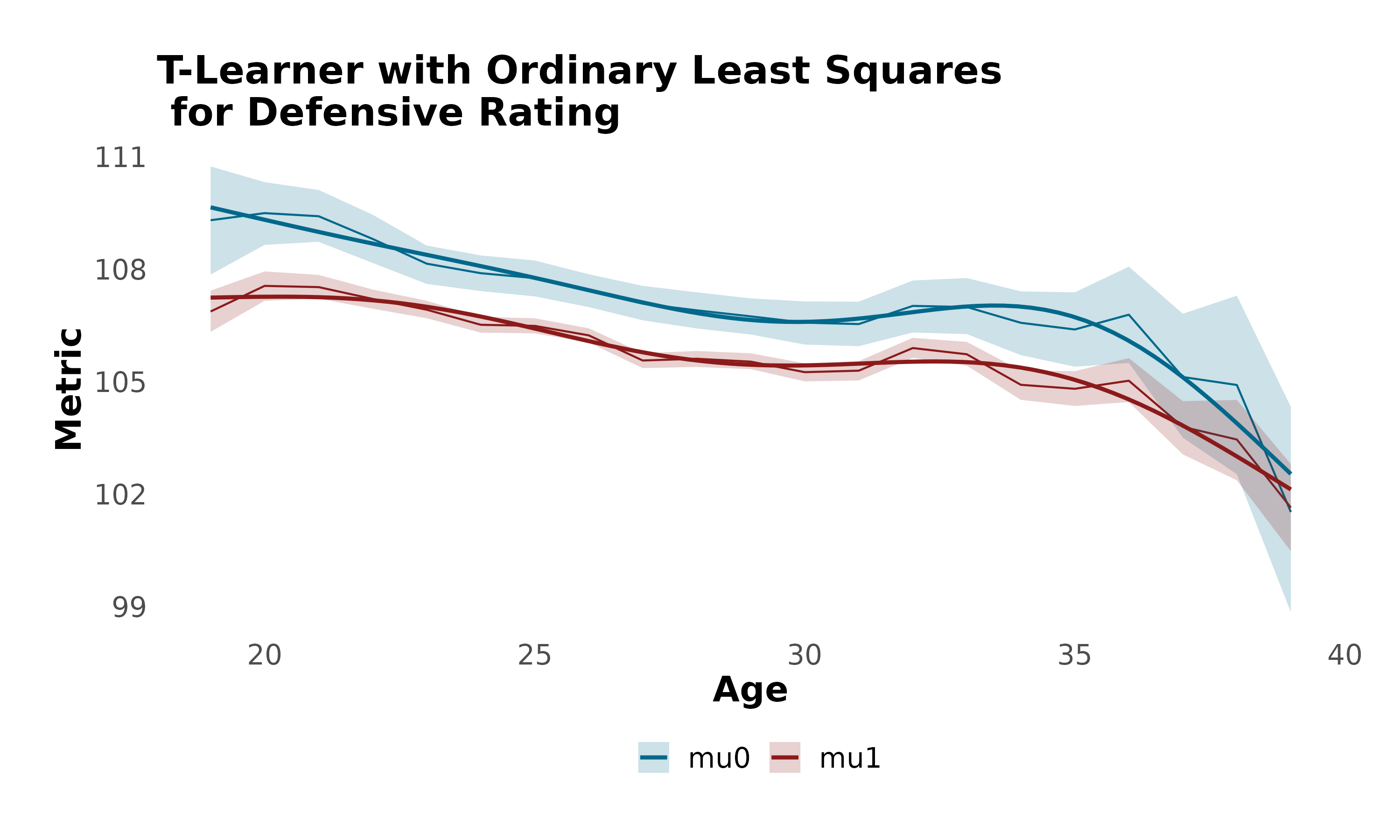}
    \caption{CEF using T-learner and OLS-Spline for net, offensive, and defensive rating}
    \label{fig:tols_rating}
\end{figure}

\begin{figure}
    \centering
    \includegraphics[scale=0.2]{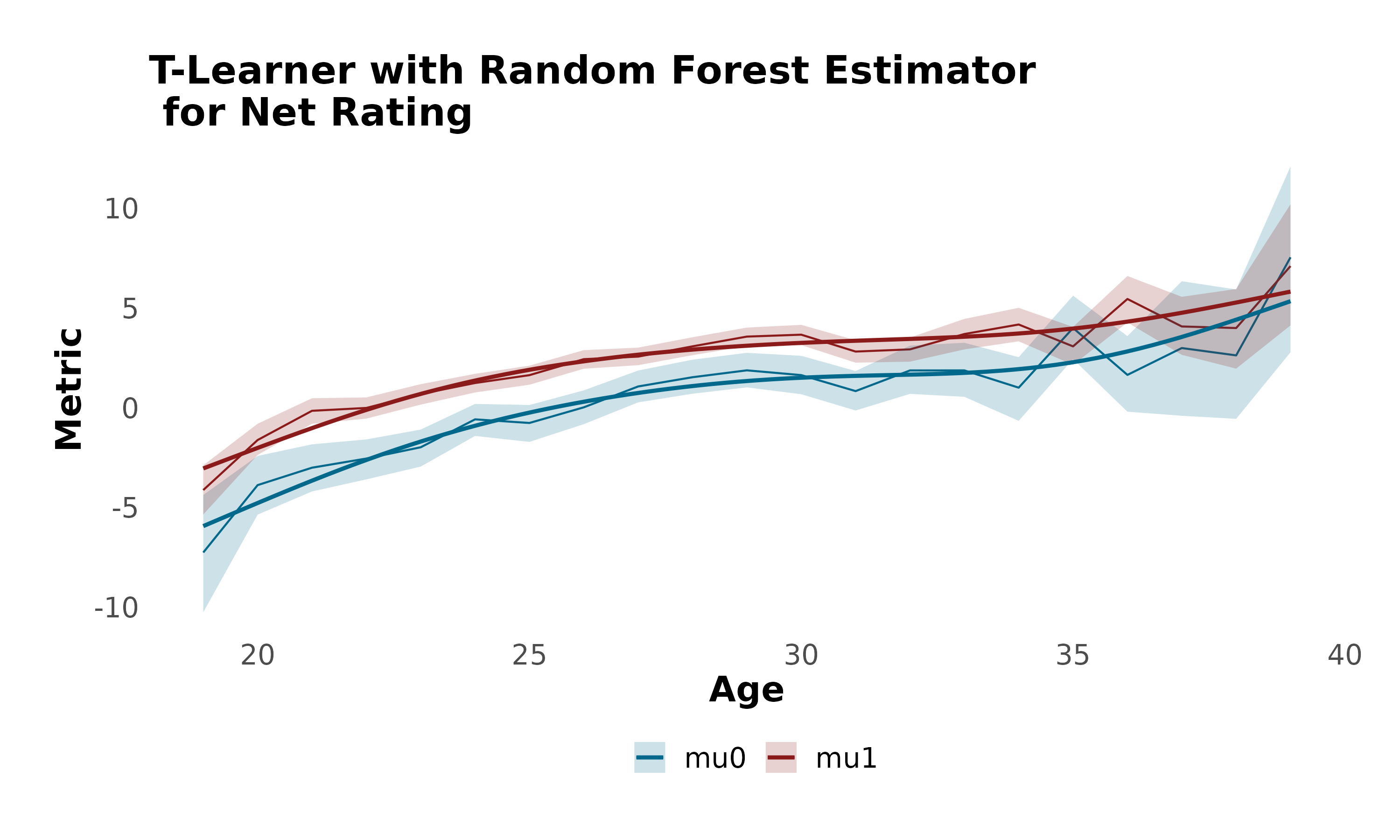} 
    \includegraphics[scale=0.2]{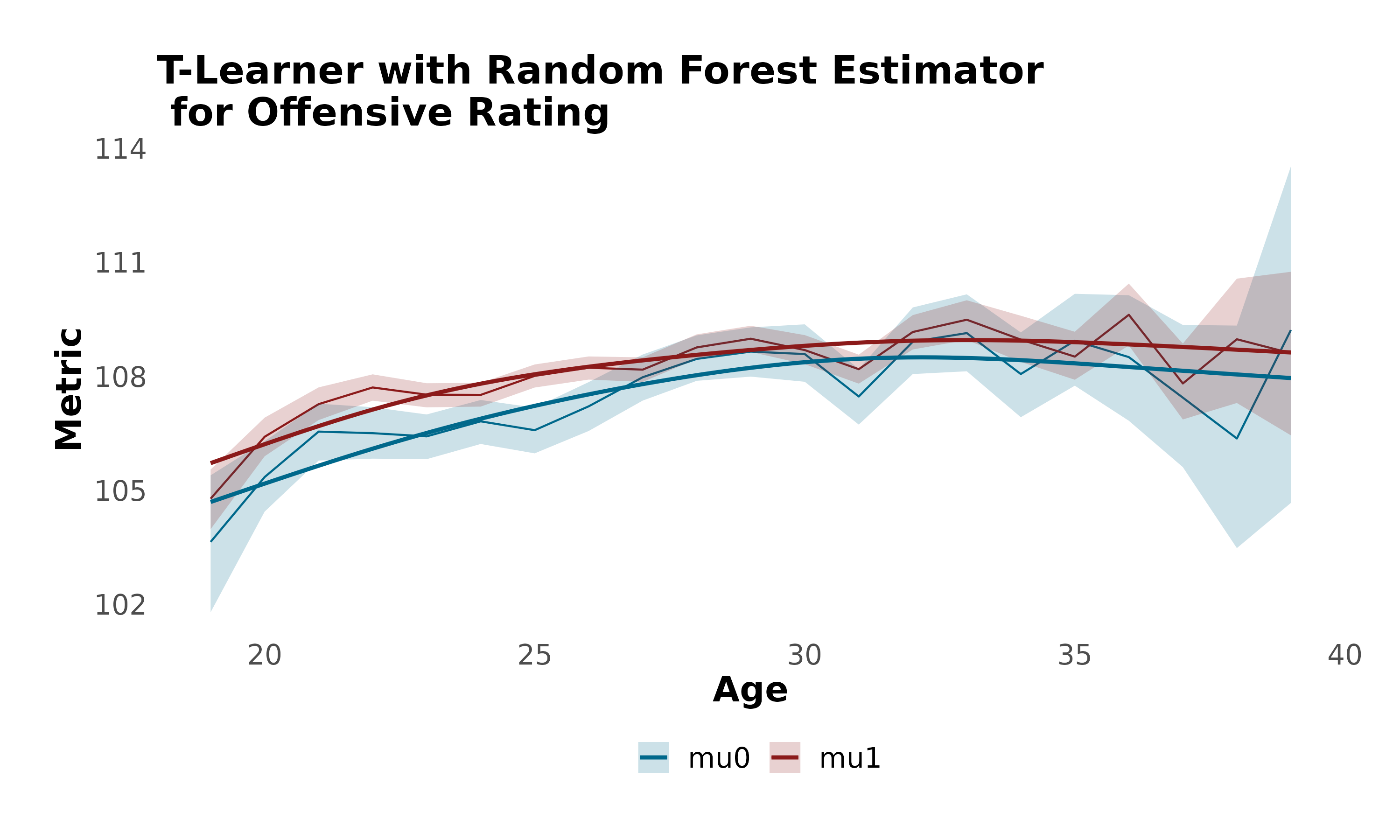}
    \includegraphics[scale=0.2]{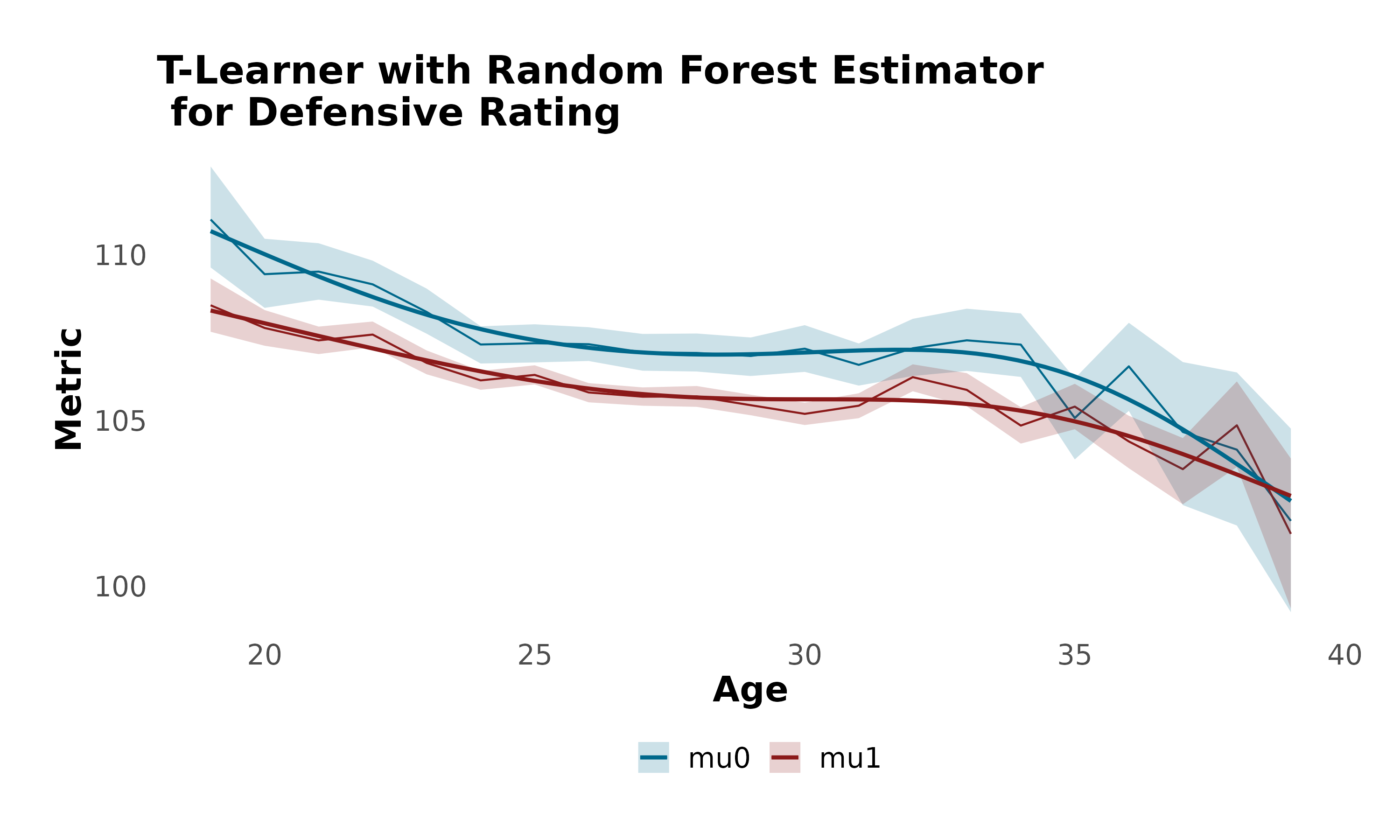}
    \caption{CEF using T-learner and RF for net, offensive, and defensive rating}
    \label{fig:trf_rating}
\end{figure}

We now turn our attention to analyzing box score statistics in the NBA, a detailed set of metrics that measure a player's game performance. These include points, assists, rebounds (divided into offensive and defensive), steals, blocks, turnovers, and fouls. Assists tally a player's contribution to teammates' scoring, while steals and blocks are key defensive indicators of disrupting the opponent's play. Turnovers and fouls, conversely, denote player errors and infractions. Such statistics are vital for assessing both individual and team play, offering a quantitative recap of crucial game elements. Our analysis particularly focuses on points, assists, total rebounds, steals, blocks, and turnovers.

To ensure equitable comparisons among players with varying game minutes, we examine these statistics normalized per 100 possessions. This approach aids in more accurately interpreting the respective trends and patterns in player performance across different game scenarios.

In Figures \ref{fig:sols_box}, \ref{fig:srf_box}, \ref{fig:tols_box}, and \ref{fig:trf_box}, we display the Conditional Expectation Function (CEF) for the previously mentioned box score statistics. Notably, steals are the only statistic showing a stronger effect than other metrics.

In the analysis using the S-learner, it becomes clear that steals are the only statistic consistently demonstrating a positive effect. This pattern suggests that rest could significantly influence steals, likely due to the physical nature required for successful execution. Although the effect on turnovers is not marked by statistical significance, the notable U-shaped curve indicates that prime-aged players generally incur fewer turnovers compared to their counterparts. Blocks exhibit a similar U-shaped trend. However, this might be skewed by Tim Duncan's extraordinary blocking in his later career years. As shown in Table \ref{table:block_table}, Duncan's performance predominantly drives this trend, with the smaller sample size in his age bracket possibly leading to the unusual pattern in blocks. For assists, the older the players get they improve their assist skills. One hypothesis for this is that when the players get old, instead of being physical and attacking the rim, veteran players prefer to dish the ball taking advantage of their experience consistent players rendering high assists are mostly Steve Nash and Chris Paul as we can observe in the table \ref{table:assist_table}.
Moreover, the assist category is an area where treatment has a minimal impact, primarily because it involves less physical contact compared to other statistics.

The T-learner largely aligns with the conclusions drawn from other models, but it uniquely captures the rest effect on blocks when we use random forest, particularly among younger players. This model highlights that rest has a significant impact on this demographic. Across all four combinations analyzed, the conclusions are similar, indicating consistent trends. This consistency is especially notable in most box score statistics, where the T-learner reinforces the patterns observed in other estimators, emphasizing the influence of rest on certain aspects of player performance.

\begin{table}[tbh]
    \centering
    \caption{Top 10 blocks per 100 possessions by players above 35 years old}
    \begin{tabular}{llll}
  \toprule
   Player & age & b100 & date\\
   \midrule
    Tim Duncan & 36.00 & 12.07 & 2013-01-13 \\ 
  Tim Duncan & 39.00 & 10.91 & 2015-11-18 \\ 
  Tim Duncan & 37.00 & 10.53 & 2014-02-03 \\ 
  Marcus Camby & 37.00 & 10.20 & 2012-02-01 \\ 
  Tim Duncan & 36.00 & 9.80 & 2012-11-28 \\ 
  Tim Duncan & 36.00 & 9.43 & 2013-03-03 \\ 
  Tim Duncan & 36.00 & 9.09 & 2013-02-13 \\ 
  Tim Duncan & 36.00 & 8.96 & 2012-12-12 \\ 
  Marcus Camby & 37.00 & 8.96 & 2012-03-30 \\ 
  Tim Duncan & 36.00 & 8.93 & 2013-01-16 \\ 
   \bottomrule
    \end{tabular}
    \label{table:block_table}
\end{table}


\begin{table}[tbh]
    \centering
    \caption{Top 10 assists per 100 possessions by players above 35 years old}
    \begin{tabular}{llll}
  \toprule
   Player & age & a100 & date\\
   \midrule
    Steve Nash & 37.00 & 30.91 & 2012-01-24 \\ 
  Chris Paul & 36.00 & 27.59 & 2012-02-04 \\ 
  Kobe Bryant & 36.00 & 27.42 & 2012-03-05 \\ 
  Steve Nash & 38.00 & 26.15 & 2012-12-12 \\ 
   Steve Nash & 38.00 & 25.86 & 2015-11-28 \\ 
   Chris Paul & 36.00 & 25.42 & 2013-12-02 \\ 
   Steve Nash & 37.00 & 25.42 & 2014-01-24 \\ 
   Chris Paul & 36.00 & 25.00 & 2022-02-25 \\ 
   Steve Nash & 38.00 & 25.00 & 2012-04-13 \\ 
   Steve Nash & 38.00 & 25.00 & 2013-11-15 \\ 
   \bottomrule
    \end{tabular}
    \label{table:assist_table}
\end{table}

\begin{figure}
    \centering
    \includegraphics[scale=0.2]{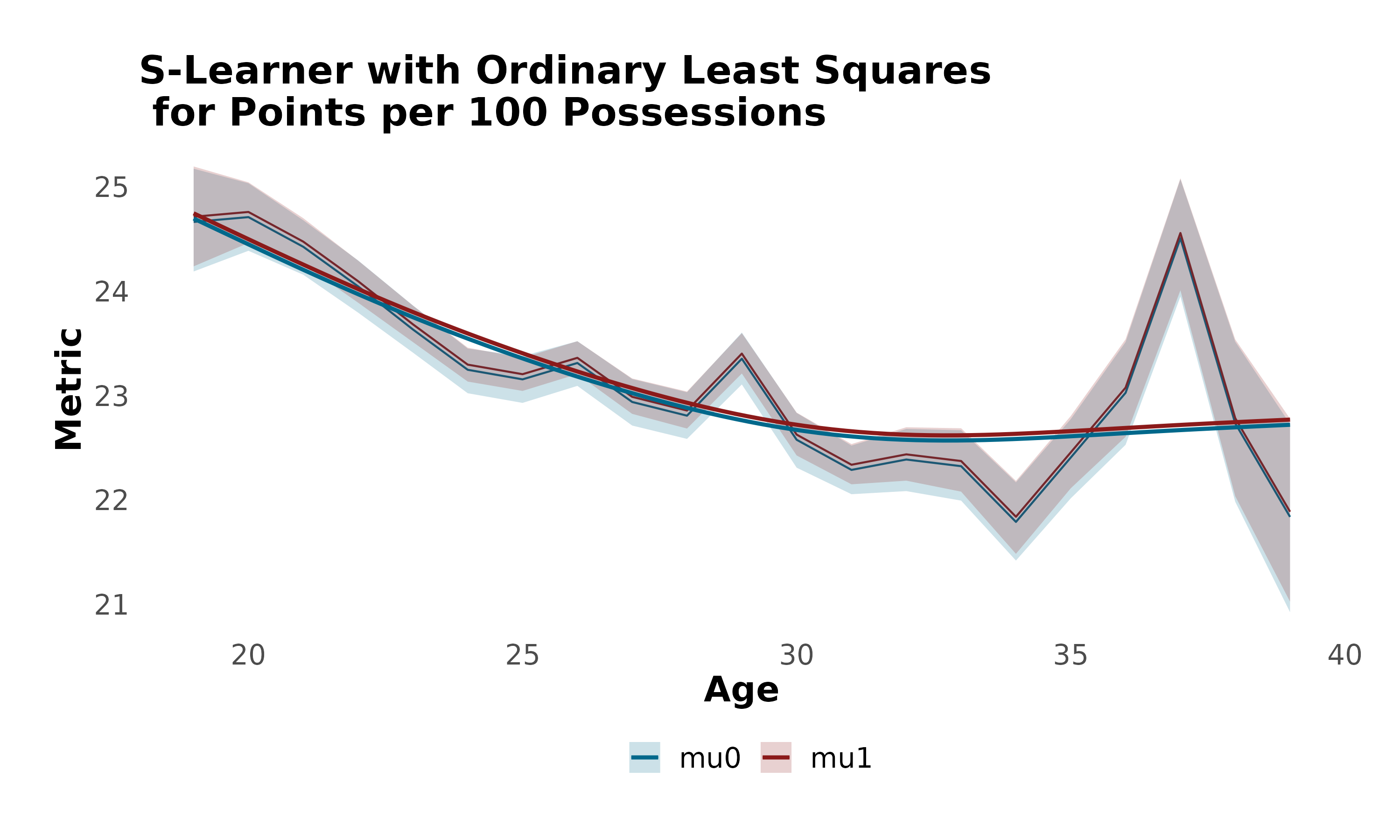} \includegraphics[scale=0.2]{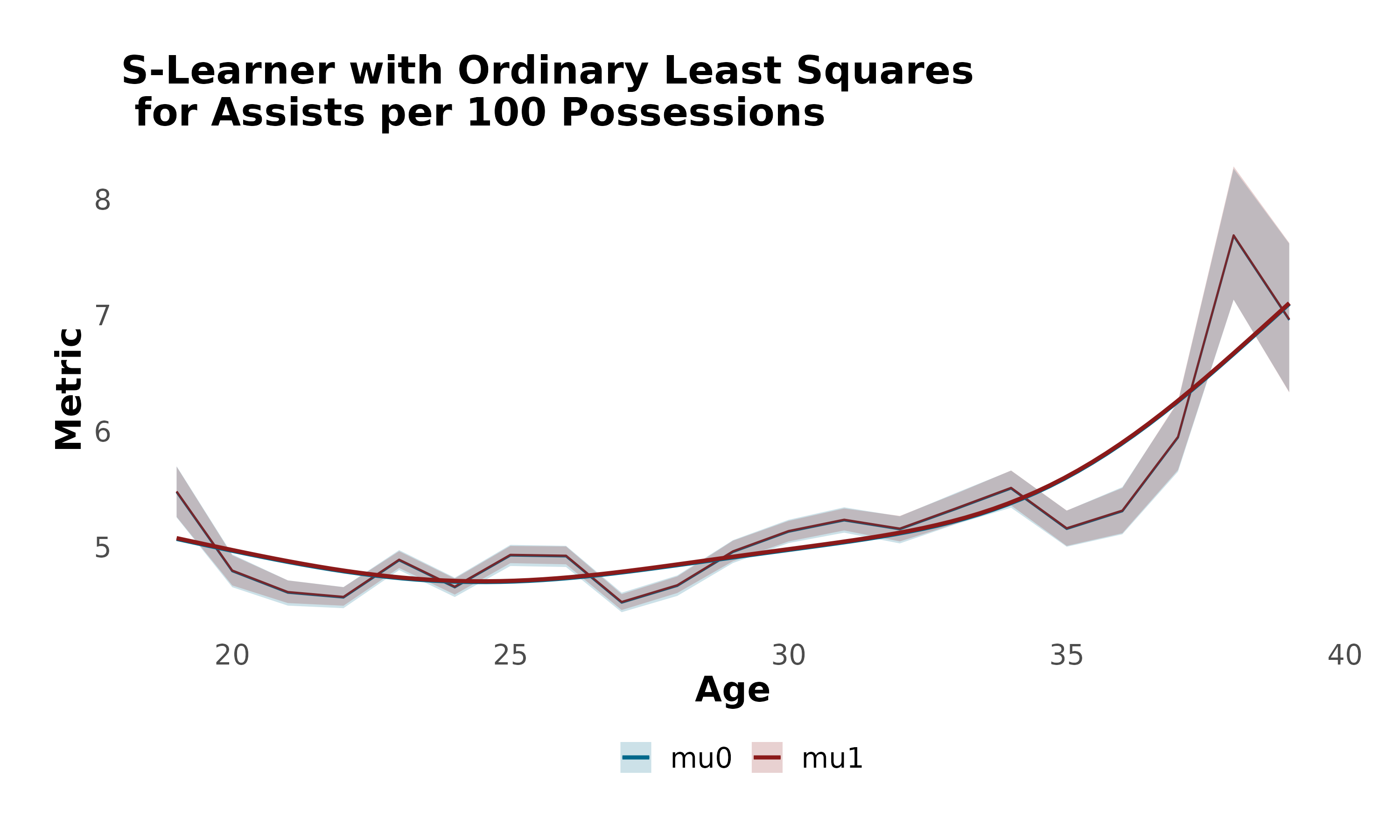}
    \includegraphics[scale=0.2]{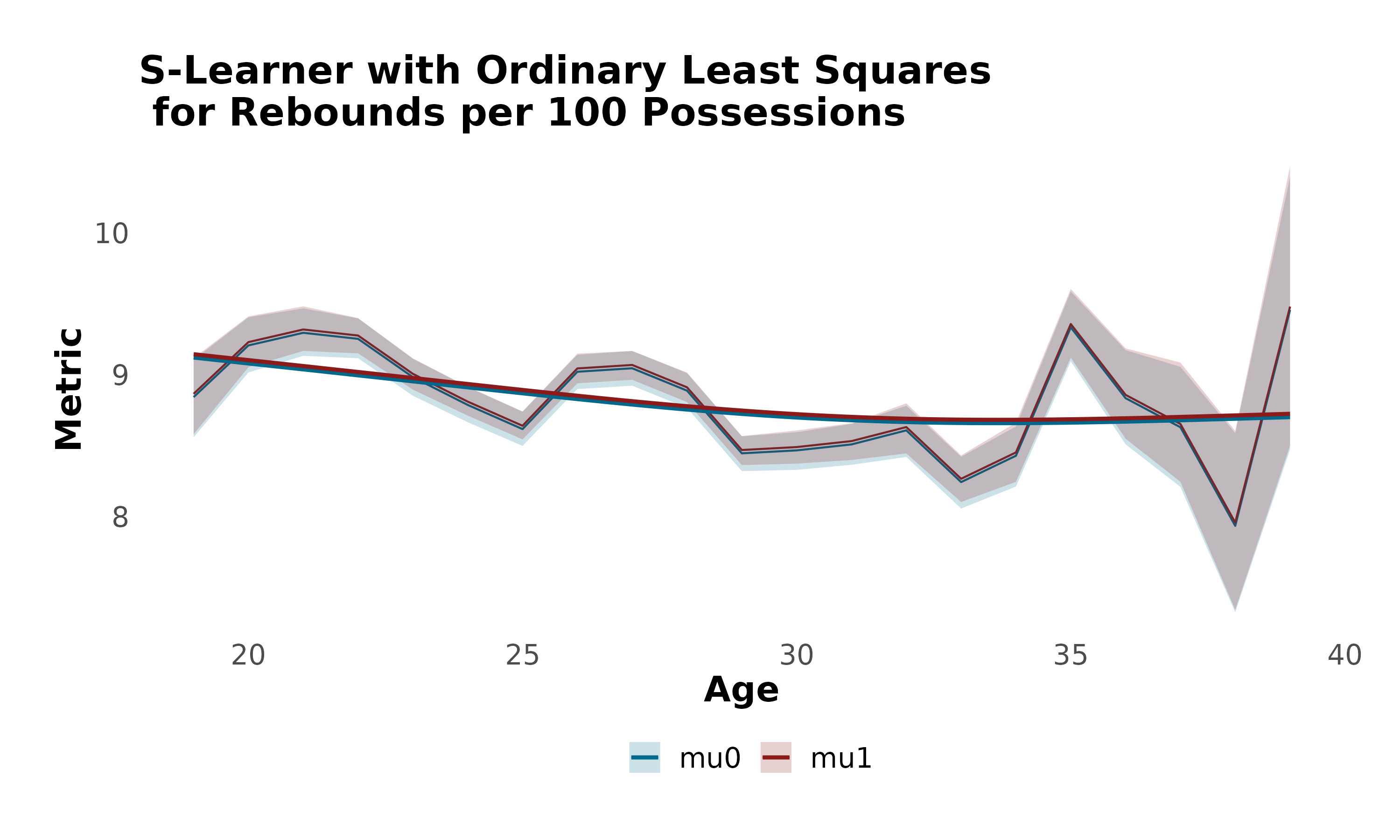}
    \includegraphics[scale=0.2]{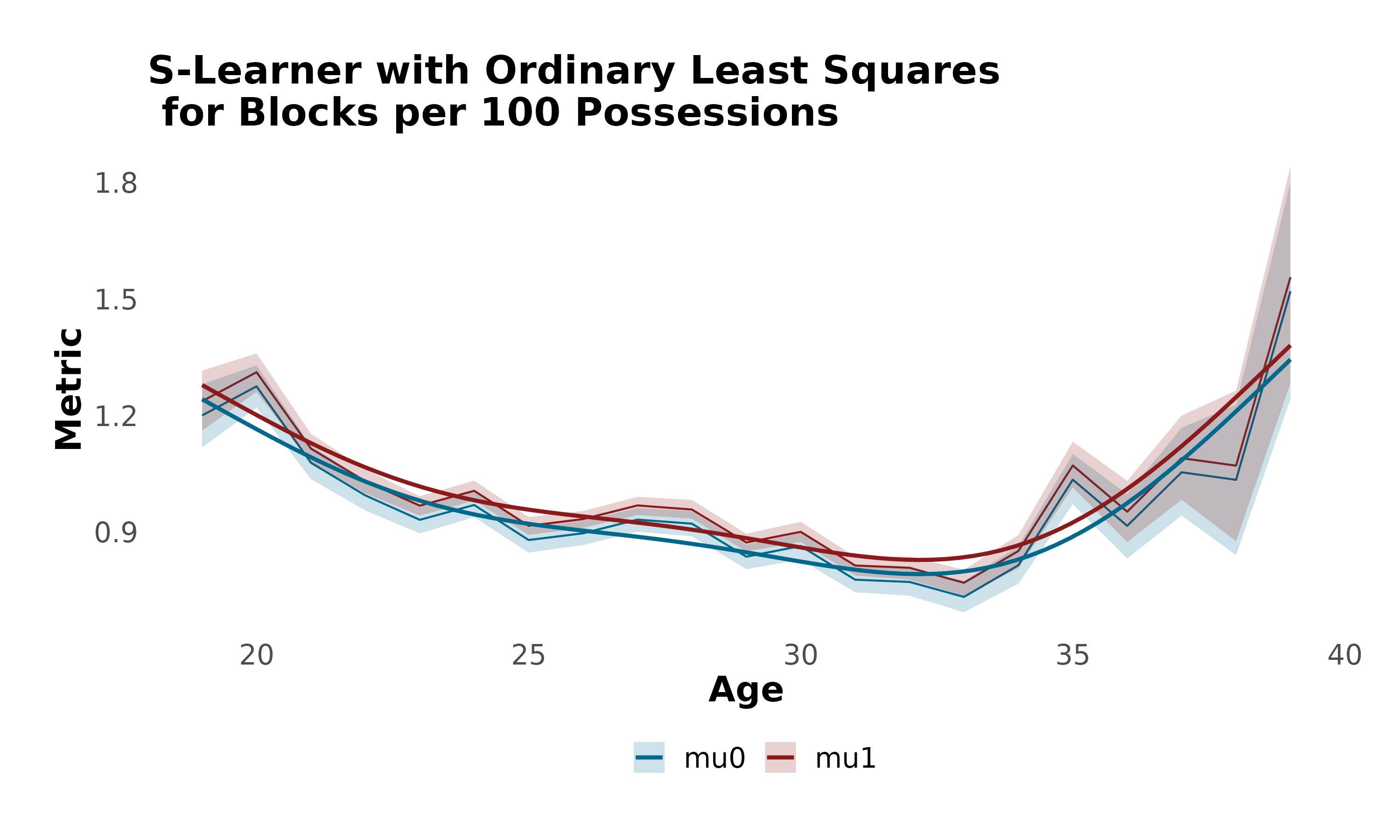}
    \includegraphics[scale=0.2]{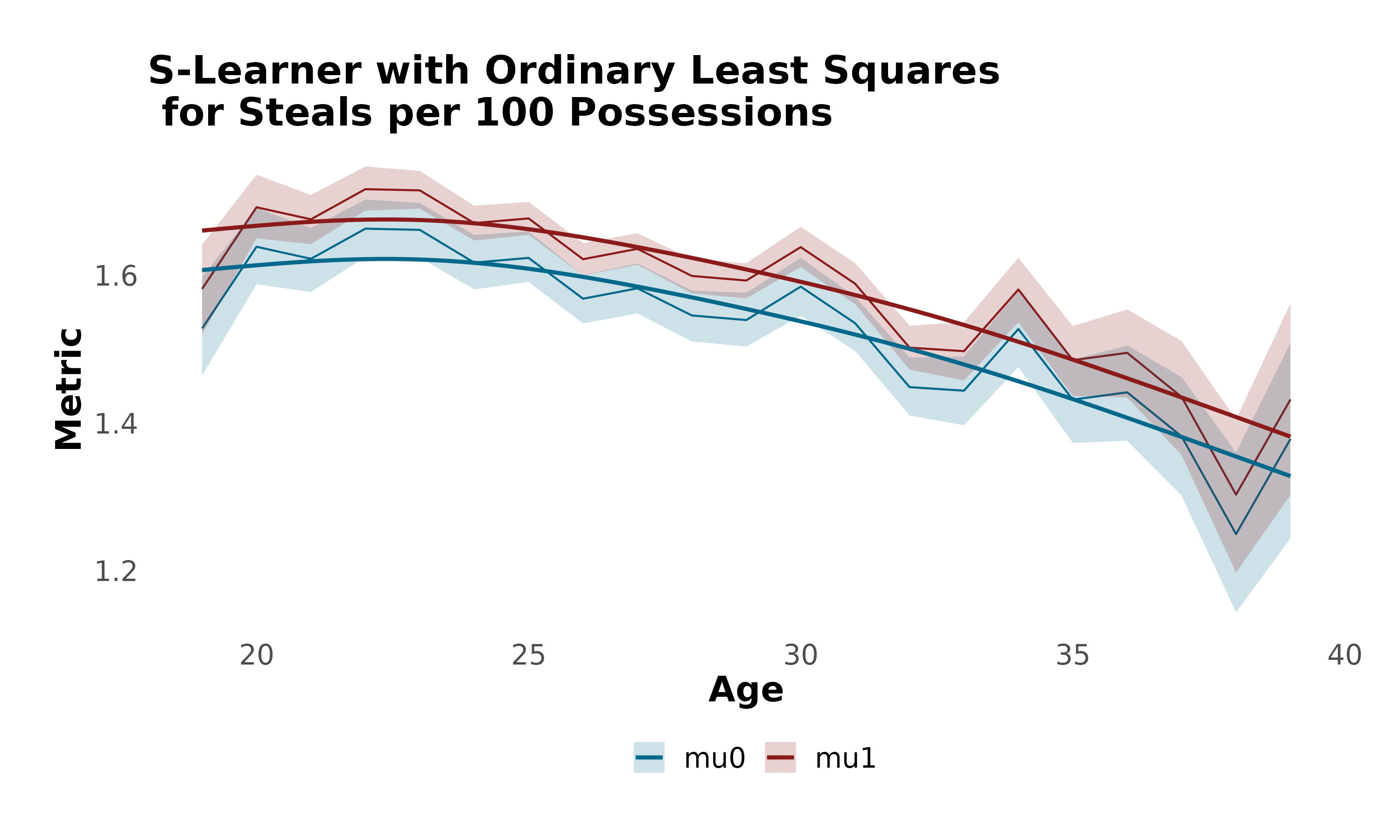}
    \includegraphics[scale=0.2]{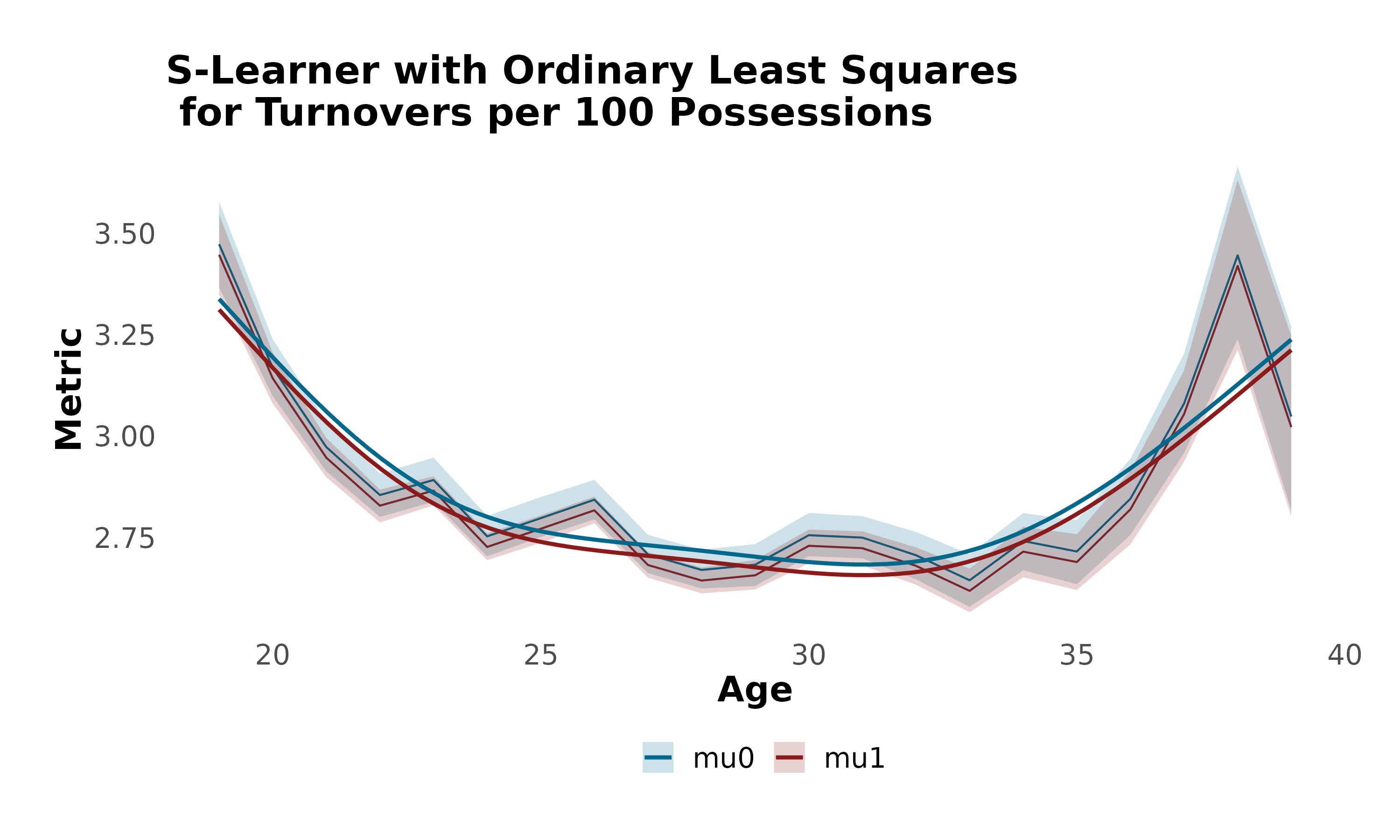}
    \caption{CEF using S-learner and OLS-Spline for box score statistics per 100 possessions}
    \label{fig:sols_box}
\end{figure}

\begin{figure}
    \centering
    \includegraphics[scale=0.2]{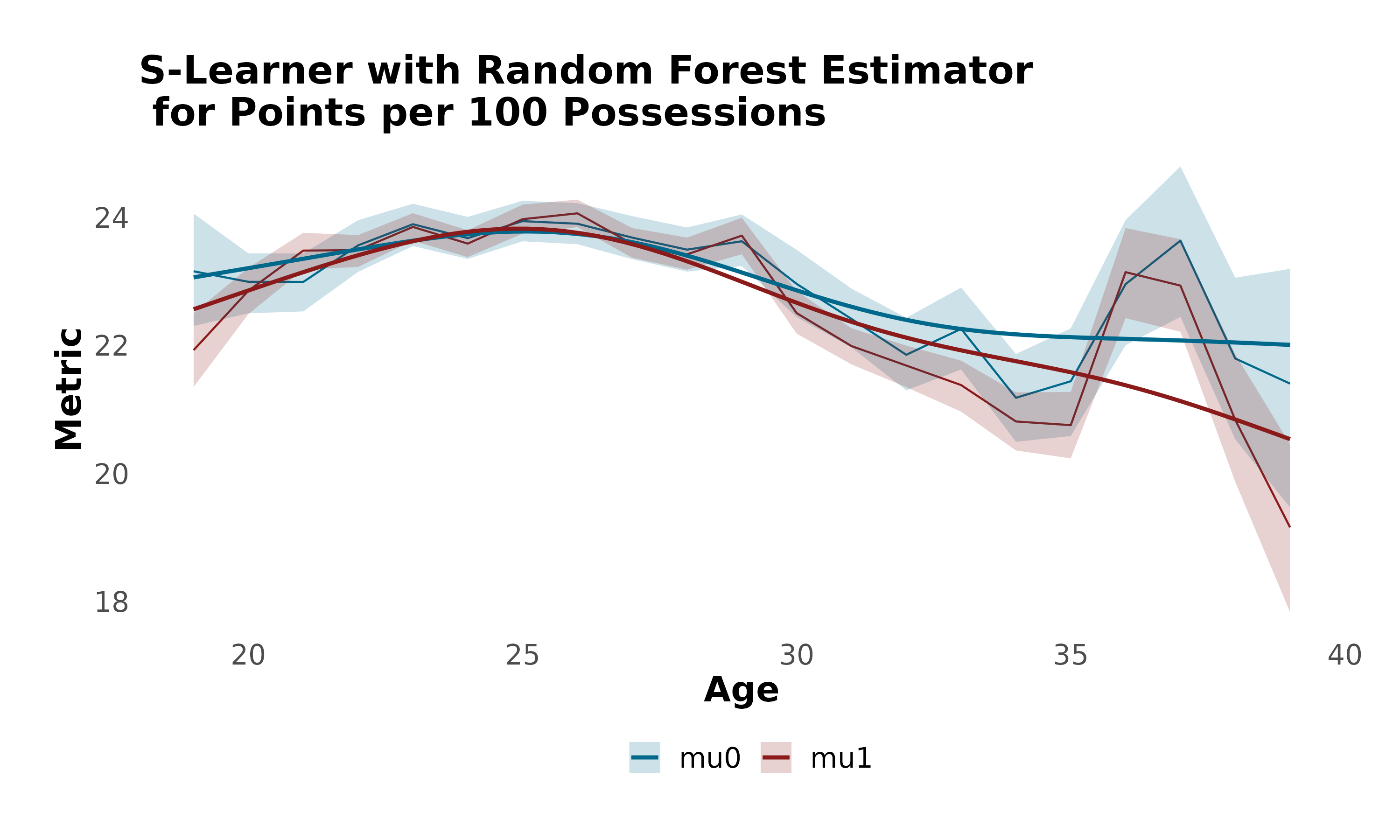} \includegraphics[scale=0.2]{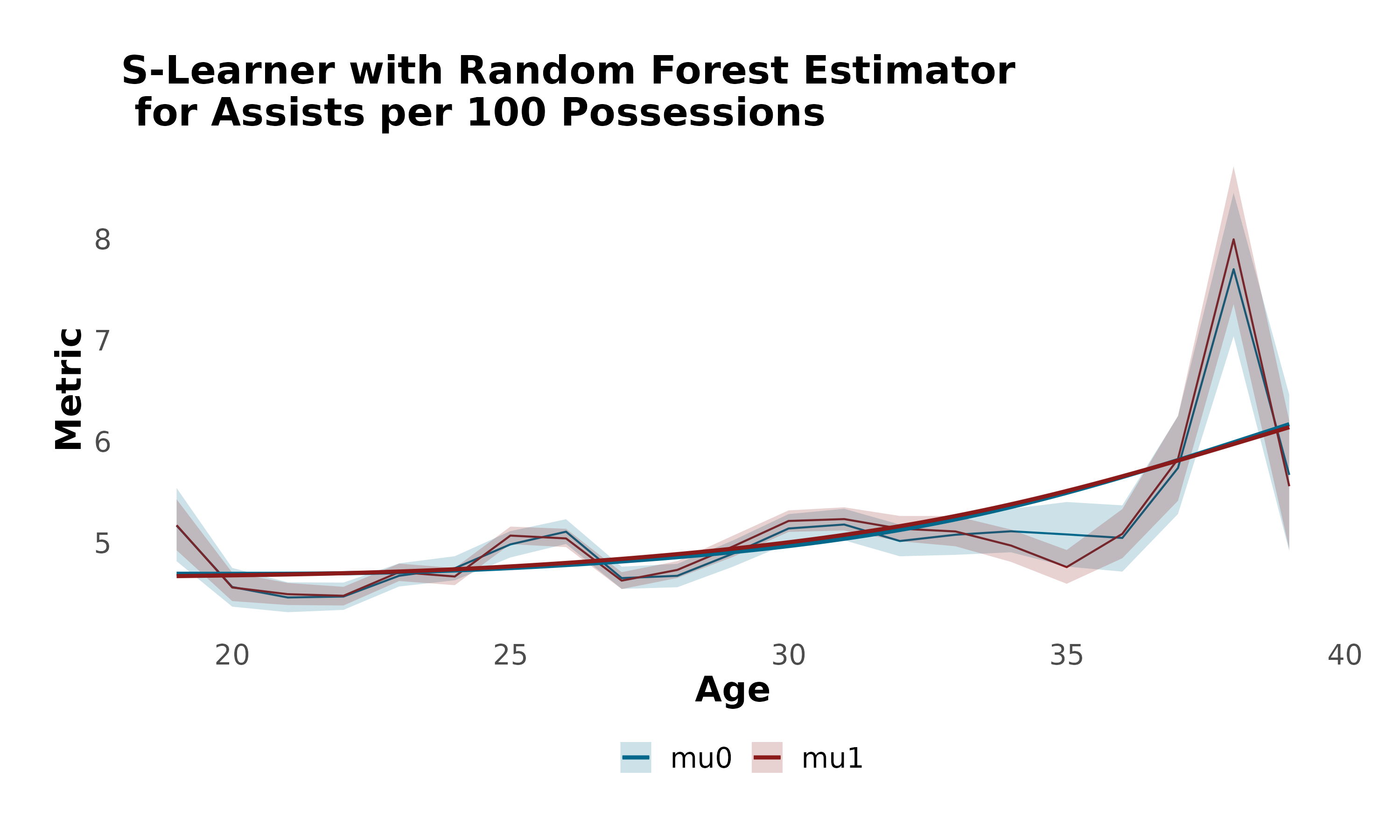}
    \includegraphics[scale=0.2]{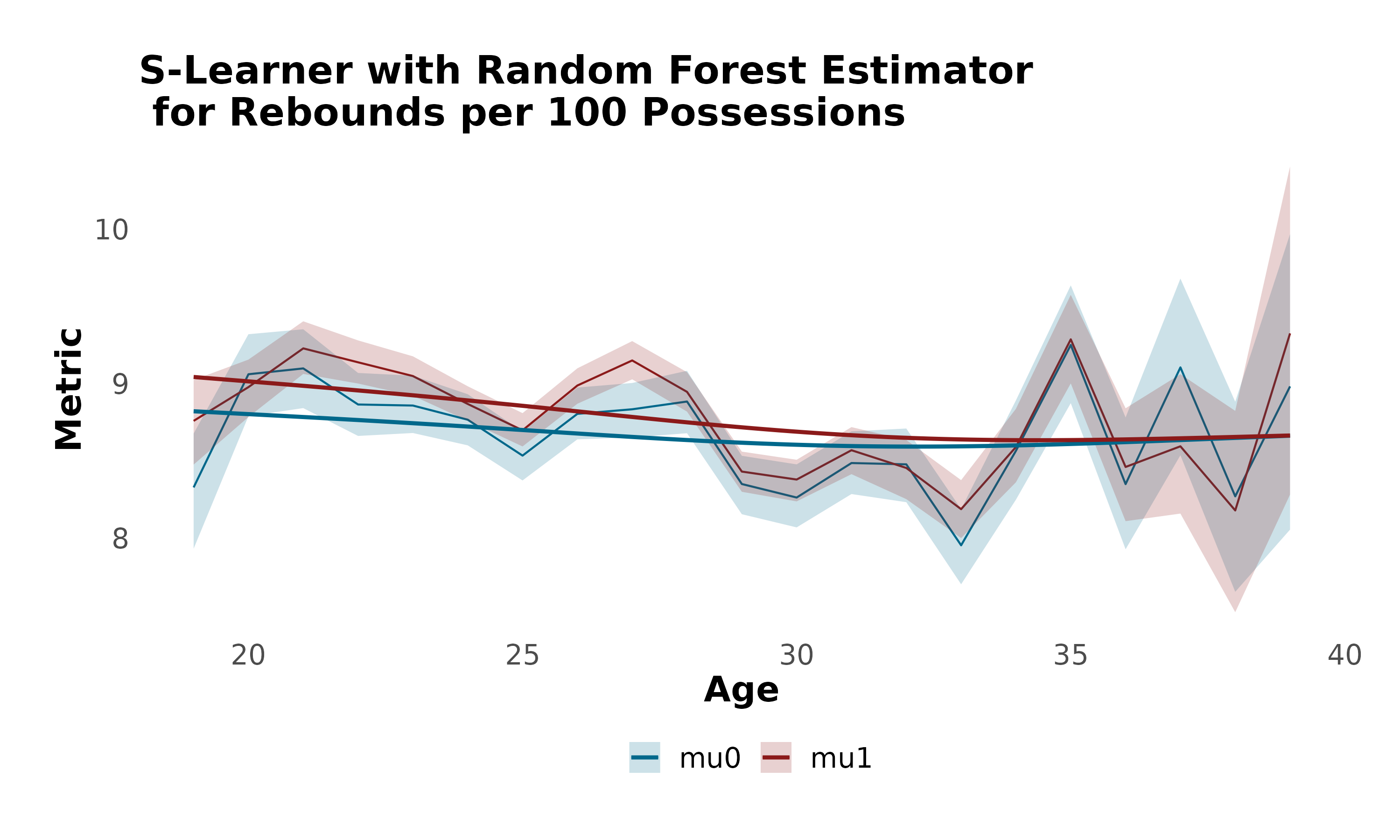}
    \includegraphics[scale=0.2]{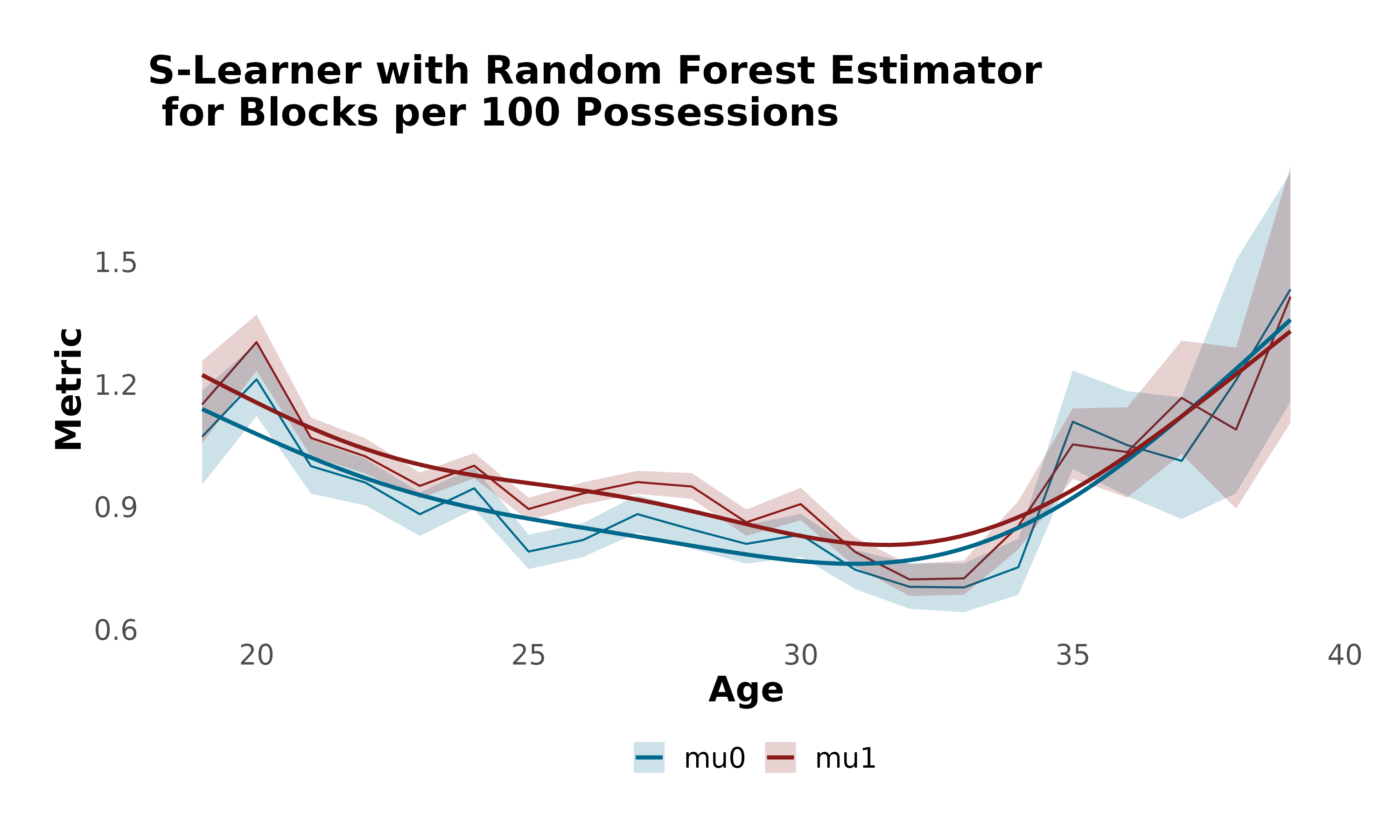}
    \includegraphics[scale=0.2]{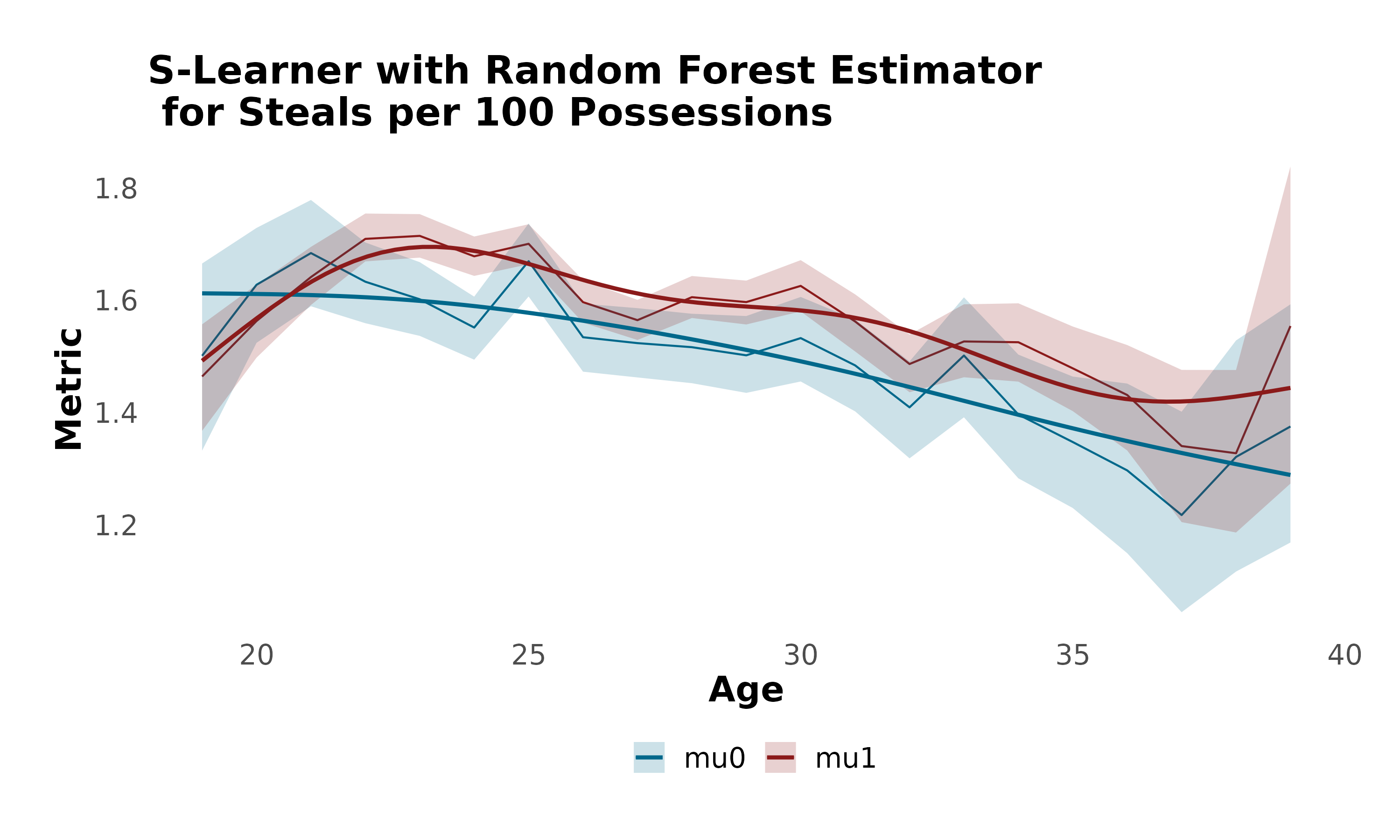}
    \includegraphics[scale=0.2]{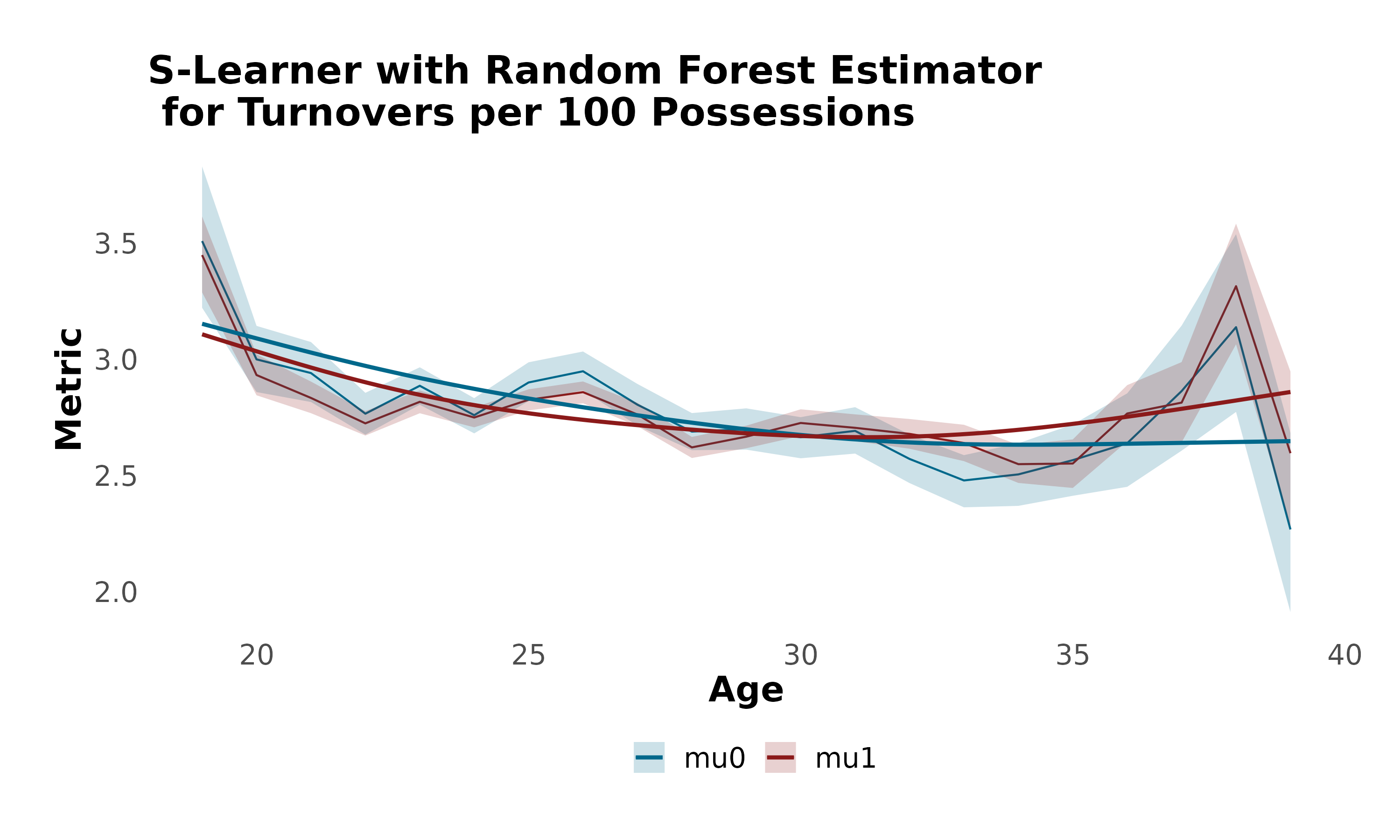}
    \caption{CEF using S-learner and RF for box score statistics per 100 possessions}
    \label{fig:srf_box}
\end{figure}

\begin{figure}
    \centering
    \includegraphics[scale=0.2]{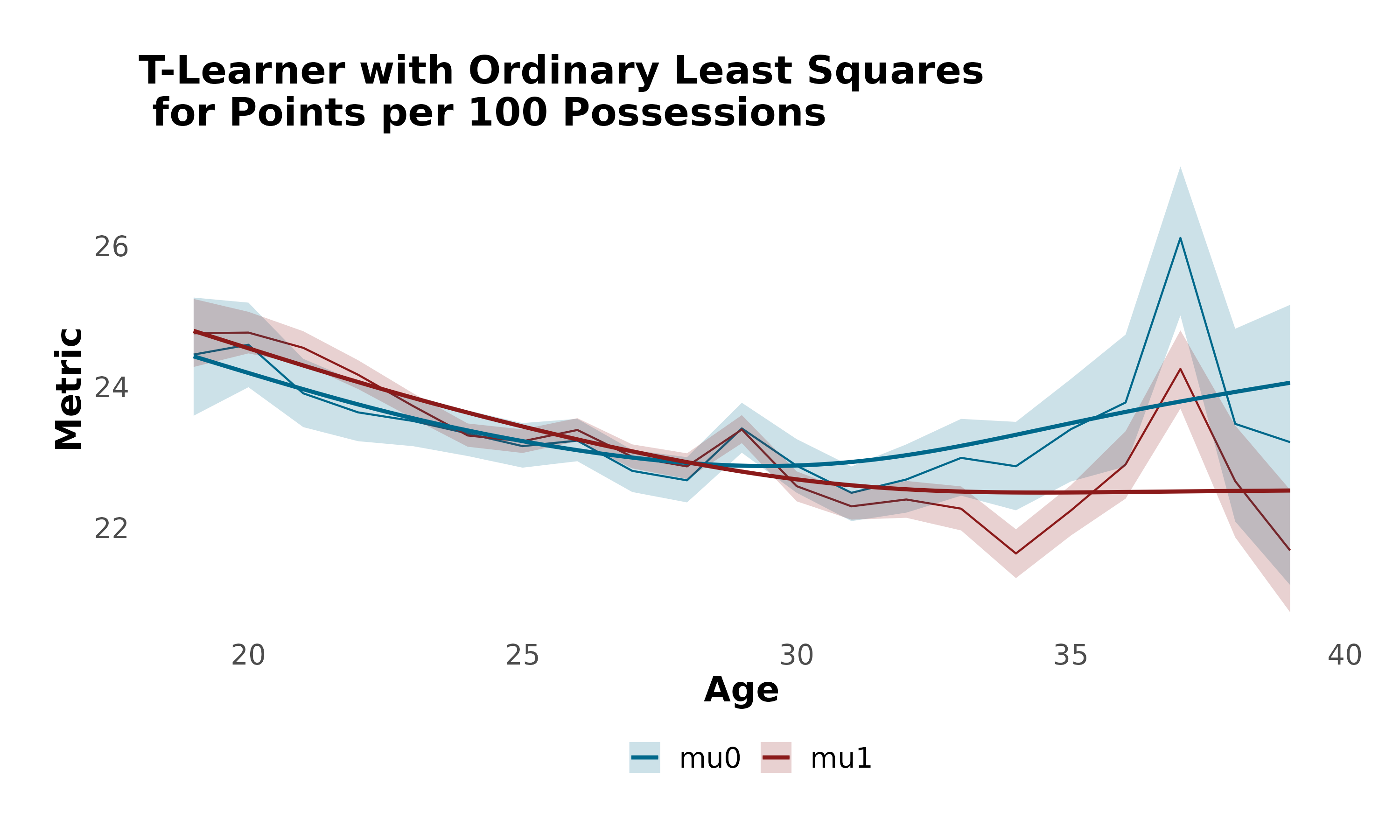} \includegraphics[scale=0.2]{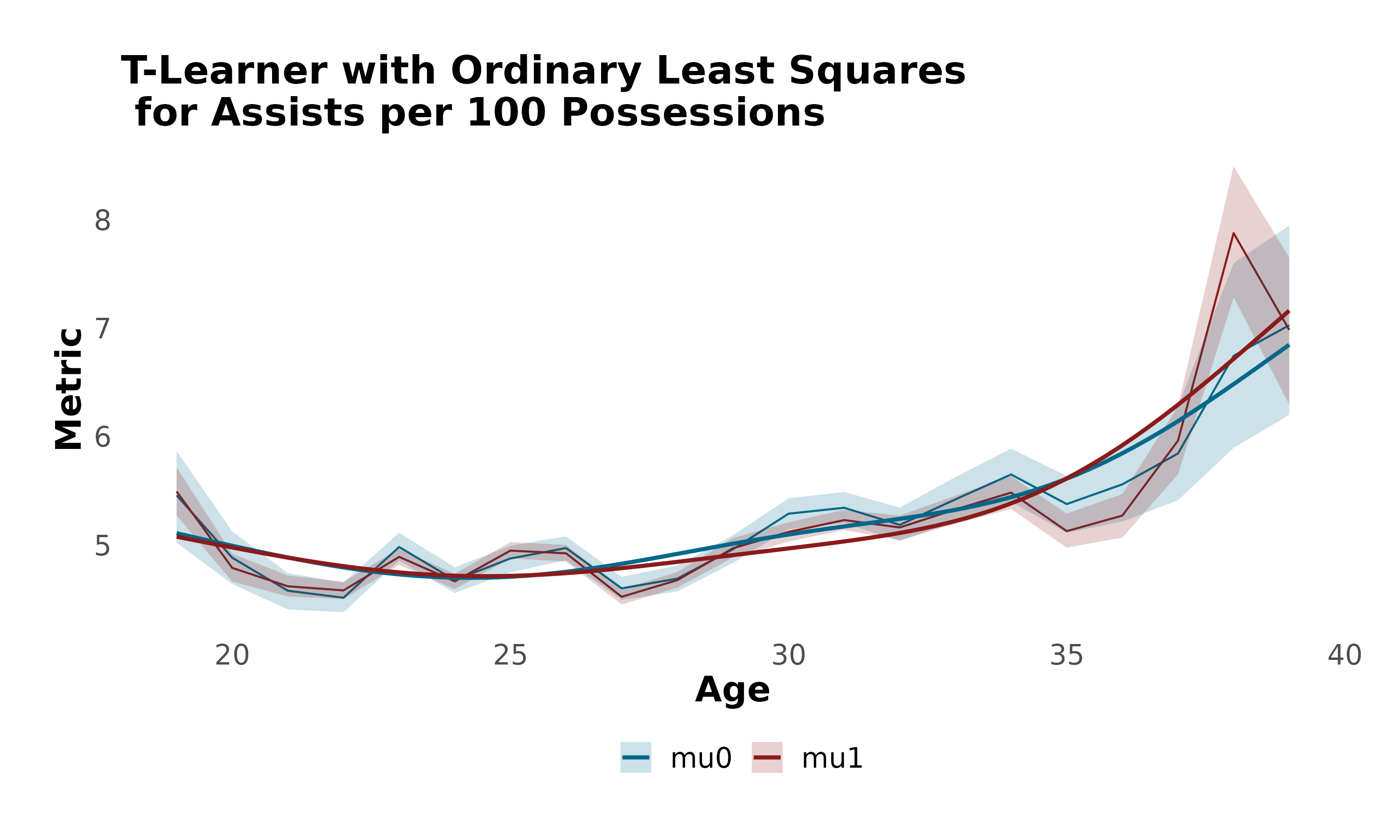}
    \includegraphics[scale=0.2]{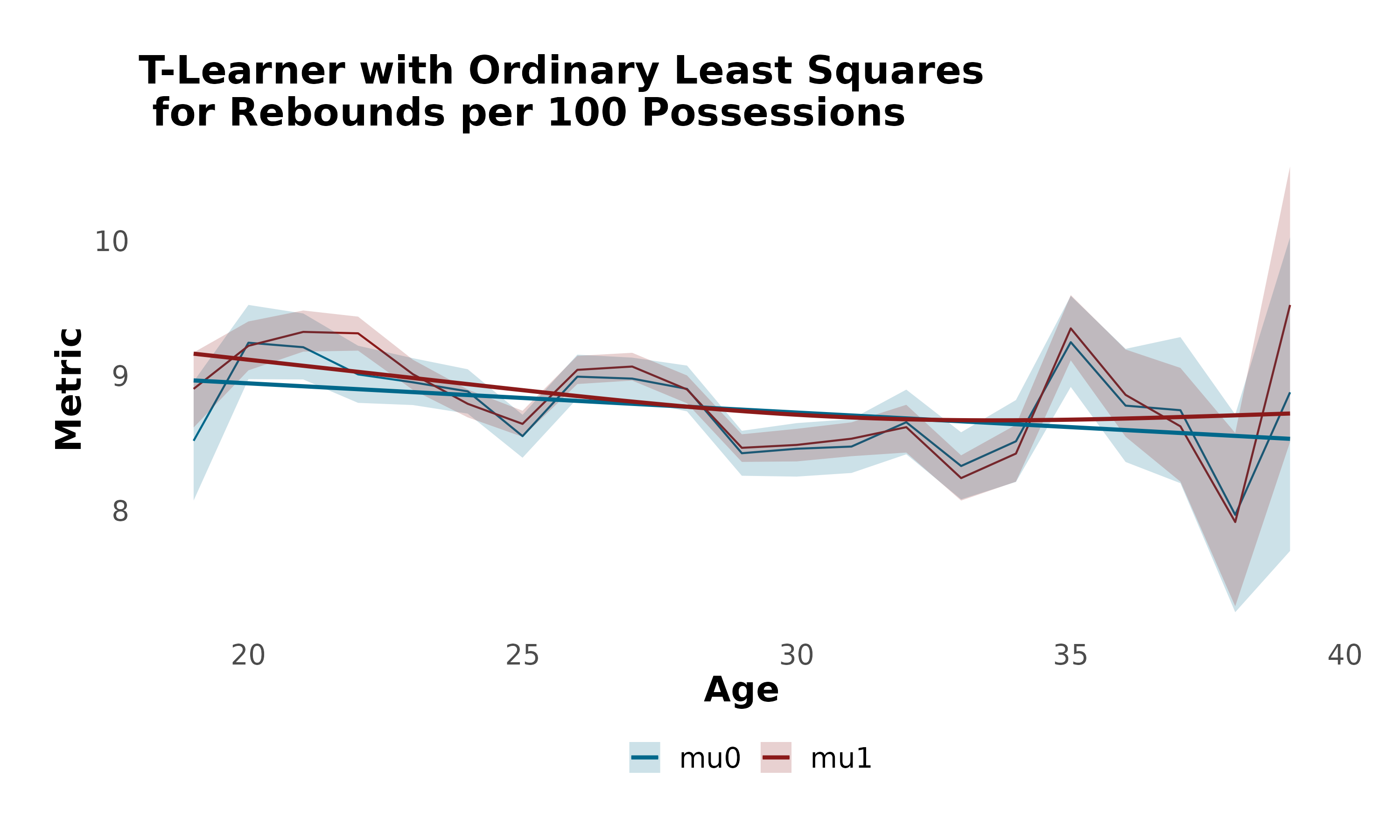}
    \includegraphics[scale=0.2]{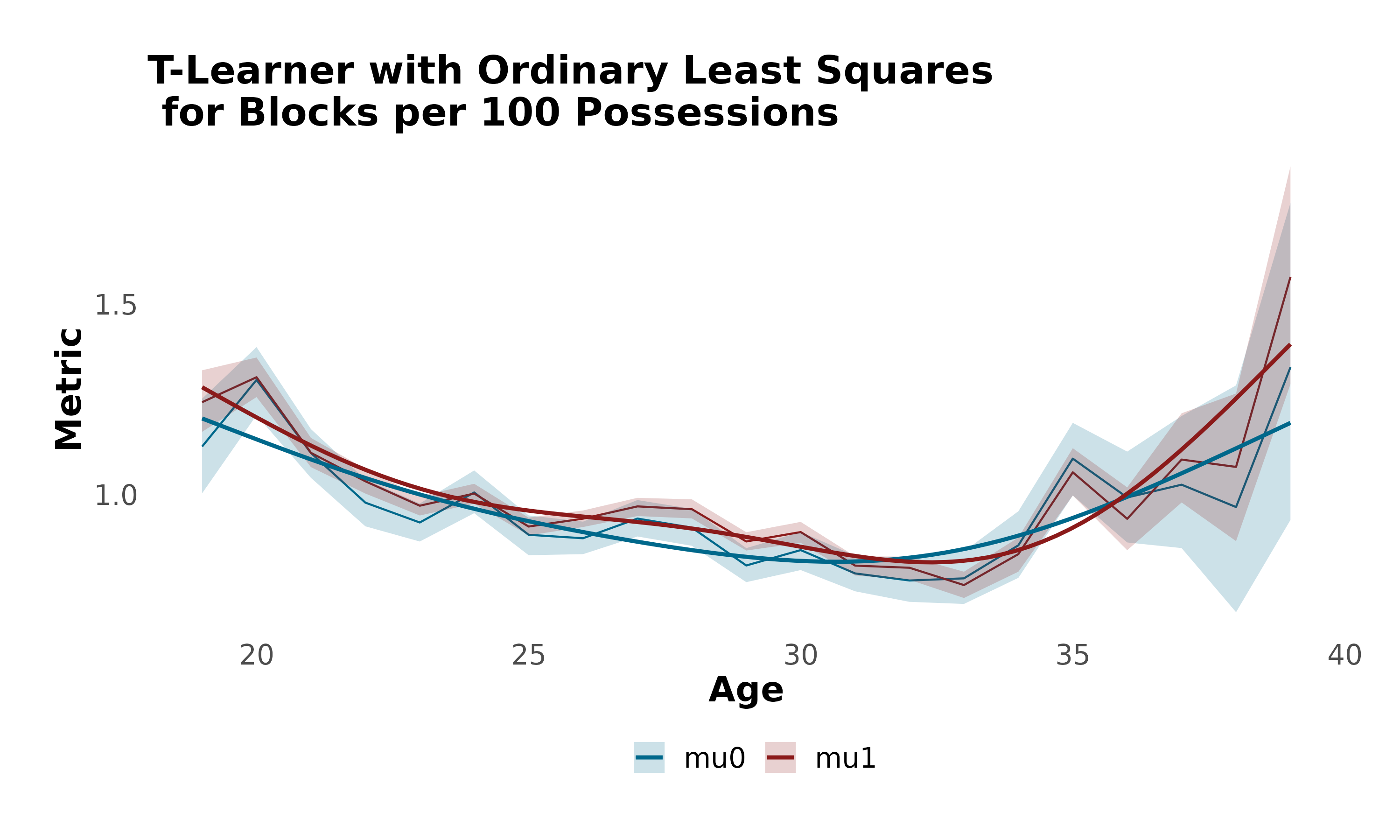}
    \includegraphics[scale=0.2]{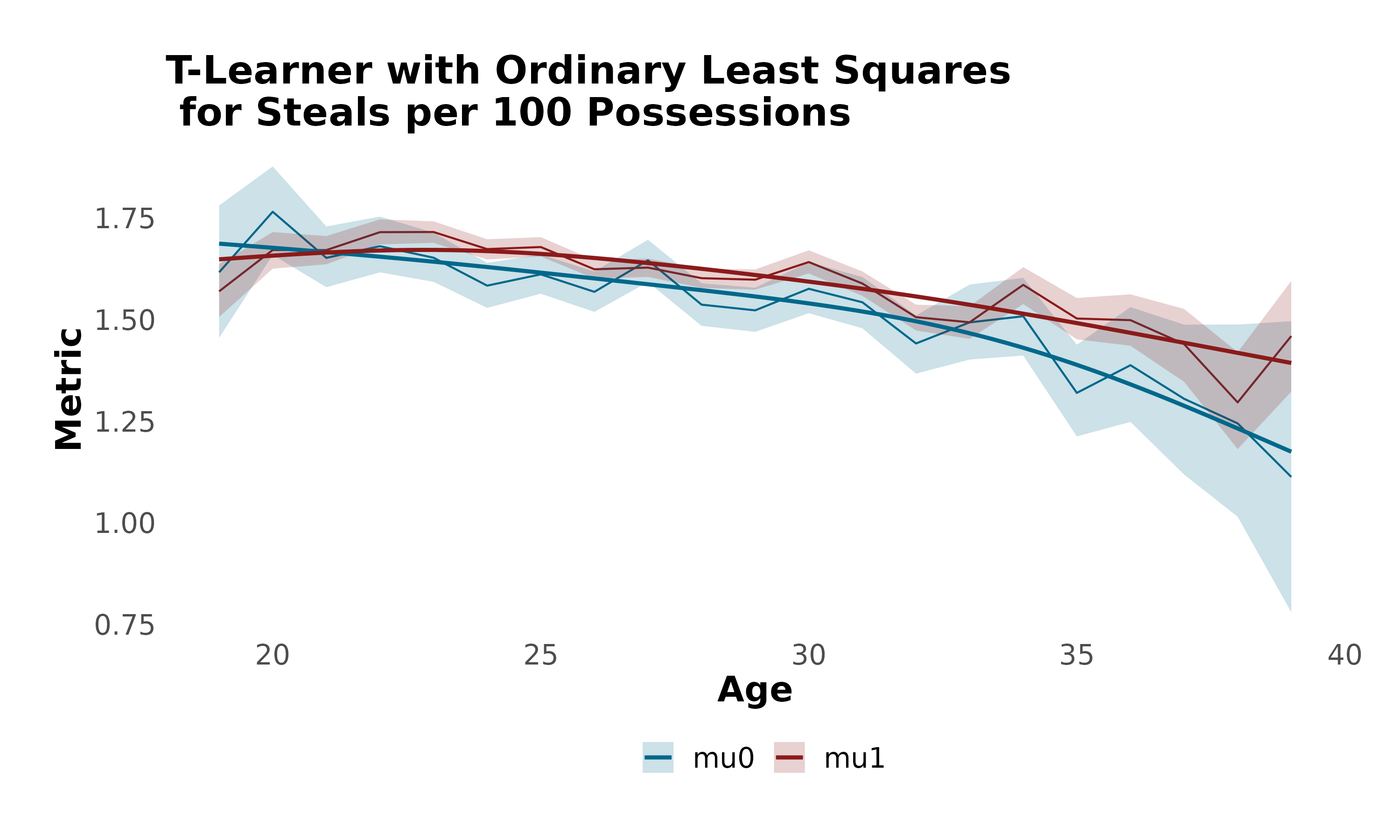}
    \includegraphics[scale=0.2]{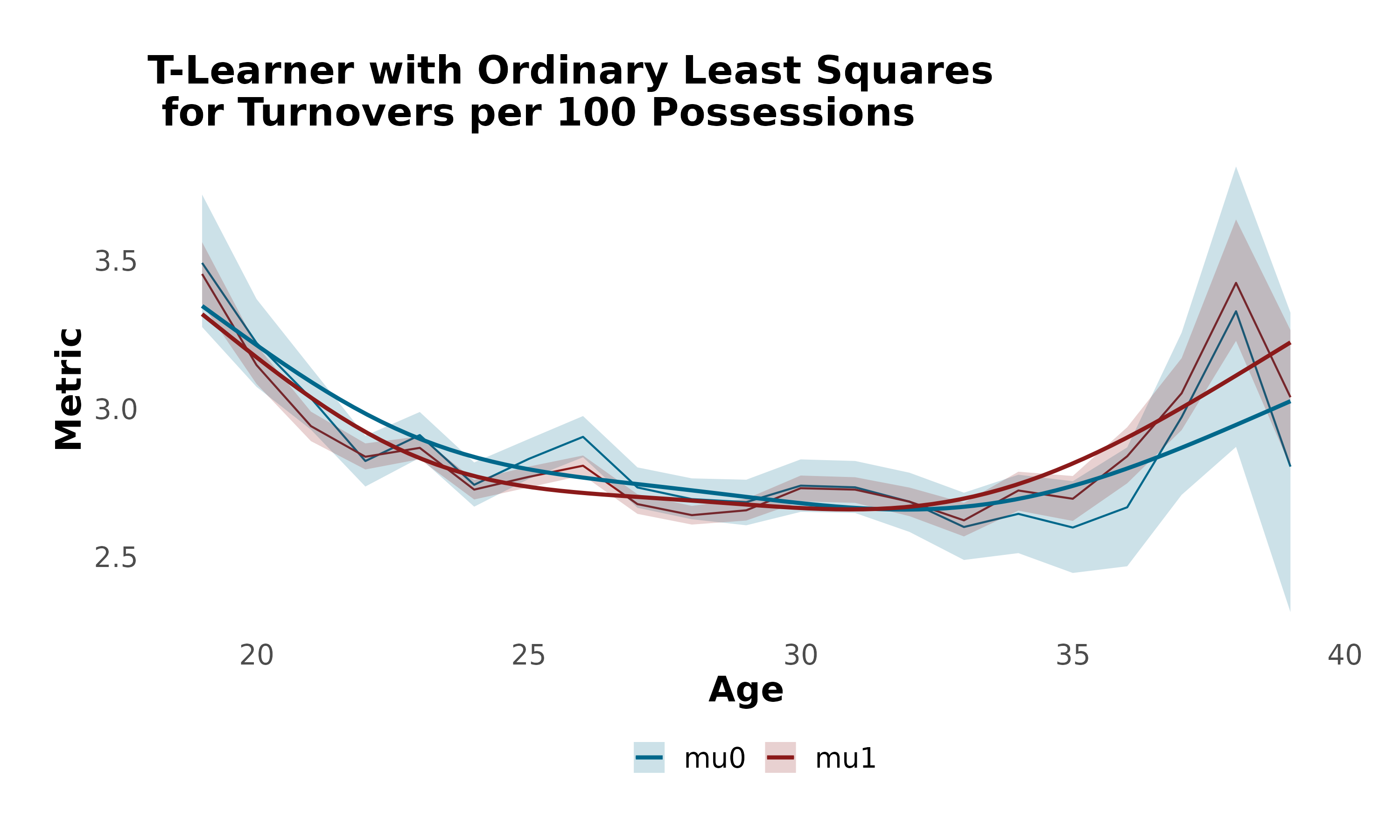}
    \caption{CEF using T-learner and OLS for box score statistics per 100 possessions}
    \label{fig:tols_box}
\end{figure}

\begin{figure}
    \centering
    \includegraphics[scale=0.2]{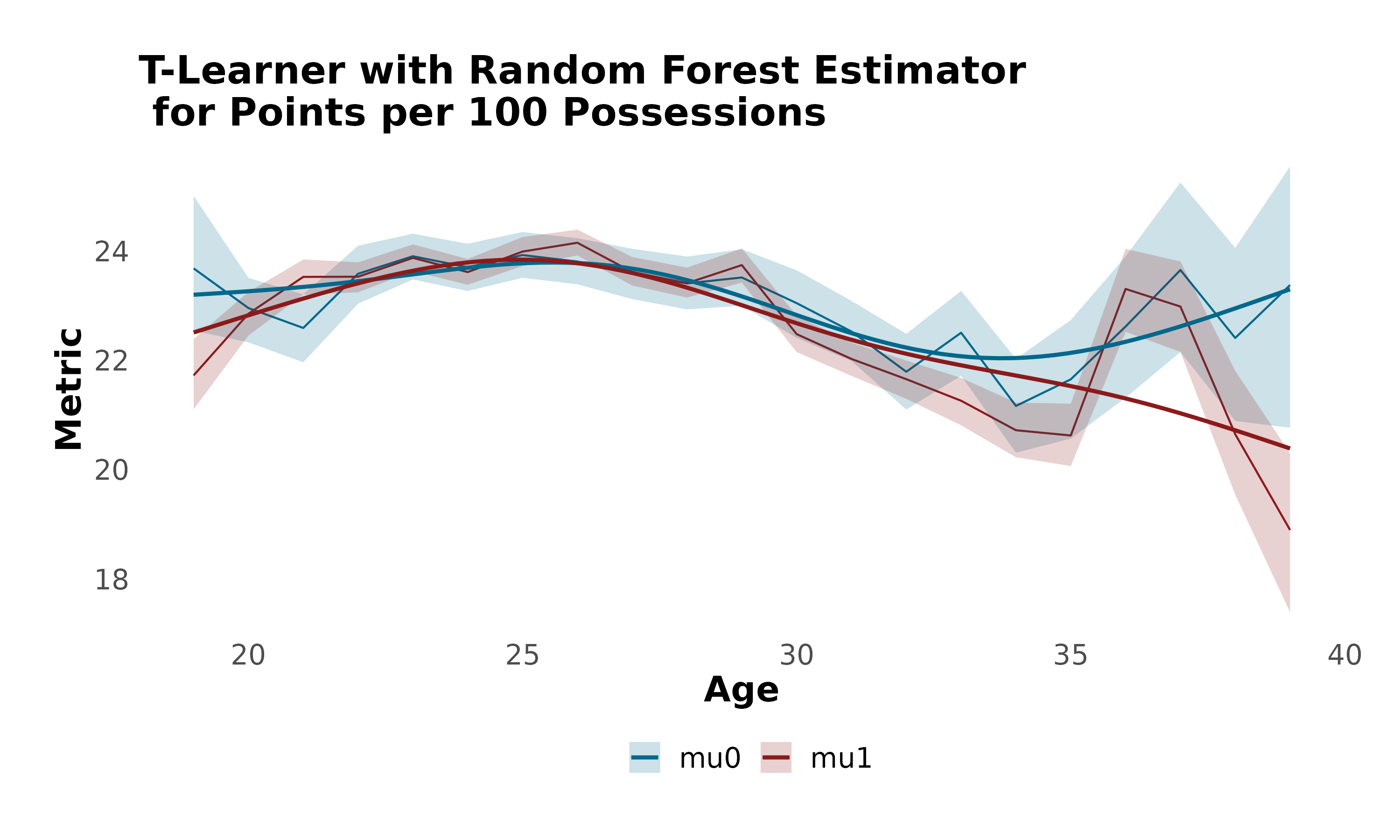} \includegraphics[scale=0.2]{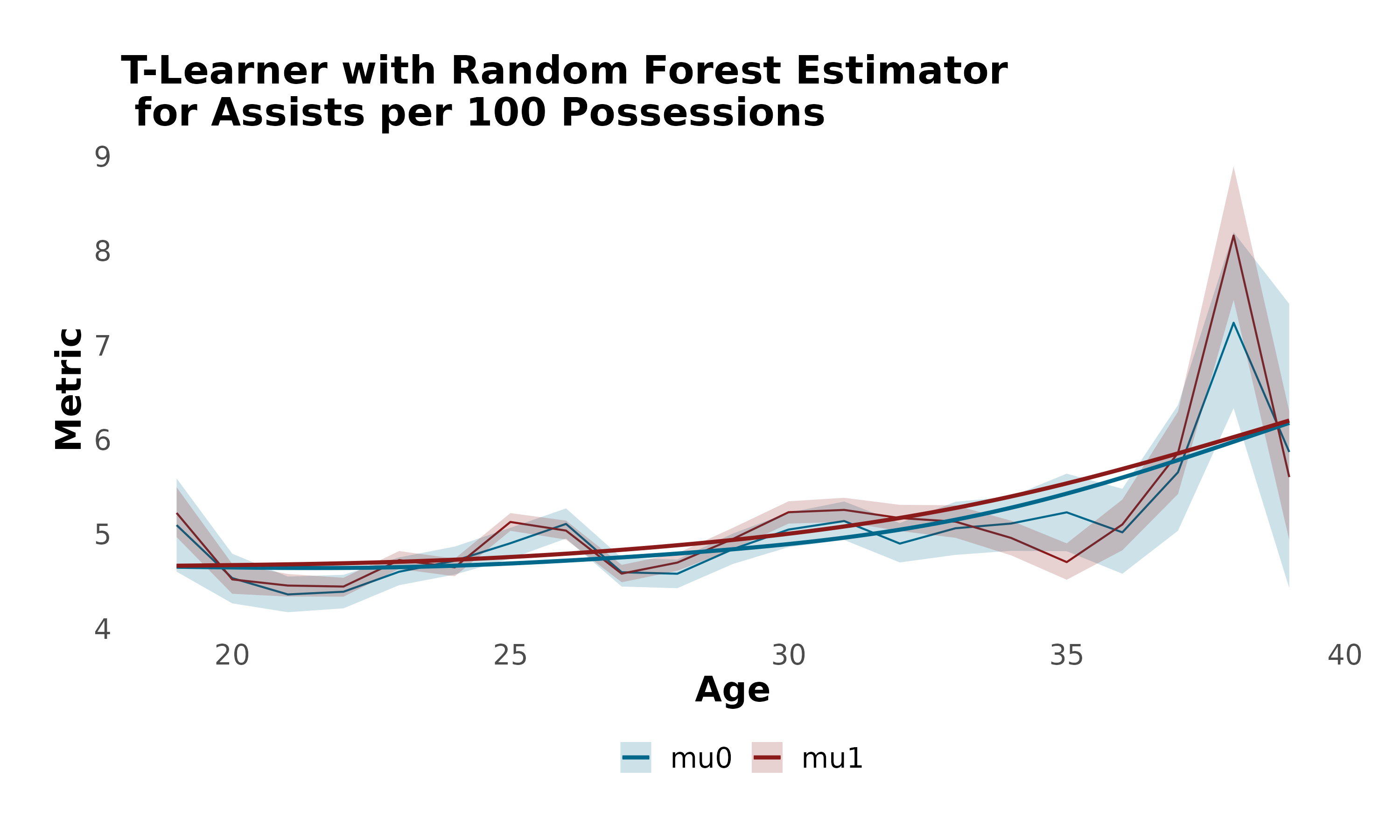}
    \includegraphics[scale=0.2]{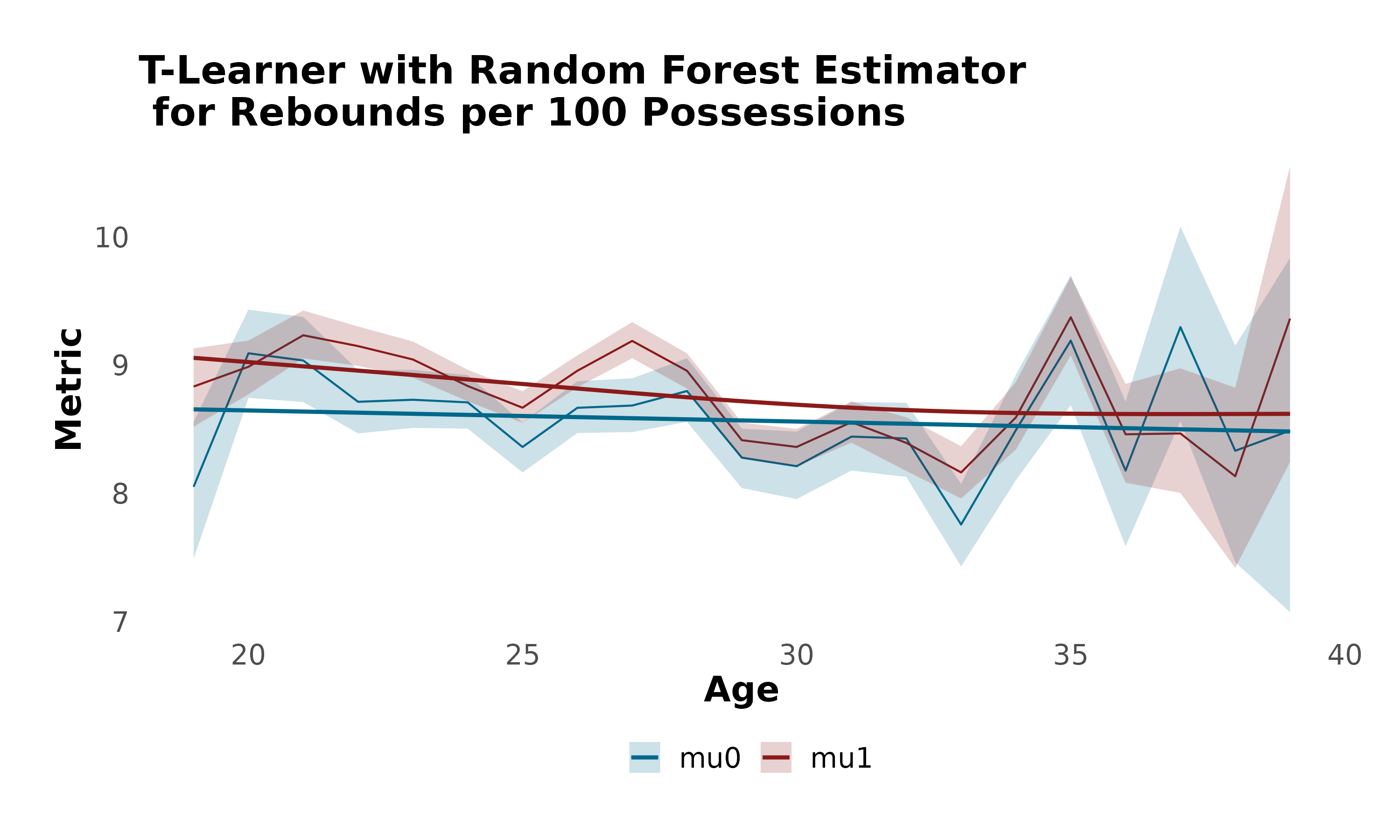}
    \includegraphics[scale=0.2]{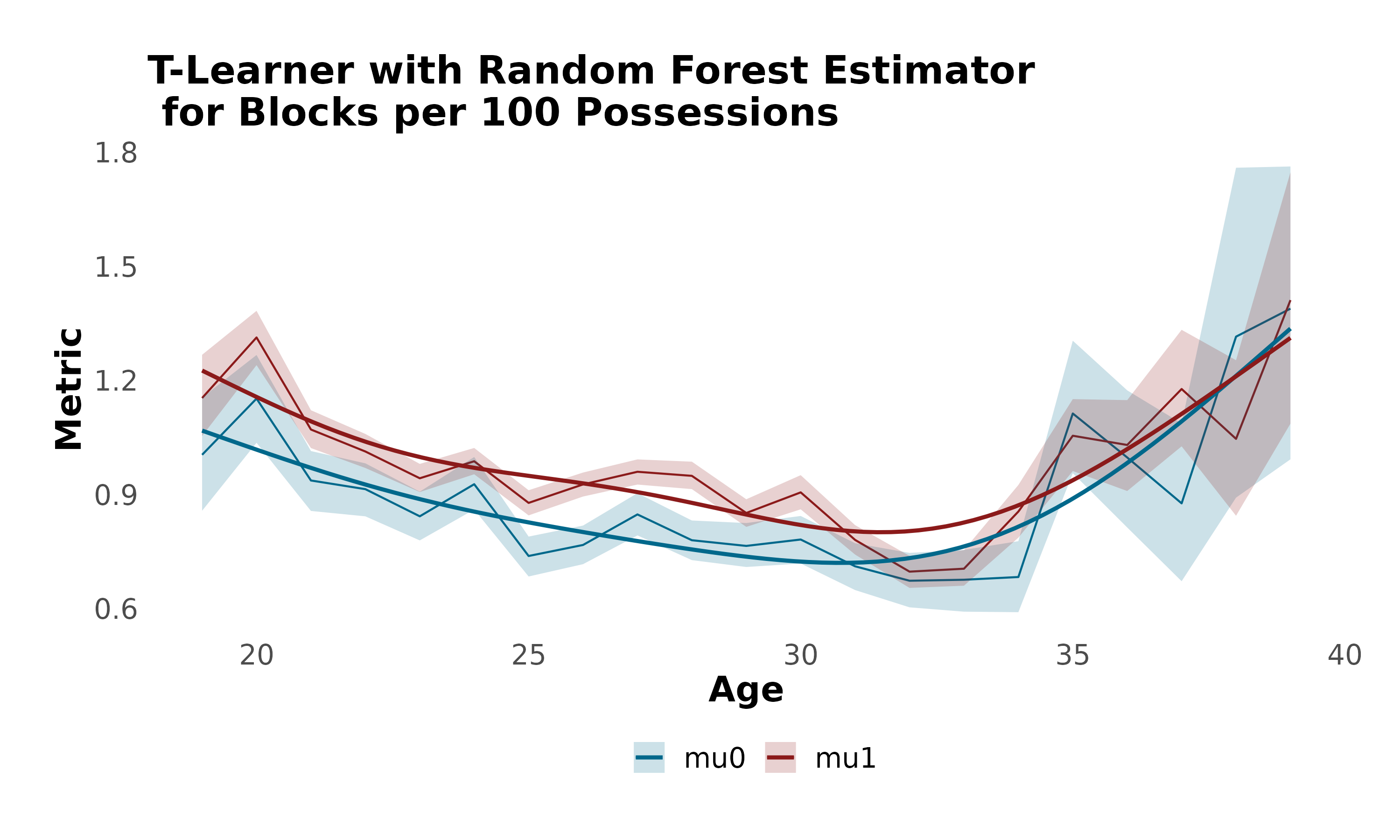}
    \includegraphics[scale=0.2]{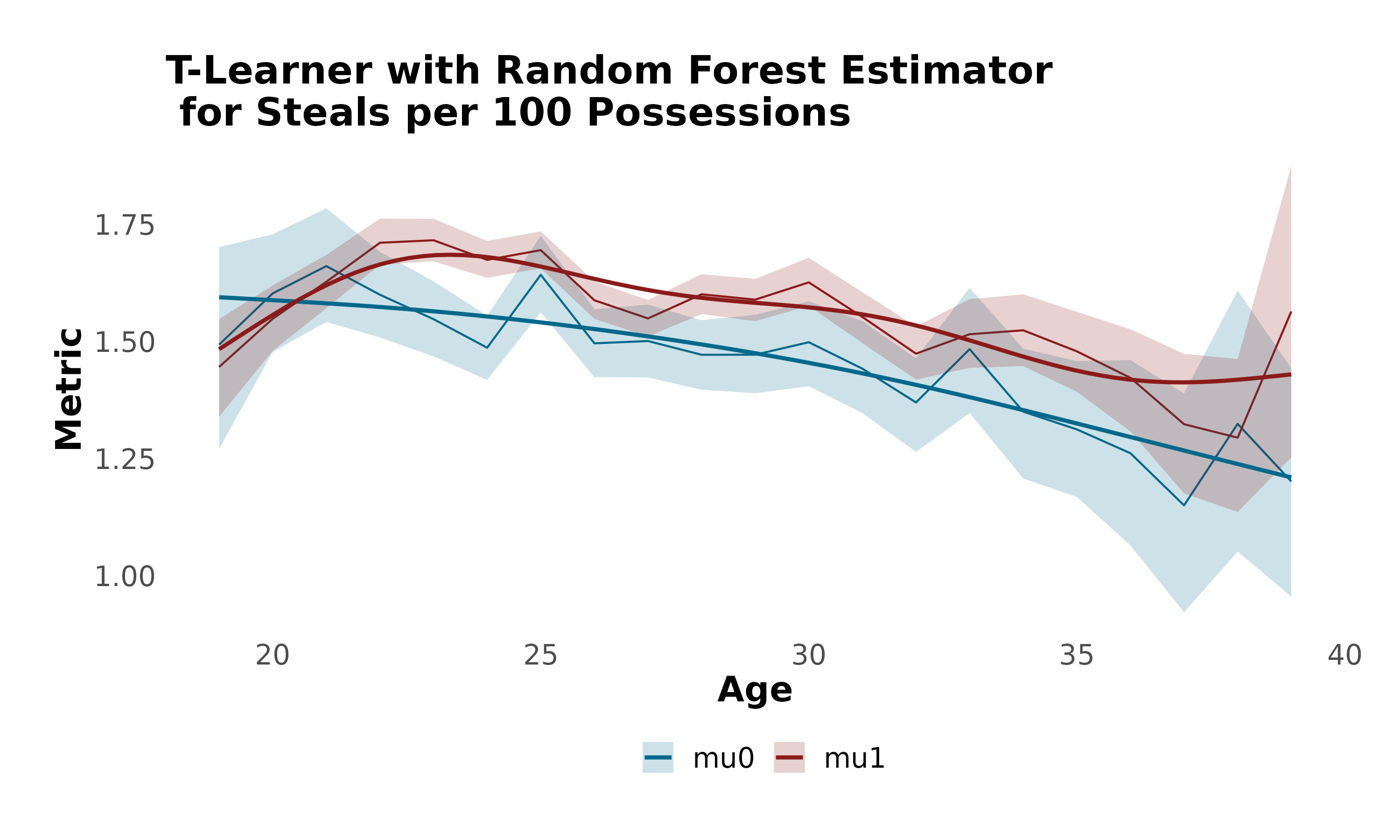}
    \includegraphics[scale=0.2]{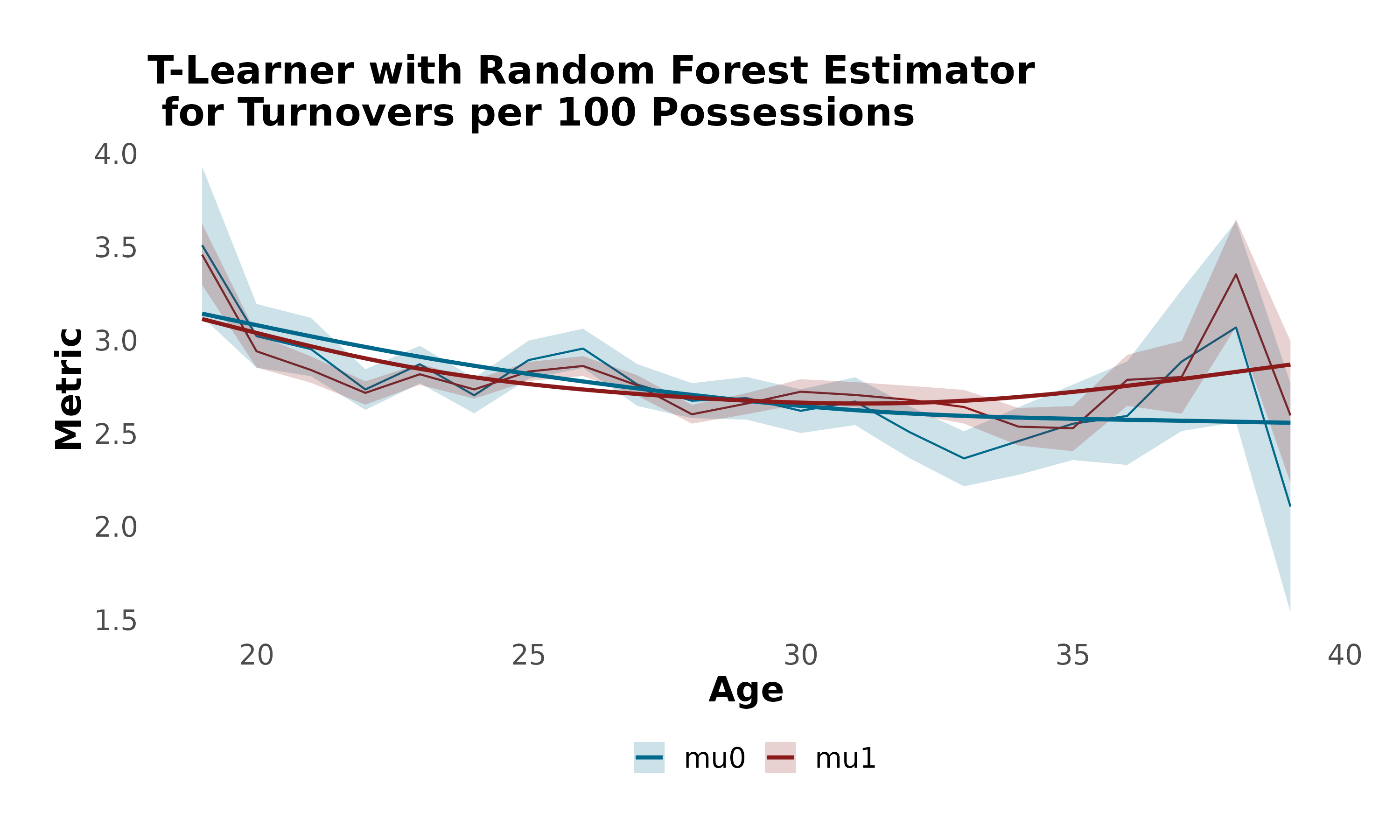}
    \caption{CEF using T-learner and RF for box score statistics per 100 possessions}
    \label{fig:trf_box}
\end{figure}

Finally, we analyze the shooting statistics. Field Goal Percentage, Three-Point Field Goal Percentage, and True Shooting Percentage are essential metrics in basketball for assessing shooting efficiency. Field Goal Percentage is calculated by dividing the number of successful field goals (any shot made from within the three-point line) by the total number of field goal attempts. Three-point field Goal Percentage focuses on shots made from beyond the three-point line, calculated similarly by dividing successful three-point field goals by three-point attempts. True Shooting Percentage is a more comprehensive measure, accounting for field goals, three-point field goals, and free throws, offering a complete picture of a player's shooting efficiency by considering the value of each shot type. It's calculated using a formula that adjusts for the fact that three-point field goals are worth more than other field goals and include free throws.

In Figures \ref{fig:sols_shooting}, \ref{fig:srf_shooting}, \ref{fig:tols_shooting}, and \ref{fig:trf_shooting},
we analyze the impact of rest on various shooting metrics, represented by four distinct curves. Generally, rest influences shooting percentages; however, for three-point attempts, this effect is less pronounced compared to the other two metrics. Specifically, for regular Field Goal Percentage and True Shooting Percentage, the influence of rest appears to be significant predominantly among younger players. However, for older players, the data presents a degree of uncertainty and erratic patterns. This inconsistency can be attributed to the limited sample size for this age group, leading to imprecise curve estimations that are vulnerable to skewing by outliers. The observed trend suggests that scoring two-point field goals, which often involve more physical exertion due to closer proximity to the basket and defensive pressure, is more affected by rest than shooting three-pointers, which are typically less physically demanding. This distinction could explain why rest has a more pronounced effect on FG and TS percentage compared to three-point shooting.

\begin{figure}
    \centering
    \includegraphics[scale=0.2]{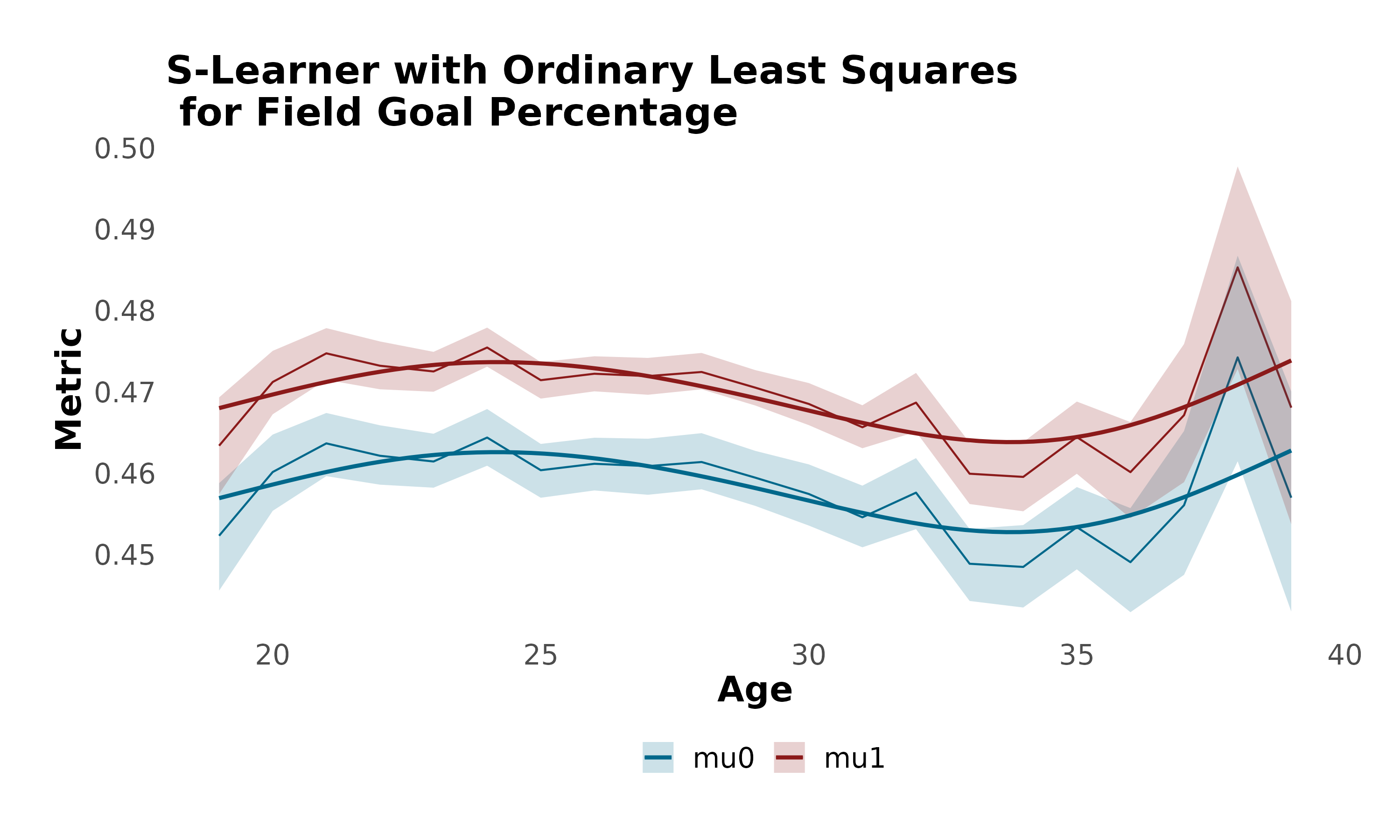} \includegraphics[scale=0.2]{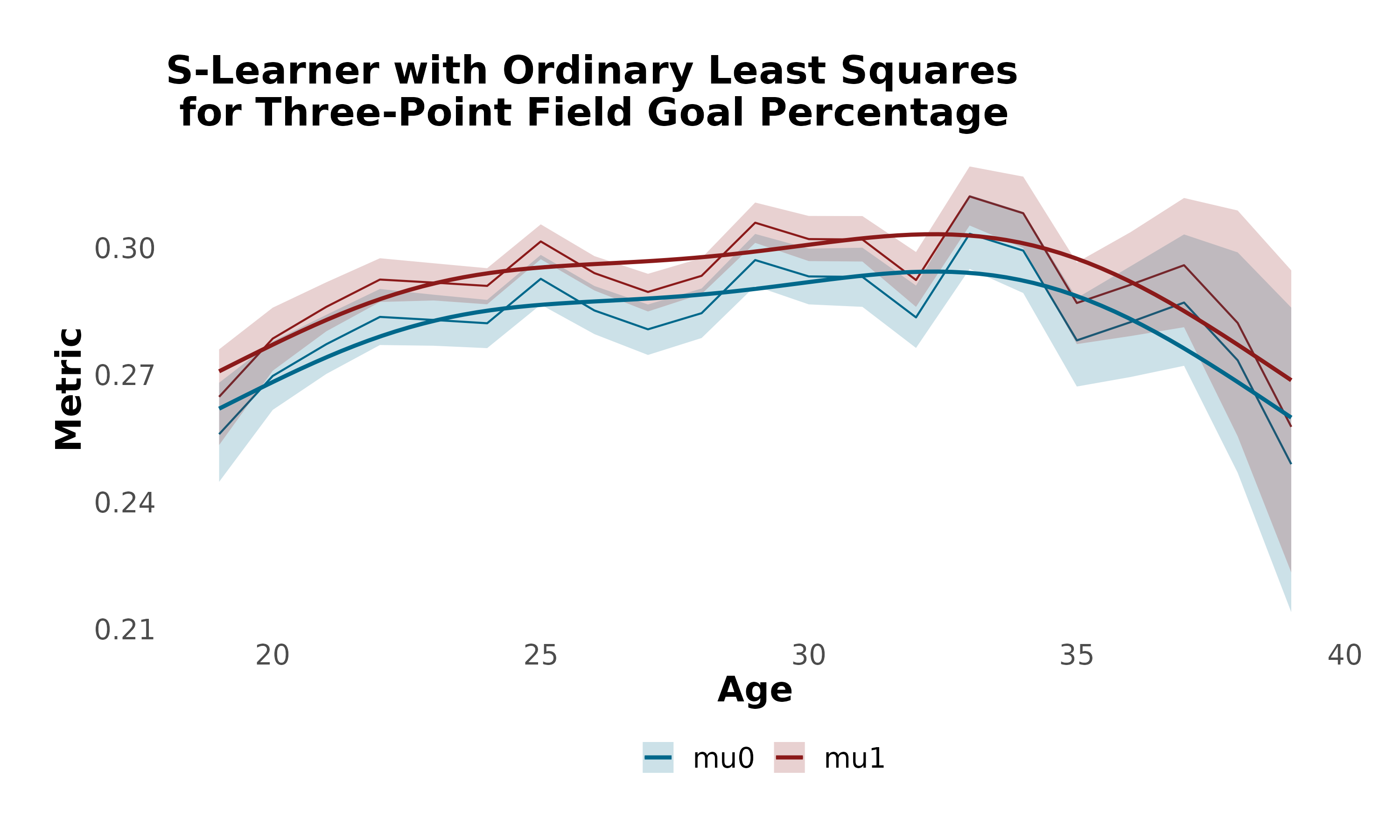}
    \includegraphics[scale=0.2]{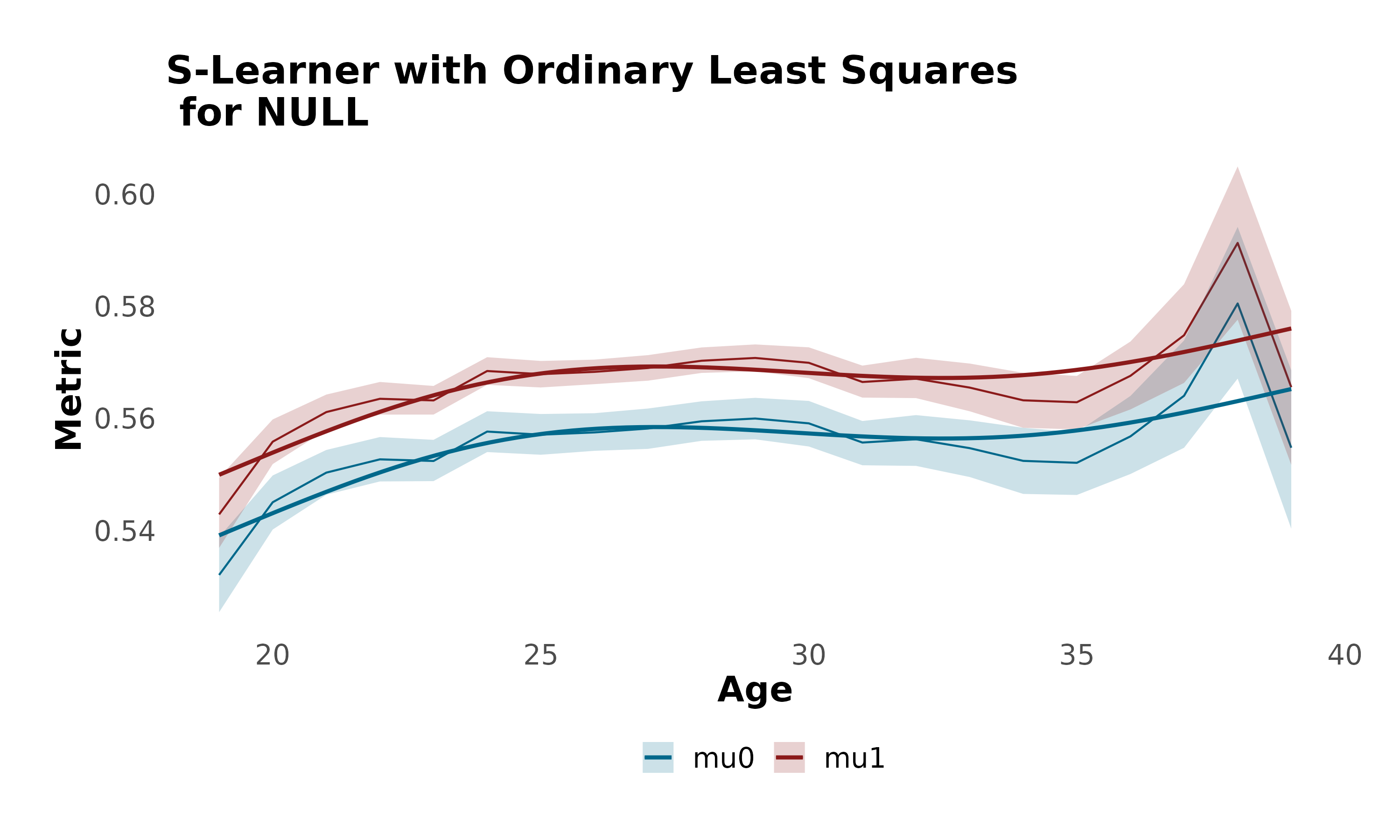}
    \caption{CEF using S-learner and OLS-Spline for the field goal, three-point, and true shooting percentage}
    \label{fig:sols_shooting}
\end{figure}

\begin{figure}
    \centering
    \includegraphics[scale=0.2]{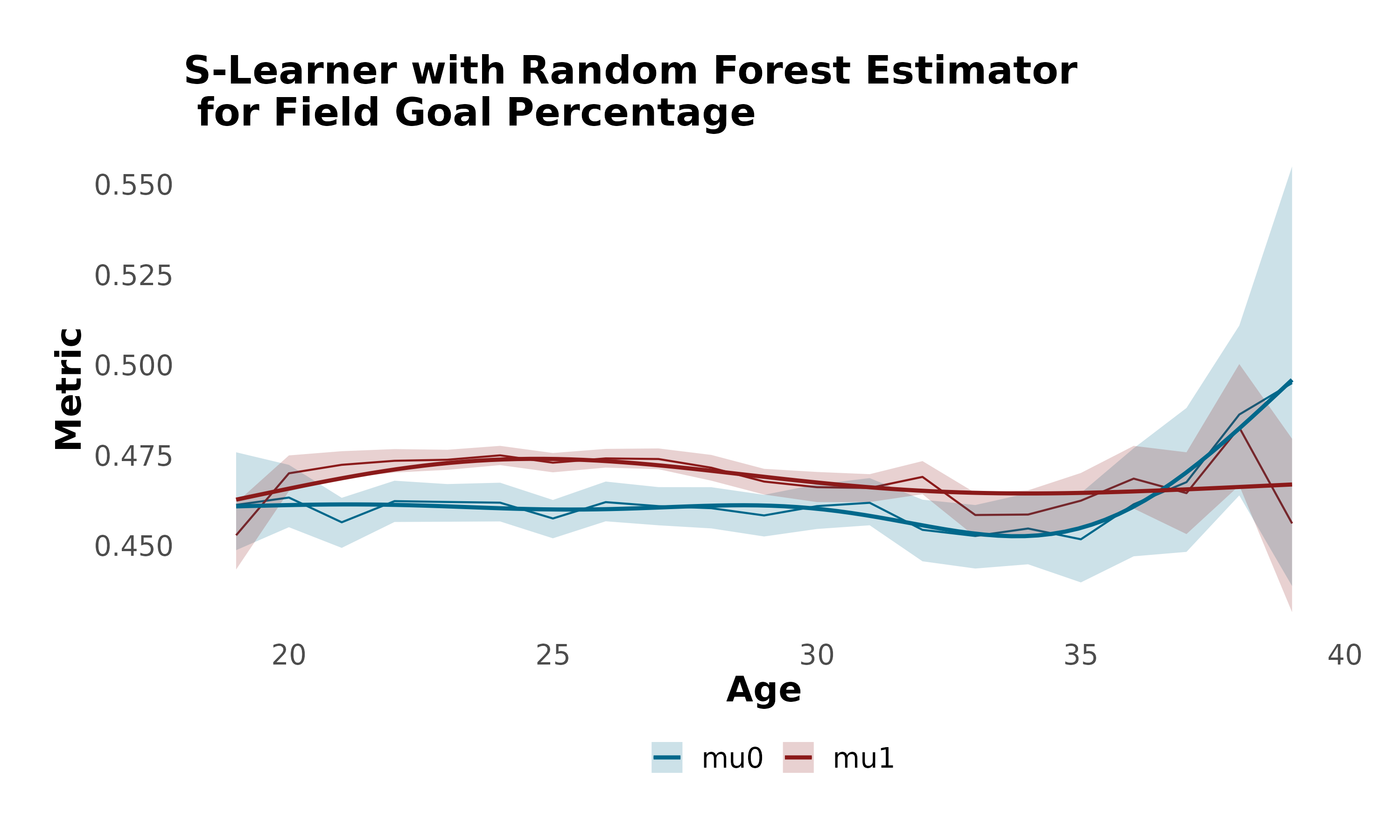} \includegraphics[scale=0.2]{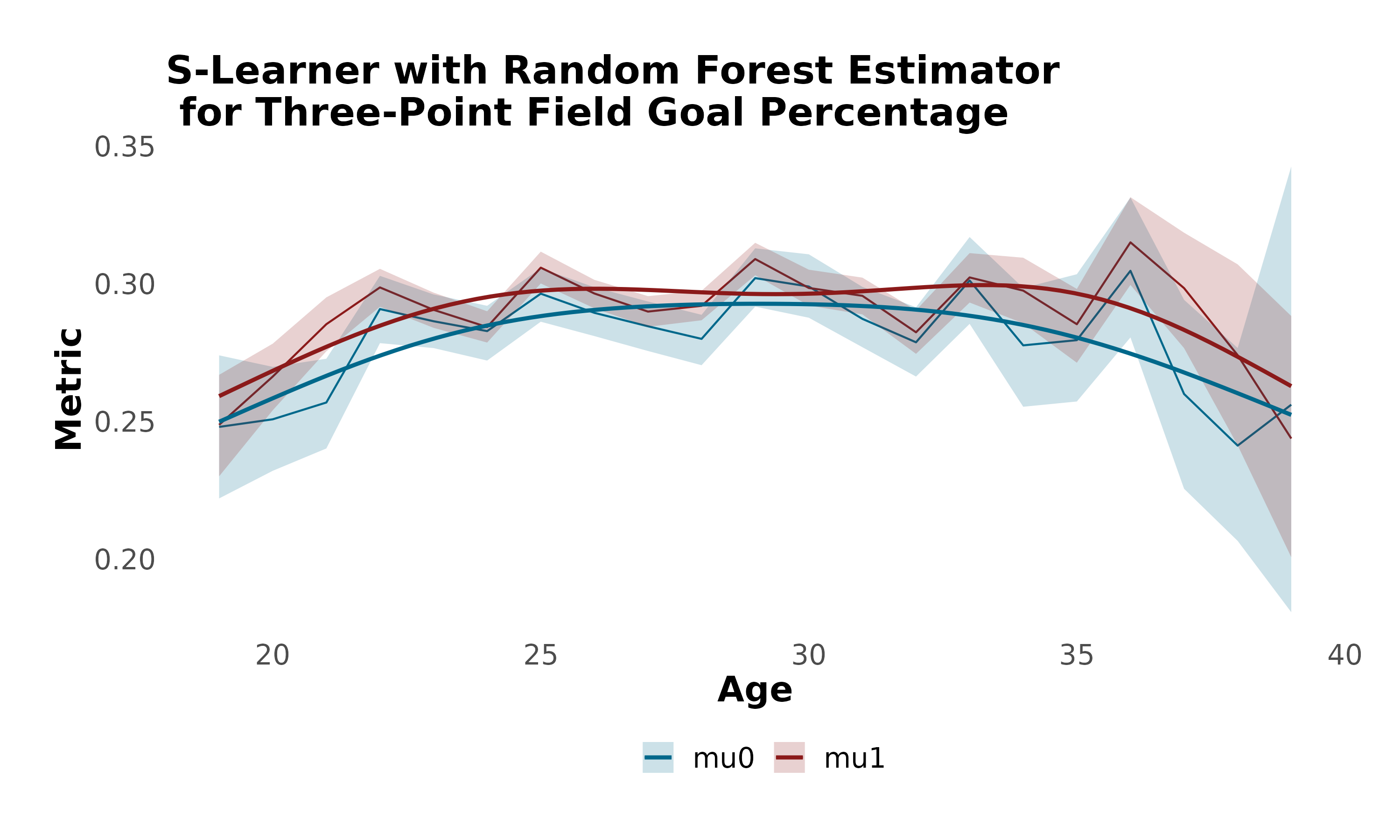}
    \includegraphics[scale=0.2]{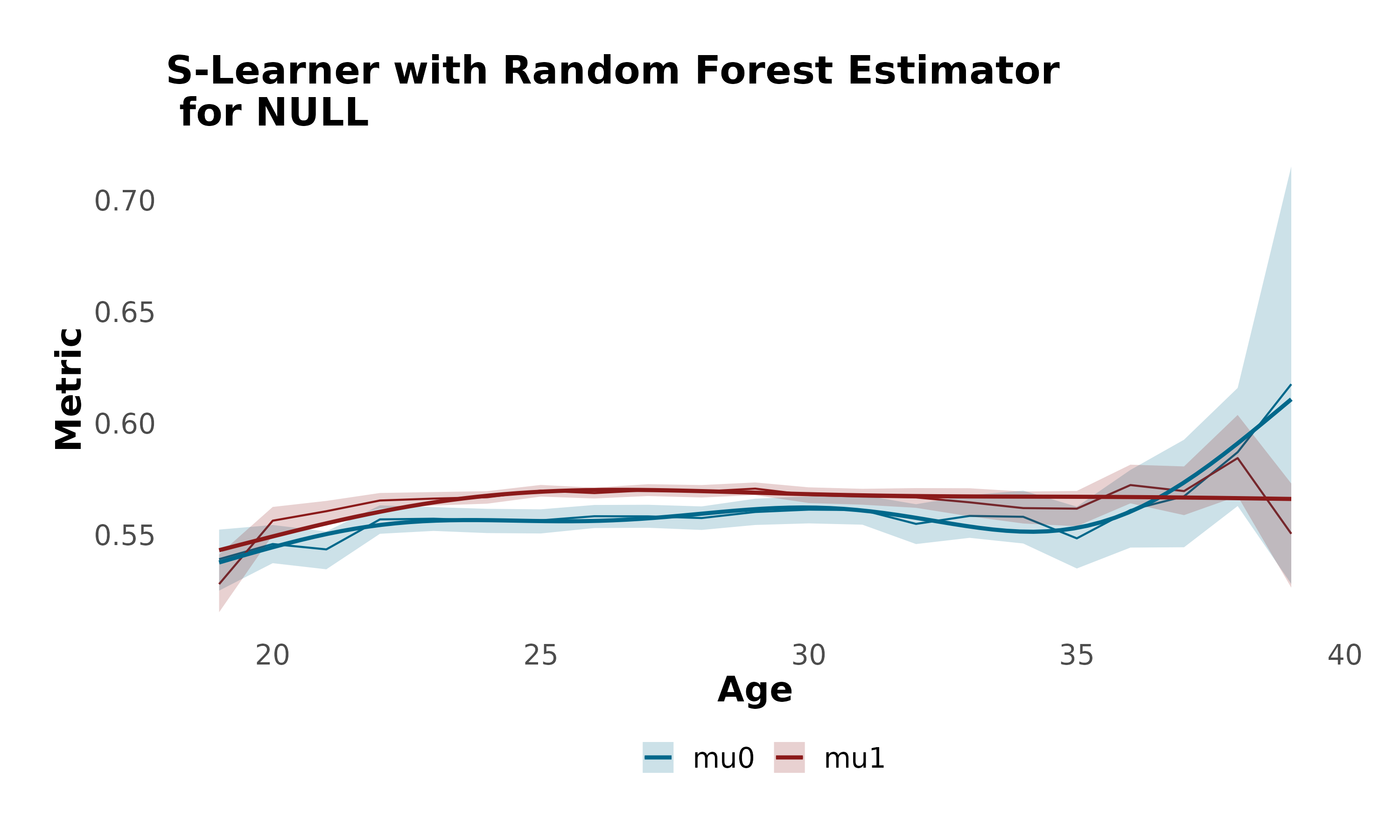}
    \caption{CEF using S-learner and RF for the field goal, three-point, and true shooting percentage}
    \label{fig:srf_shooting}
\end{figure}

\begin{figure}
    \centering
    \includegraphics[scale=0.2]{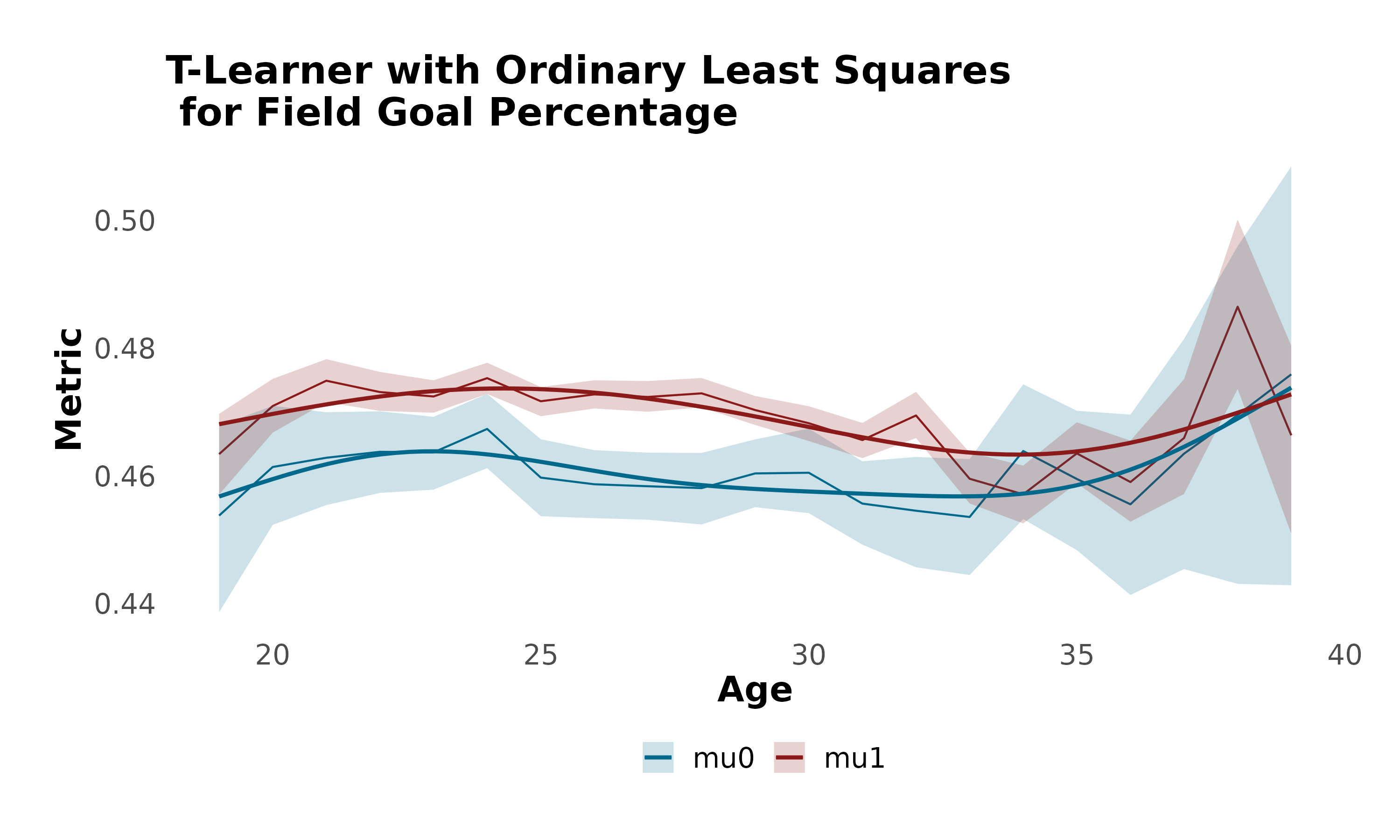} \includegraphics[scale=0.2]{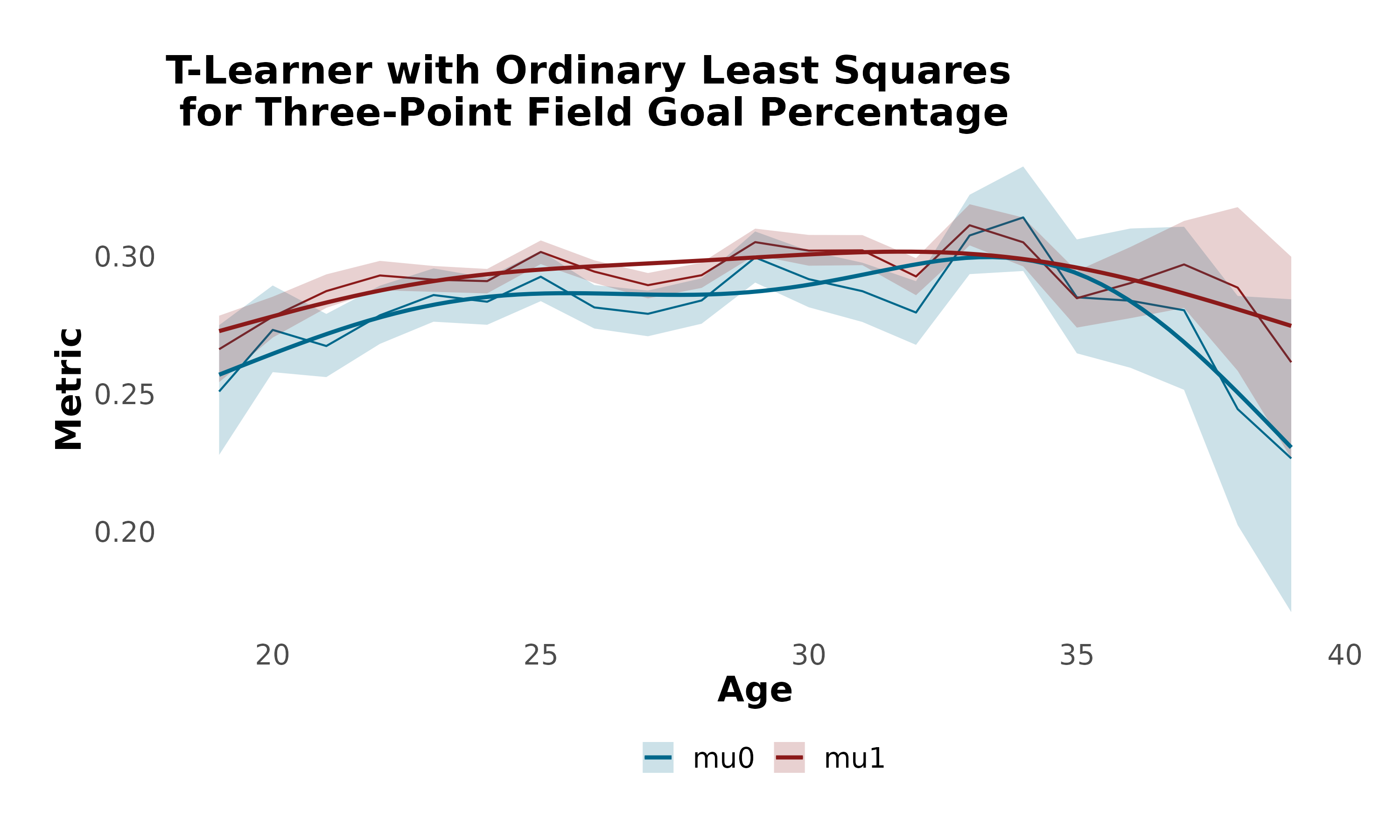}
    \includegraphics[scale=0.2]{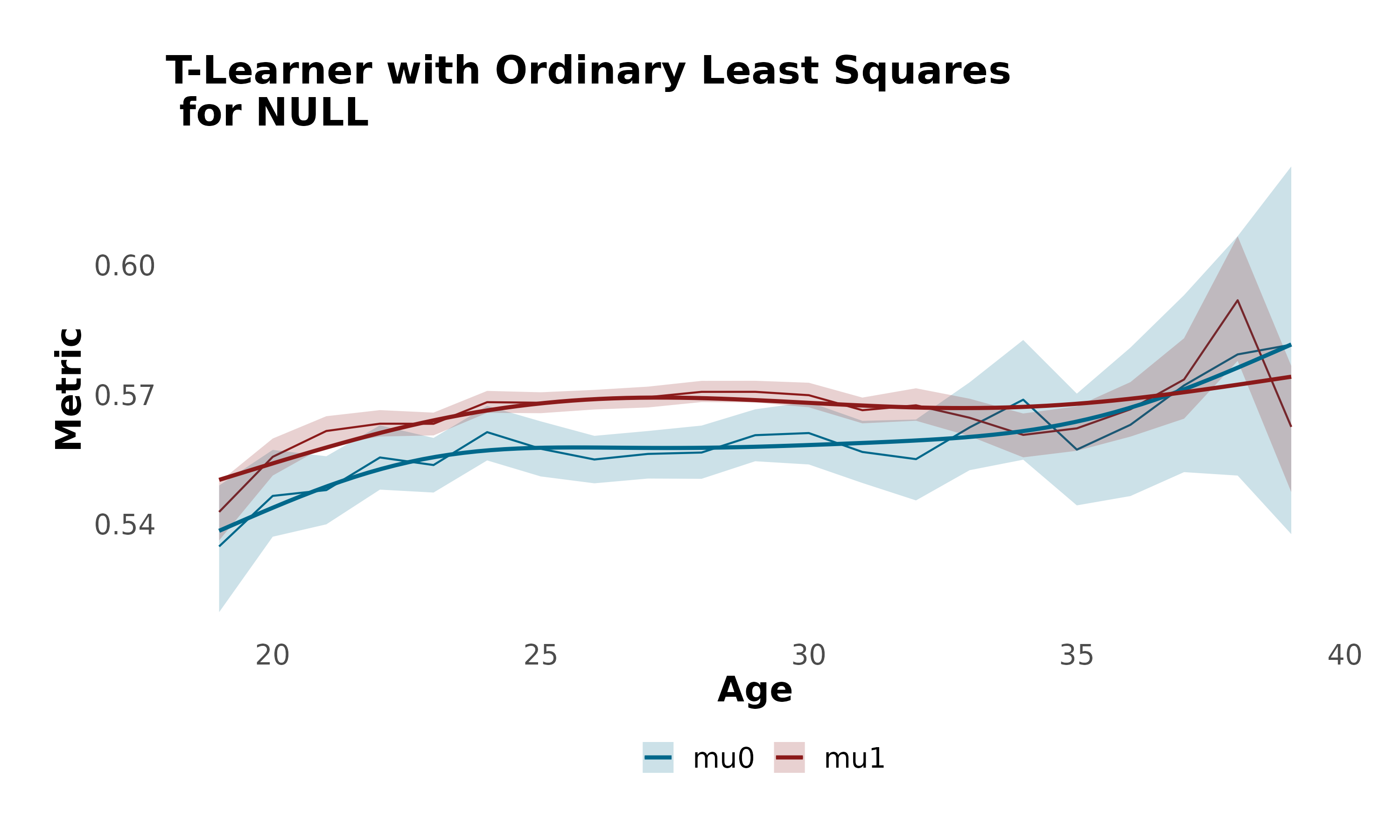}
    \caption{CEF using T-learner and OLS-Spline for the field goal, three-point, and true shooting percentage}
    \label{fig:tols_shooting}
\end{figure}

\begin{figure}
    \centering
    \includegraphics[scale=0.2]{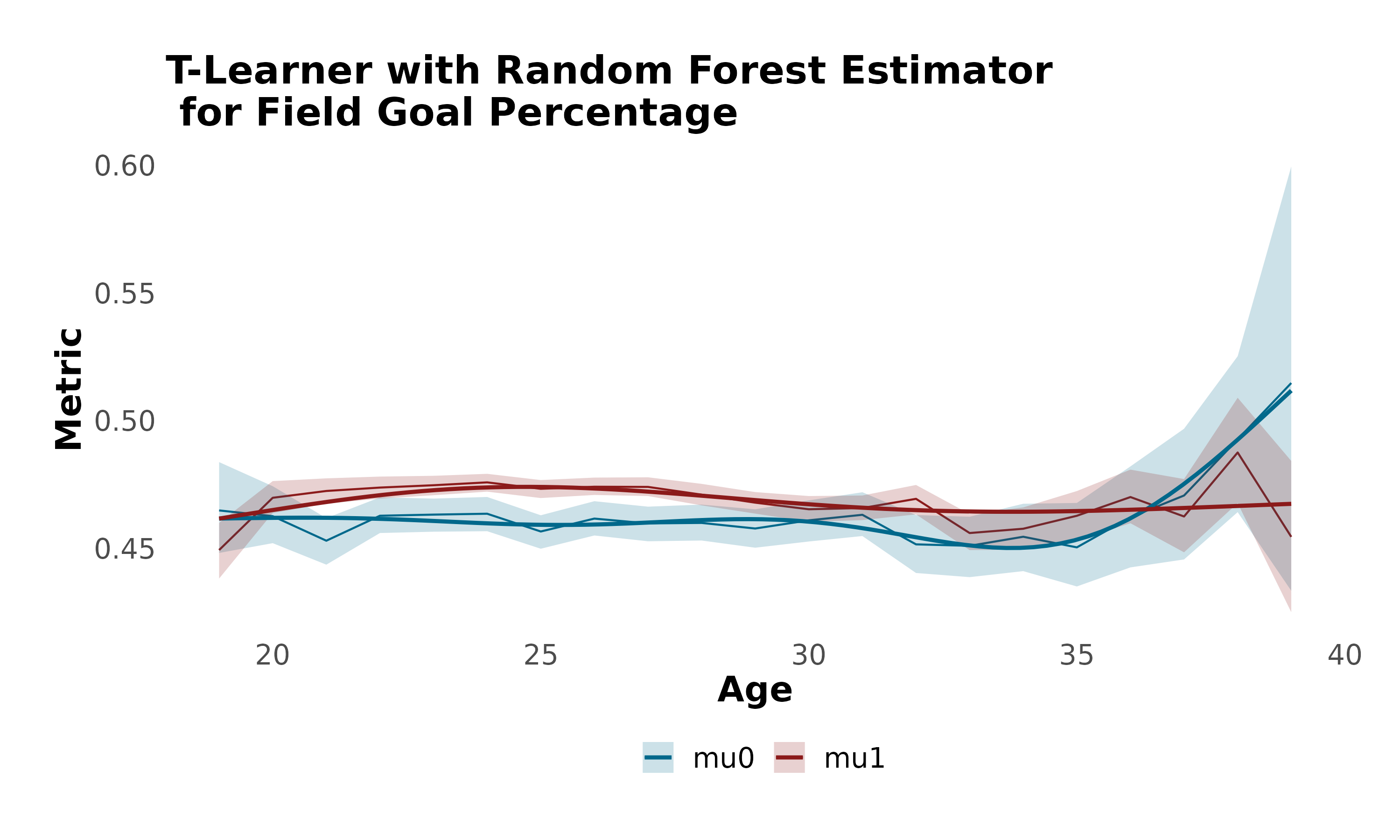} \includegraphics[scale=0.2]{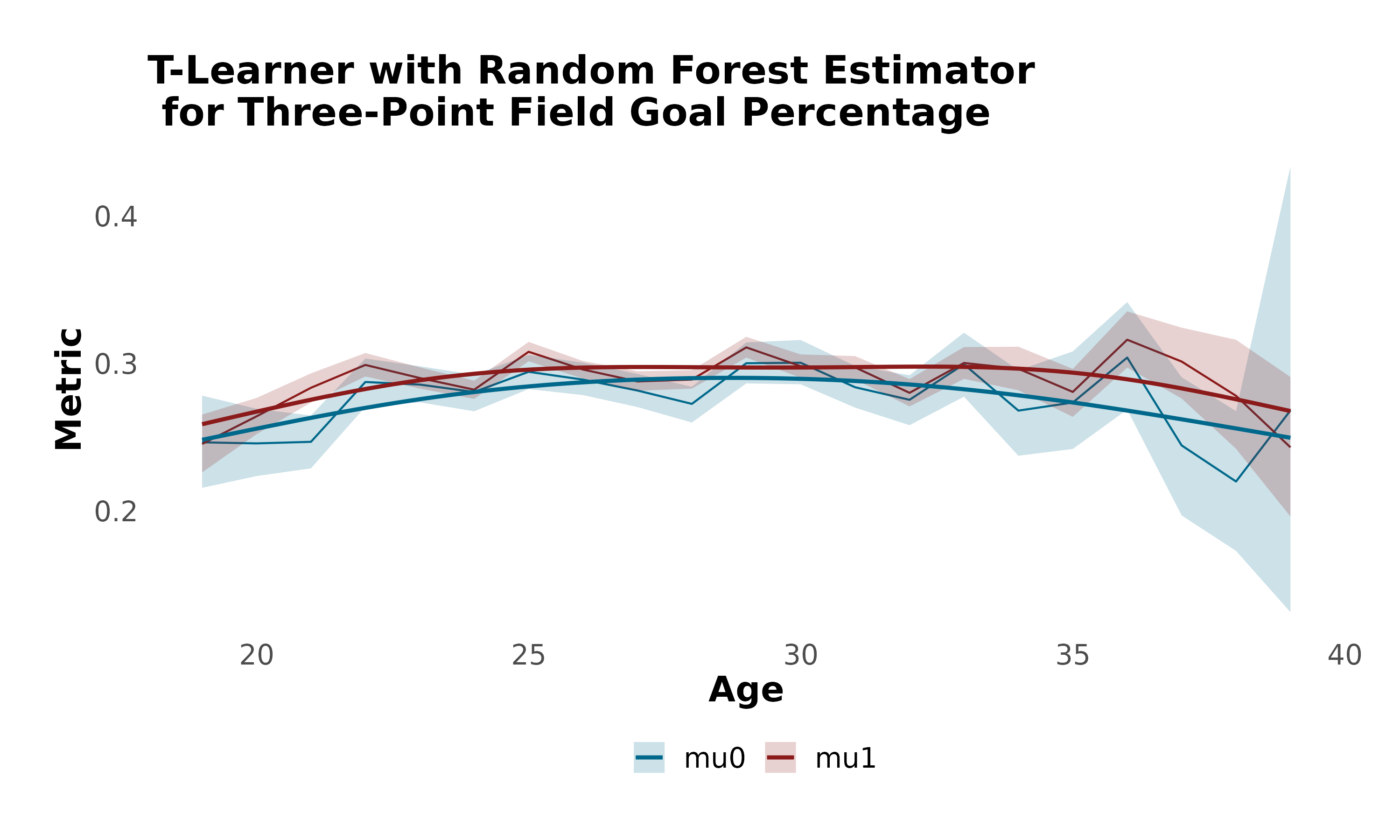}
    \includegraphics[scale=0.2]{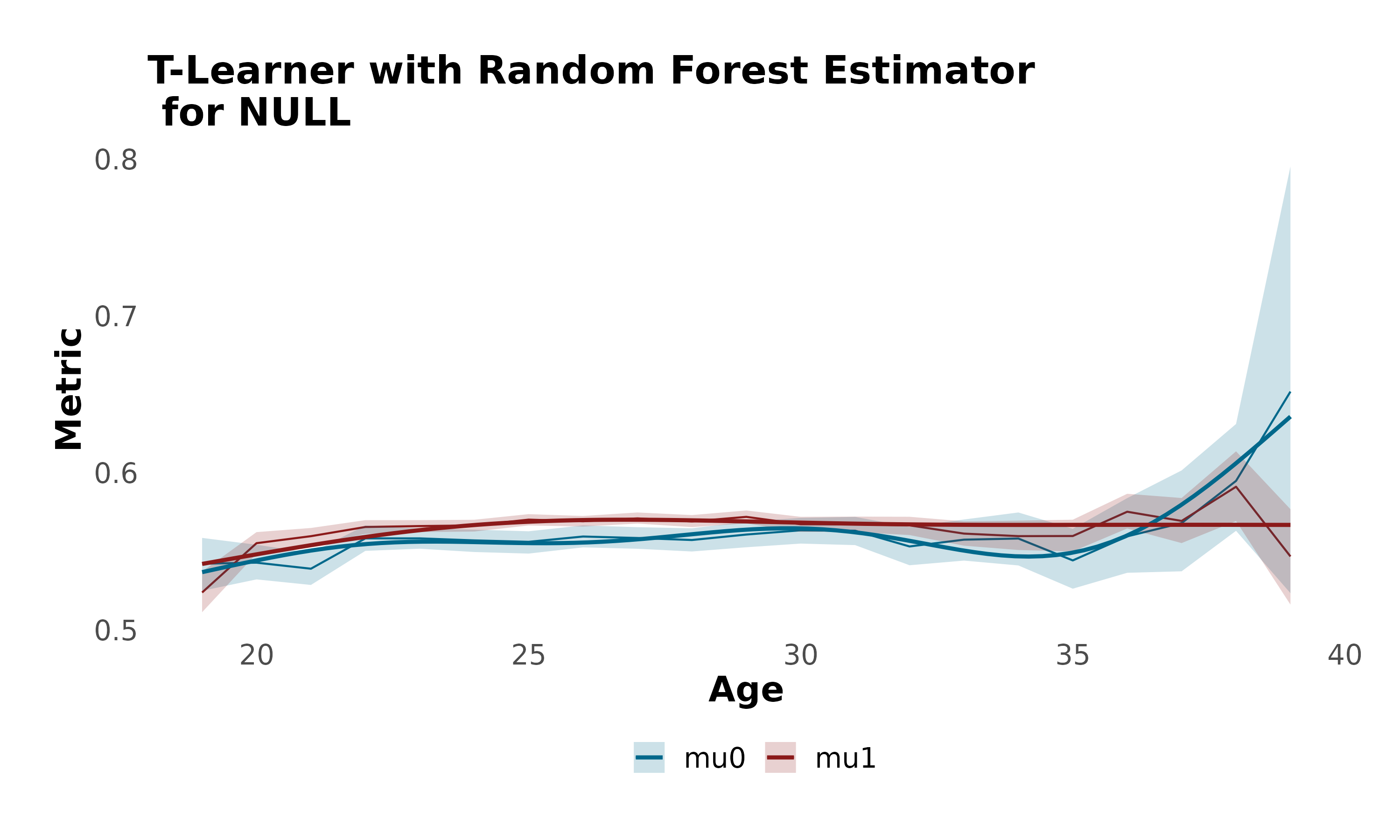}
    \caption{CEF using T-learner and RF for the field goal, three-point, and true shooting percentage}
    \label{fig:trf_shooting}
\end{figure}

\subsection{Age-Conditioned Treatment Effect (ACTE)}
In this section, we present the results obtained from the X-learner. Due to the algorithm's design, which 'crosses' inferential information between groups, it's not feasible to plot the two Conditional Expectation Functions (CEFs) for this model. Instead, we'll focus on each metric's Average Treatment Effect (ATE). It's important to note that in our specific covariate context, the X-learner's results using an OLS-spline approach would mirror those of the T-learner with an OLS-spline. Therefore, we will limit our discussion to the outcomes of employing a random forest with honesty. A significant benefit of this approach is that it incorporates information from both the treatment and control groups in the estimation of treatment effects.

In all the figures presented, various elements are strategically employed to illustrate the ACTE and its context. The finer line in each graph represents the ACTE itself, providing a direct visual representation of the treatment effect. Accompanying the ACTE line is its associated bandwidth, which delineates the 90\% bootstrap confidence interval. This interval is crucial for understanding the statistical significance of the observed effects. Additionally, a dashed horizontal line is plotted at zero. The relationship of the confidence interval with this dashed line is key: if the interval does not intersect the dashed line, it indicates that the treatment effect is statistically significant. Such visual demarcations in the figures are essential for a nuanced interpretation of the data, allowing for a more informed understanding of the significance and implications of the ACTE in various contexts.

In Figure \ref{fig:xrf_ratings}, the analysis upholds the main findings observed in the ACEFs. The net rating emerges as significantly impactful across various age groups, except older players where the limited sample size poses a challenge to conclusive analysis. A noteworthy aspect is the differential impact on the defensive and offensive ends; the ACTE in defense is notably greater in absolute value than in offense. This disparity suggests a more pronounced effect of rest on defensive capabilities. Consequently, we can infer that rest influences defensive performance more substantially than offensive output.

Upon analyzing Figure \ref{fig:xrf_box}, it is evident that the effects of rest on most box score statistics are minimal. However, notable exceptions can be observed in the domains of steals, blocks, and turnovers, particularly among younger athletes. This trend tends to be masked in the visualization due to the high variability present in older players' performance. For the older cohort, a clear adverse effect on scoring is observed, which could be attributed to older athletes possibly focusing more on offensive play during times of fatigue, as previously mentioned, or it may stem from a smaller sample size that renders the findings for this age group less definitive. Nevertheless, the reliability of these observations is limited by the overall small sample size, preventing a firm conclusion on the hypothesis.

In Figure \ref{fig:xrf_shooting}, our findings are in concordance with those observed by the S and T learners, indicating that the influence of rest is particularly pronounced in the areas of field goal and true shooting percentages. This trend can be attributed to the physically demanding nature of two-point plays, as discussed in previous sections. The agreement among various learning models strengthens the evidence that rest plays a crucial role in enhancing shooting efficiency, especially in situations that require higher physical effort. This thorough analysis highlights the complex impact of rest on different facets of player performance, providing critical insights for the development of effective player management strategies and rest schedules. To better illustrate the impact of rest, we focused on plotting data for ages between 20 to 38, as the effects were less discernible across all ages in the initial plot, allowing for a clearer observation of these effects.

\begin{figure}
    \centering
    \includegraphics[scale=0.5]{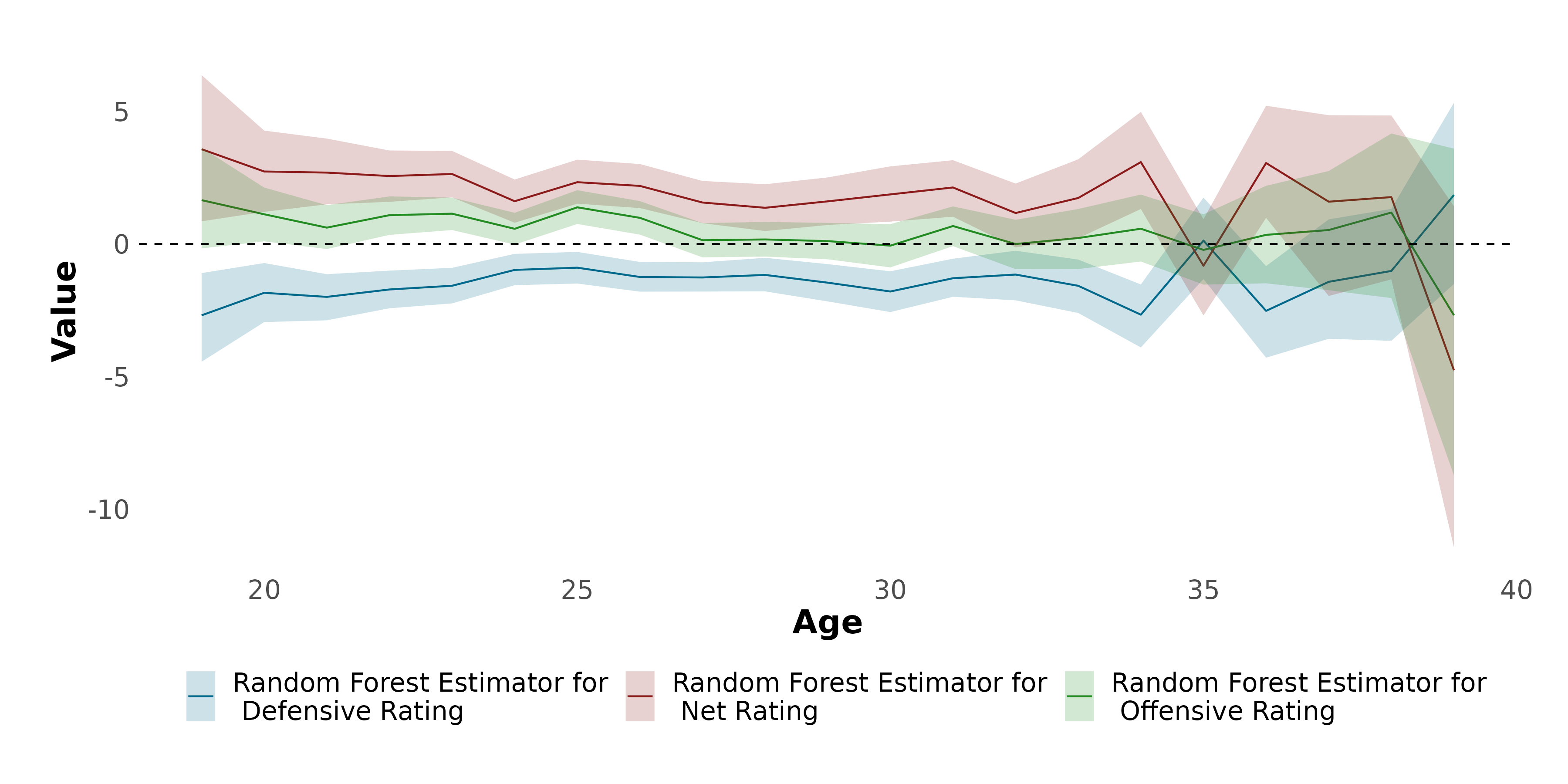} 
    \caption{ACTE using X-learner and RF for net, offensive, and defensive ratings}
    \label{fig:xrf_ratings}
\end{figure}

\begin{figure}
    \centering
    \includegraphics[scale=0.3]{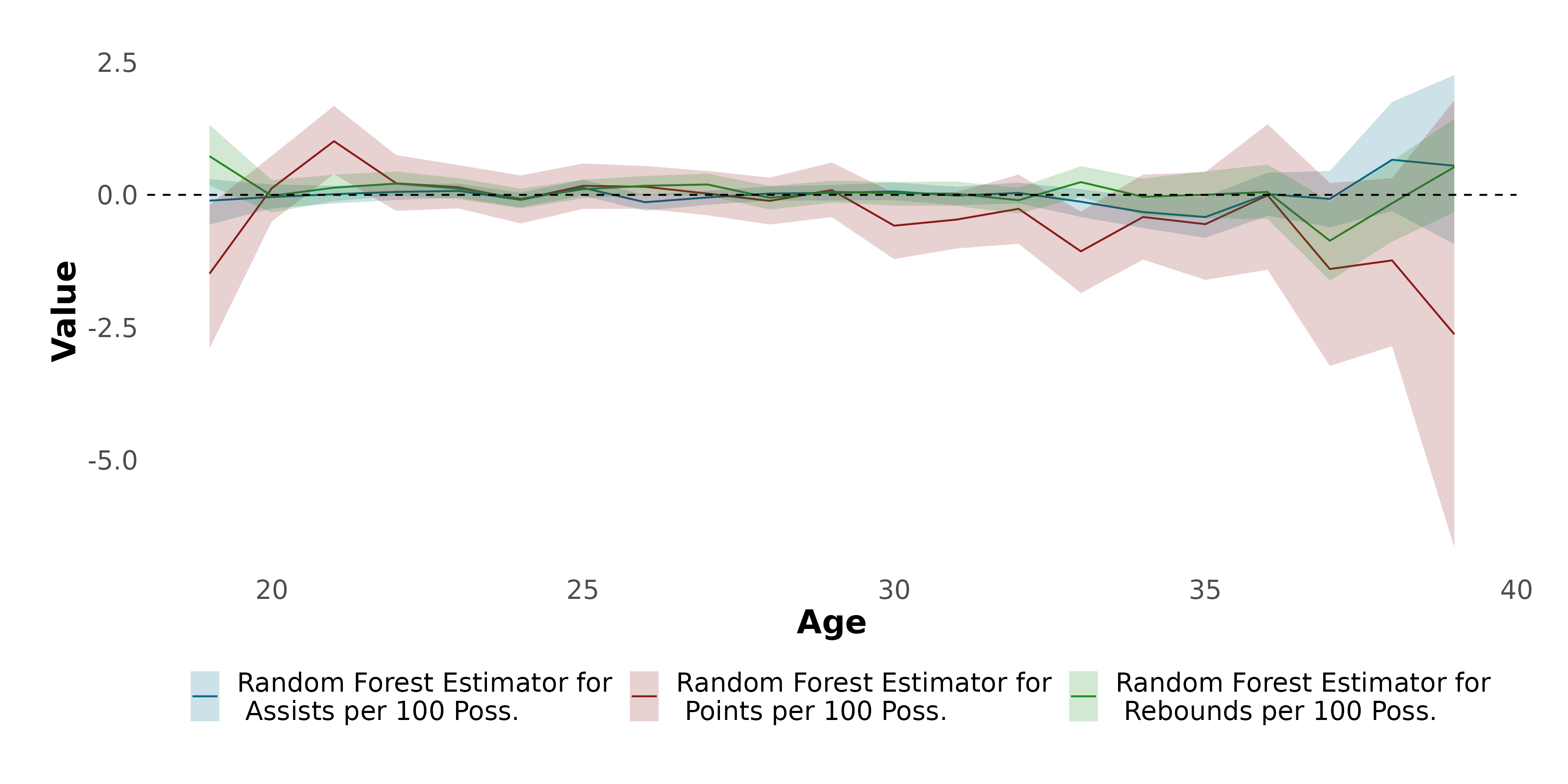} 
    \includegraphics[scale=0.3]{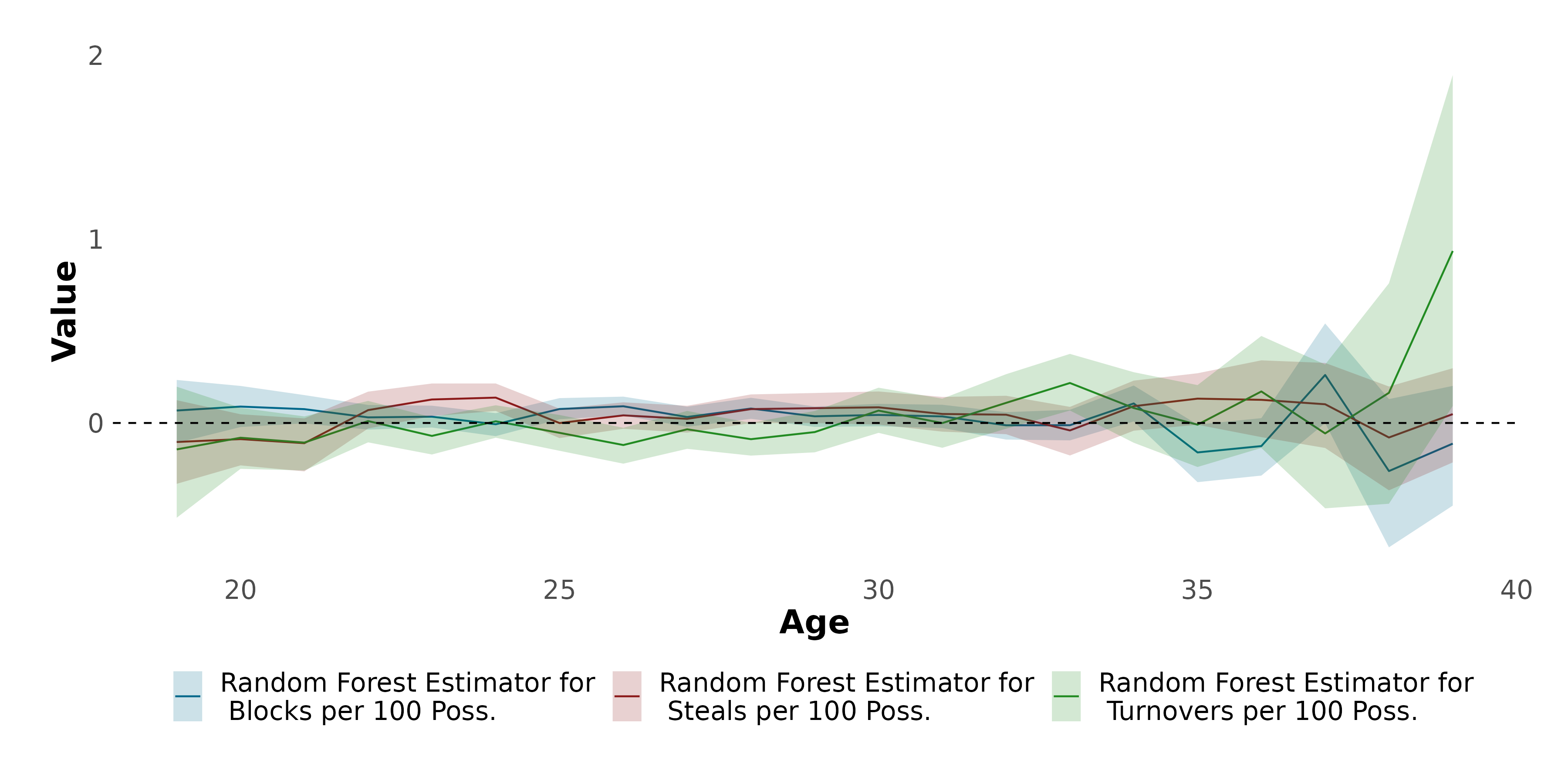} 
    \caption{ACTE using X-learner and RF for box score statistics per 100 possessions}
    \label{fig:xrf_box}
\end{figure}

\begin{figure}
    \centering
    \includegraphics[scale=0.3]{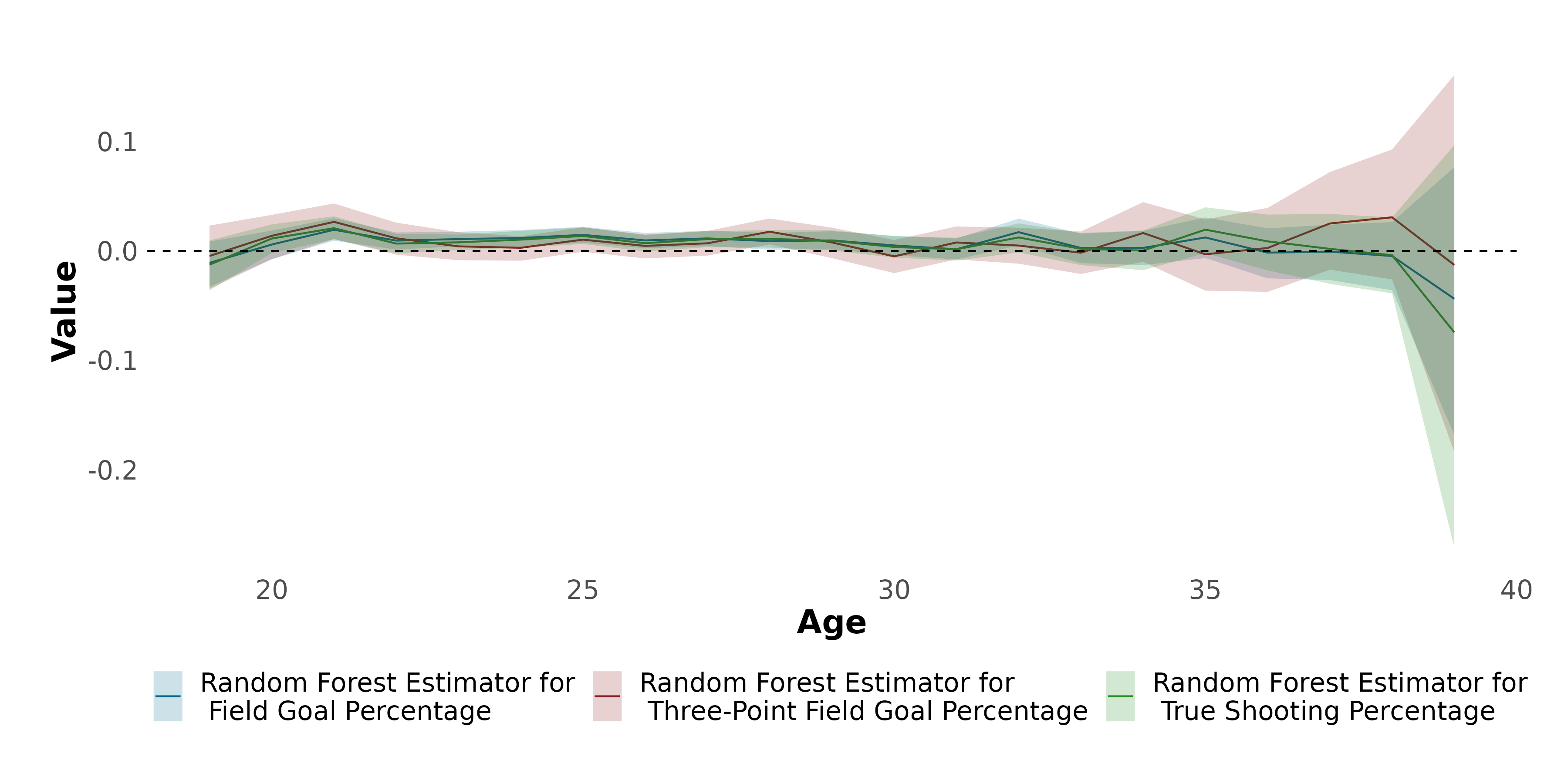} 
    \includegraphics[scale=0.3]{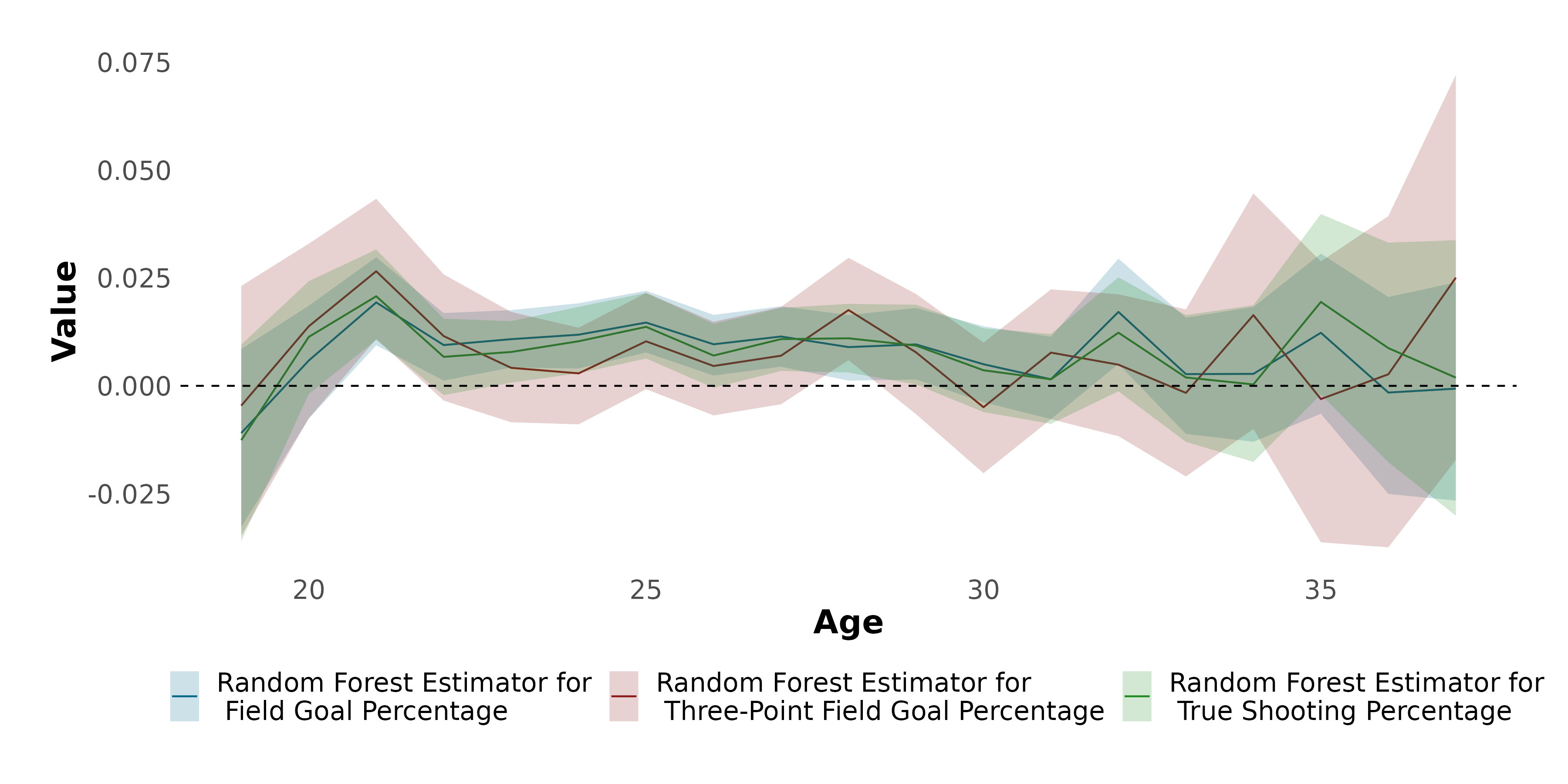} 
    \caption{ACTE using X-learner and RF for field goal, three-point, and true shooting percentage for age 20 to 40 (up) and 20 to 38 (bottom) }
    \label{fig:xrf_shooting}
\end{figure}

\section{Discussion}
\label{sec:5}
This study has analyzed the impact of load management strategies on player performance over time, with a particular focus on the NBA. By employing statistical methodologies, including meta-learners and various machine learning techniques, the research provides a nuanced understanding of how rest days influence player output, especially when considering different age groups. The findings suggest that strategic rest can enhance performance for several metrics, offering crucial insights for team management, coaches, and sports scientists aiming to optimize athlete longevity and effectiveness. This aligns with and expands upon existing literature that advocates for the careful management of player workload to prevent injury and maintain peak performance levels throughout a season.

Crucially, our work advances the age-curve literature by establishing a novel framework to estimate these curves using game-level data, as opposed to the traditional season-level approach. This innovation allows for a more granular analysis of how performance evolves with age and under different conditions, significantly enhancing our ability to understand and predict player trajectories. Additionally, by introducing a method to measure different treatment effects conditional on age, our study opens new avenues for testing multiple causal hypotheses. This represents a methodological leap forward, providing researchers, team analysts, and sports scientists with a powerful tool to explore a wide range of questions about athlete performance and management.

Finally, the research acknowledges its limitations, including the observational nature of the data and potential biases in performance measurement. These constraints underline the need for further studies, possibly incorporating controlled experiments or longitudinal studies across different sports, to validate and refine the findings. Future research could also explore the psychological impacts of load management, including athlete motivation and team dynamics, to provide a more comprehensive understanding of its effects. Such endeavors would not only enhance our grasp of optimal load management strategies but also contribute to developing healthier, more sustainable careers for professional athletes.

\clearpage
\appendix
\section{Proofs}\label{app:A}
The Stable Unit Treatment Value Assumption (SUTVA) consists of the following two sub-assumptions.

\begin{assumption}[No interference]\label{ass:interference}
The potential outcomes for any unit do not vary with the treatments assigned to other units. Formally, 
$$\forall \mathbf{w,w'}\in \textstyle\R^{N} \text{ such that } w_i=w'_i\Rightarrow Y_i(\mathbf{w})=Y_i(\mathbf{w'})$$
\end{assumption}
under this assumption we have $Y_i(\mathbf{w}) =Y_i(w_i)$, namely we index the potential outcome by the individual treatment $w_i$.

\begin{assumption}[Consistency]\label{ass:consistency}
There is only one version of each treatment $w$ and thus,
        $$Y_i^{obs}=Y_i(\mathbf{w})$$
\end{assumption}
where if we combine with assumtion \ref{ass:interference} we get $Y_i^{obs}=Y_i(w_i)$.
\subsection{Proof of Theorem \ref{them:IdentifiACTE}}
Under Assumptions \ref{ass:uncounfoundedness}, \ref{ass:interference}
    , and \ref{ass:consistency} we have
\begin{align*}
    &\textstyle\E_{\mathcal{X}}[\E[Y_i^{obs}|W_i=1, A_i=a, X_i=x]] - \textstyle\E_{\mathcal{X}}[\E[Y_i^{obs}|W_i=0, A_i=a, X_i=x]]\\
    =&\textstyle\E_{\mathcal{X}}[\E[Y_i(1)|W_i=1, A=a, X=x] - \E[Y_i(0)|W_i=0,A_i=a, X_i=x]]\\
    =&\textstyle\E_{\mathcal{X}}[\E[Y_i(1)|A_i=a, X_i=x] - \E[Y_i(0)|A_i=a, X_i=x]]\\
    =&\textstyle \E[Y_i(1)|A_i=a] - \E[Y_i(0)|A_i=a]\\
    =&g(a,1)-g(a,0)=\tau(a)\\
\end{align*}

where the second line follows from Assumption \ref{ass:interference} and \ref{ass:consistency}, and the third line follows from assumption \ref{ass:uncounfoundedness}, 

\section{Algorithms}\label{app:B}
In the subsequent text, we detail the algorithmic frameworks utilized in this study, in the form of pseudocode. The symbols $Y_0$ and $Y_1$ are employed to denote the observed outcomes for the control and treatment groups, respectively. For instance, $Y_{1i}$ represents the observed outcome for the $i$th subject within the treatment group. Similarly, $A_0, X_0$ and $A_1,X_1$ represent the attributes of the control and treated subjects, respectively, with $A^1_i,X^1_{i}$ being the age and attribute vector for the $i$th treated subject. We utilize the expression $M_k(Y \sim X)$ to denote a regression estimator that models regressing outcome $Y$ on $X$. This estimator may be any regression or machine learning technique. Following the format given by \cite{kunzel_metalearners_2019}.

\begin{algorithm}
\caption{S-learner}
\begin{algorithmic}
\Procedure{S-learner}{$A, X, Y, W$}
\State $\hat{\mu}_w = M_0(Y^{obs} \sim (A, X, W))$
\State $\hat{\tau}(a) = \textstyle \E_{\mathcal{X}}[\hat{\mu}_1(a,x) - \hat{\mu}_0(a,x)]$
\EndProcedure
\end{algorithmic}
\end{algorithm}

\begin{algorithm}
\caption{T-learner}
\begin{algorithmic}
\Procedure{T-learner}{$A, X, Y, W$}
\State $\hat{\mu}_0 = M_0(Y^0 \sim (A^0, X^0))$
\State $\hat{\mu}_1 = M_1(Y^1 \sim (A^1, X^1))$
\State $\hat{\tau}(a) = \textstyle \E_{\mathcal{X}}[\hat{\mu}_1(a,x) - \hat{\mu}_0(a,x)]$
\EndProcedure
\end{algorithmic}
\end{algorithm}

\begin{algorithm}
\caption{X-learner}
\begin{algorithmic}[1]
\Procedure{X-learner}{$A, X, Y, W, g$}
\State $\hat{\mu}_0 = M_1(Y^0 \sim (A^0,X^0))$ 
\State $\hat{\mu}_1 = M_2(Y^1 \sim (A^1,X^1))$
\State $D_i^1 = Y_i^1 - \hat{\mu}_0(A_i^1, X_i^1)$
\State $D_i^0 = \hat{\mu}_1(A_i^0, X_i^0) - Y_i^0$
\State $\hat{\tau}_1 = M_3(\bar{D}^1 \sim (A^1,X^1))$ 
\State $\hat{\tau}_0 = M_4(\bar{D}^0 \sim (A^0,X^0))$
\State $\hat{\tau}(a) = \textstyle \E_{\mathcal{X}}[g(a)\hat{\tau}_0(a,x) + (1 - g(a))\hat{\tau}_1(a,x)]$
\EndProcedure
\end{algorithmic}
\textit{The function $g(x)$, whose values range from 0 to 1, serves as a weighting mechanism selected to reduce the variance in the estimator $\hat{\tau}(x)$.}
\end{algorithm}

\begin{algorithm}[H]
\DontPrintSemicolon
\SetAlgoLined
\SetKwInput{KwData}{Input}
\SetKwInput{KwResult}{Output}
\KwData{$a$:age column of the training data , $x$: features of the training data, $w$: treatment assignments of the training data, $y$: observed outcomes of the training data, $p$: point of interest, $N$: number of observation}
\KwResult{Bootstrap Confidence Interval}
\BlankLine
S=\{1,...,N\}\;
\For{$b \in \{1, \dots, B\}$}{
    $s_b^* \gets \text{sample}(S, \text{replace} = \text{True}, \text{size} = N)$\;
    $x_b^* \gets x[s_b^*]$\;
    $w_b^* \gets w[s_b^*]$\;
    $y_b^* \gets y[s_b^*]$\;
    $\hat{\tau}_b^*(p) \gets \text{learner}(x_b^*, w_b^*, y_b^*)(p)$\;
}
\Return{($Q_{\alpha/2}(\{\hat{\tau}_b^*(p)\}_{b=1}^B), Q_{1-\alpha/2}(\{\hat{\tau}_b^*(p)\}_{b=1}^B)$)}\;
\caption{Procedure to compute bootstrap confidence intervals}
\end{algorithm}

\clearpage
\bibliographystyle{plainnat}  

\bibliography{ref}

\end{document}